\newtheorem{thm}{Theorem}
\newtheorem{prop}{Proposition}
\newtheorem{lem}{Lemma}
\newtheorem{cor}{Corollary}
\newtheorem{exam}{Example}
\newtheorem{rem}{Remark}
\def\0{{\mathbf 0}}
\newcommand{\F}{\mathbb{F}}
\begin{document}

\sloppy

\title{Hulls of special typed linear codes and constructions of new EAQECCs}
\author
{ 
Lin Sok, email: soklin\_heng@yahoo.com
}

\date{}



\maketitle


\begin{abstract}
In this paper, we study Euclidean and Hermitian hulls of generalized Reed-Solomon codes and twisted generalized Reed-Solomon codes, as well as the Hermitian hulls of Roth-Lempel typed codes. 
We present explicit constructions of MDS and AMDS linear codes for which their hull dimensions are well determined. As an application, we provide several classes of entanglement-assisted quantum error correcting codes with new parameters.

\end{abstract}
\begin{IEEEkeywords}
Hulls, MDS codes, almost MDS codes, GRS codes, TGRS codes, Roth-Lempel typed codes, self-orthogonal codes, entanglement-assisted quantum error-correcting codes
\end{IEEEkeywords}

\section{Introduction}

MDS codes are an interesting class of codes due to their capability of the largest error correction. The well-known class of MDS codes is that of Reed-Solomon (RS) codes and generalized Reed-Solomon (GRS) codes.
Twisted Reed-Solomon (TRS) codes and twisted generalized Reed-Solomon (TGRS) codes, which have been proven to be not equivalent to RS codes or GRS codes, have recently been introduced by Beelen {\em et al.} \cite{BeePuRos17} and have an application in code based crypto-systems, for instance, the well-known McEliece's crypto-systems, due to their decoding efficiency. Beelen {\em et al.} \cite{BeePuRos18} proposed a decoding algorithm for such codes.
The authors \cite{HuangYueNiuLi} studied the Euclidean duals of TGRS codes and presented some implicit constructions of TGRS codes that are Euclidean self-dual.

The hull of a linear code $C$ is the intersection of $C$ with its dual $C^\perp$, where the dual is often defined with respect to  Euclidean inner product. The hull of a linear code $C$ was first studied by Assmus {\em et al.} \cite{AssKey} and was used to classify finite projective planes. Application of linear codes with small hull dimensions can be found in \cite{Leon82,Leon91,Sendrier00} for computational complexity and in \cite{CarGui} for increasing a security level against side 
channel attacks and fault injection attacks.

Recently, the hull dimension of a linear code has been applied to compute some parameters of an entanglement-assisted quantum error-correcting code (EAQECC) when such a quantum code is constructed from a classical linear code, using the methods proposed by Wilde {\em et al. }\cite{WilBru}. 
For the works on Euclidean hulls, the reader is referred to \cite{CarMesTanQiPel18,CarLiMes,LiZeng,QCM,Sok1D,Sok1D2,GGJT18,LCC}.
In \cite{GGJT18}, Guenda \emph{et al.} studied the so-called $\ell$-intersection pair of linear codes, and they completely determined the $q$-ary MDS EAQECCs of length $n \leq q+1$ for all possible parameters. 
It is quite difficult to use classical linear codes to construct new $q$-ary MDS EAQECCs with the code length $n > q+1$. However, by considering Hermitian hulls of MDS linear codes, the authors \cite{FangFuLiZhu,PerPel} could construct $q$-ary MDS EAQECCs for such lengths. It is has been known in \cite{FangFuLiZhu} that with respect to Hermitian inner product, it is very hard to construct linear codes with parameters $[n,k]_{q^2}$ for $k>\lfloor \frac{n+q-1}{q+1} \rfloor$ such that their Hermitian hull dimensions can be computed explicitly.

In this paper, we study Euclidean and Hermitian hulls of GRS codes and TGRS codes, as well as the Hermitian hulls of Roth-Lempel typed codes. We present explicit constructions of MDS and AMDS linear codes with arbitrary dimensional hulls for both Euclidean and Hermitian cases. Under some special condition, for instance $(n-1)|(q^2-1)$, we present a new method to construct, from Roth-Lempel typed codes, MDS $[n,k]$ and $[n+1,k]$ codes, as well as almost MDS $[n+2,k]$ codes over $\F_{q^2}$ with $k>\lfloor \frac{n+q-1}{q+1} \rfloor$ such that their Hermitian hull dimensions can be computed.
The key idea to obtain MDS linear codes with such large dimensions is to consider Hermitian self-orthogonal codes with parameters $[n,k]$ over $\F_{q^2}$.
We lengthen the Hermitian self-orthogonal codes, and under the special condition, the Hermitian hull dimension of the lengthened codes can be explicitly computed.
As a consequence, we obtain MDS linear codes with large dimension $k>\lfloor \frac{n+q-1}{q+1} \rfloor$, and these codes give rise to MDS EAQECCs with new parameters that can not be constructed in \cite{FangFuLiZhu}. As an application, we provide several classes of entanglement-assisted quantum error correcting codes with new parameters. 

The paper is organized as follows. Section \ref{section:pre} collects the notations and definitions of TRS codes and TGRS codes and studies the hulls of TGRS codes with respect to Euclidean and Hermitian inner products.
Section \ref{section:construction} presents explicit constructions of TGRS codes with arbitrary hull dimensions for both inner products. Roth-Lempel typed codes are also introduced and some method to compute their Hermitian hull dimensions is provided. 
Section \ref{section:application} gives an application of the codes, constructed in the previous sections, to entanglement assisted quantum error-correcting codes.

\section{Preliminaries}\label{section:pre}
\subsection{Hulls of linear codes}
A linear code of length $n$, dimension $k$, and minimum distance $d$ over the finite field $\F_q$ is denoted as $[n,k,d]_q$ code. 
A code is called {\em Maximum Distance Separable} ({MDS}) if its minimum distance $d=n-k+1$ and is called {\em almost} MDS (AMDS) if it is minimum distance $d=n-k$. 

The {\em Euclidean} (resp. {\em Hermitian}) inner product of ${\bf{x}}=(x_1,
\dots, x_n)$ and ${\bf{y}}=(y_1, \dots, y_n)$ in $\F_{q}^n$ (resp. $\F_{q^2}^n$) is defined by
$$<{\bf{x}},{\bf{y}}>_E=\sum_{i=1}^n x_i y_i, (\text{resp. }<{\bf{x}},{\bf{y}}>_H=\sum_{i=1}^n x_i y_i^q).
$$ 
The Euclidean (resp. Hermitian) {\em dual} of $C$,
denoted by $C^{\perp_E}$ (resp. $C^{\perp_H}$), is the set of vectors orthogonal to every
codeword of $C$ under the Euclidean (resp. Hermitian) inner product. A linear code $C$ is called Euclidean (resp. Hermitian) self-orthogonal if $C\subseteq C^{\perp_E}$ (resp. $C\subseteq C^{\perp_H}$). 
The Euclidean (resp. Hermitian) hull of a linear code $C$ is 
$$Hull_E(C):=C\cap C^{\perp_E}(\text{resp. }Hull_H(C):=C\cap C^{\perp_H}).$$ 
For a linear code $C\subseteq \F_q^n$ and ${\bf v}=(v_1,\hdots,v_n)\in (\F_q^*)^n$, we define

$$
{\bf v}\cdot C:=\{(v_1c_1,\hdots,v_nc_n)| {\bf c}=(c_1,\hdots,c_n)\in C\}.
$$
It is easy to see that ${\bf v}\cdot C$ is a linear code if and only if $C$ is a linear code. Moreover, both codes have the same dimension, minimum Hamming distance, and weight distribution.

\subsection{TRS and TGRS codes}

In this subsection, we review some notations and basic knowledge on twisted Reed-Solomon (TRS) codes \cite{BeePuRos17}. In particular,  we present twisted generalized Reed-Solomon (TGRS) codes and exhibit their check matrices.

Let $\mathcal P_k[x]$ be a $k$-dimensional $\mathbb{F}_q$-linear subspace of  $ \mathbb{F}_q[x]$. Let $\alpha_1,\ldots,\alpha_n $ be distinct elements in $\mathbb{F}_q$ and  $\alpha=(\alpha_1,\ldots,\alpha_n)$. Let $v_1,\ldots, v_n$ be non-zero elements in $\Bbb F_q$ and ${\bf v}=(v_1,\ldots, v_n)$. We call $\alpha_1,\ldots,\alpha_n$ the evaluation points. Define the evaluation map of $\alpha$ on $\mathcal P_k [x]$ by
$$\Phi_{\alpha}:\mathcal P_k[x]\longrightarrow \mathbb{F}_q^n, f(x)\longmapsto \Phi_\alpha(f(x))=(f(\alpha_1),\ldots, f(\alpha_n)).$$
For $k$, $t$, and $h$ being positive integers with  $0\leq h < k \leq q$ and  $\eta \in \Bbb F_q^*=\mathbb{F}_q\backslash\{0\}$, we define the set of $(k,t,h,\eta)$-twisted polynomials as
$$\mathcal P_k[x;t,h,\eta]=\{\sum_{i=0}^{k-1}a_ix^i+\eta a_hx^{k-1+t}: a_h,a_i\in \mathbb{F}_q, 0\le i\le k-1\},$$
which is a $k$-dimensional $\Bbb F_q$-linear subspace.  We call $h$ the hook and $t$ the twist.

For $\alpha_1,\ldots,\alpha_n $ being distinct elements in $\Bbb F_q$,  $\alpha=(\alpha_1,\ldots,\alpha_n)$, and $v_1,\ldots, v_n$ being non-zero elements in $\Bbb F_q$ and ${\bf v}=(v_1,\ldots, v_n)$, we define 
 the TRS code of length $n$ and dimension $k$ as
$$TRS_{k}(\alpha;t,h,\eta)=\Phi_{\alpha}(\mathcal P_k[x;t,h,\eta]),$$
and
the TGRS code of length $n$ and dimension $k$ as
$$TGRS_{k}(\alpha,{\bf v};t,h,\eta)={\bf v}\cdot \Phi_{\alpha}(\mathcal P_k[x;t,h,\eta]).$$

 In the rest of the paper, we assume that $t=1$ and $h=k-1$, and for $\alpha=(\alpha_1,\ldots,\alpha_n)$, we denote
 \begin{equation}
 s(\alpha)=\alpha_1+\hdots+\alpha_n\text{ and }  u_i=\prod\limits_{j=1,j\neq i}^n(\alpha_i-\alpha_j)^{-1}, 1\le i\le n.
 \end{equation}

 Let  $G_k$ be a  generator matrix  of  $TGRS_{k}(\alpha,{\bf v};1,k-1,\eta)$. Then

\begin{eqnarray}\label{eq:generator-matrix}
  G_k &=&\left(
  \begin{array}{cccc}
    v_1 & \ldots & v_n \\
    v_1\alpha_1 & \ldots & v_n\alpha_n \\
     \vdots &\vdots  & \vdots \\
    v_1(\alpha_1^{k-1}+\eta \alpha_1^k) & \ldots & v_n(\alpha_n^{k-1}+\eta \alpha_n^k) 
  \end{array}
\right).
\end{eqnarray}

 The parity check matrices $H_{n-k}$ of the $TGRS_{k}(\alpha,{\bf v};t,h,\eta)$, with generator matrix (\ref{eq:generator-matrix}), are determined by \cite{HuangYueNiuLi} and 
are presented on the top of the next page.

\begin{table*}[th]
\begin{enumerate}
\item  for $ s(\alpha)\ne 0$ and $\eta \neq - s(\alpha)^{-1}$,
{
\begin{eqnarray}\label{parity-check:1}
  H_{n-k} &=& \left(
  \begin{array}{cccc}
   \frac{ u_1}{v_1}  & \ldots & \frac{u_n}{v_n} \\
    \frac{u_1}{v_1}\alpha_1  & \ldots & \frac{u_n}{v_n}\alpha_n \\
    \vdots  & \vdots & \vdots \\
    \frac{u_1}{v_1}\alpha_1^{n-k-2}  & \ldots & \frac{u_u}{v_n}\alpha_1^{n-k-2} \\
   \frac{ u_1}{v_1}(\alpha_1^{n-k-1}- \frac{\eta }{1+ s(\alpha)\eta}\alpha_1^{n-k}) & \ldots &\frac{ u_n}{v_n}(\alpha_n^{n-k-1}-\frac{\eta }{1+ s(\alpha)\eta}\alpha_n^{n-k}) \\
  \end{array}
\right);
\end{eqnarray}
}
\item for $ s(\alpha)=0$ and $\eta \neq 0$,
{
\begin{eqnarray}\label{parity-check:2}
  H_{n-k} &=& \left(
  \begin{array}{cccc}
   \frac{ u_1}{v_1} & \ldots & \frac{u_n}{v_n} \\
    \frac{u_1}{v_1}\alpha_1 & \ldots & \frac{u_n}{v_n}\alpha_n \\
    \vdots & \vdots   & \vdots \\
    \frac{u_1}{v_1}\alpha_1^{n-k-2} & \ldots & \frac{u_u}{v_n}\alpha_1^{n-k-2} \\
   \frac{ u_1}{v_1}(\alpha_1^{n-k-1}- \eta \alpha_1^{n-k}) & \ldots &\frac{ u_n}{v_n}(\alpha_n^{n-k-1}-\eta \alpha_n^{n-k}) \\
  \end{array}
\right);
\end{eqnarray}
}
 \item  for $ s(\alpha)\neq 0$ and $\eta = - s(\alpha)^{-1}$,
 {
\begin{eqnarray}\label{parity-check:3}
  H_{n-k} &=& \left(
  \begin{array}{cccc}
   \frac{ u_1}{v_1} & \ldots & \frac{u_n}{v_n} \\
    \frac{u_1}{v_1}\alpha_1 & \ldots & \frac{u_n}{v_n}\alpha_n \\
    \vdots & \vdots &   \vdots \\
    \frac{u_1}{v_1}\alpha_1^{n-k-2} & \ldots & \frac{u_u}{v_n}\alpha_1^{n-k-2} \\
   \frac{ u_1}{v_1}\alpha_1^{n-k} & \ldots &\frac{ u_n}{v_n}\alpha_n^{n-k} \\
  \end{array}
\right).
\end{eqnarray}
}
\end{enumerate}

\end{table*}

From the exhibition of the parity check matrices (\ref{parity-check:1})-(\ref{parity-check:3}), we deduce the following result.
\begin{prop}\label{prop:E-dual} The Euclidean duals of $TGRS_{k}(\alpha,{\bf v};t,h,\eta)$ are given as follows:
\begin{enumerate}
\item  for $ s(\alpha)\ne 0$ and $\eta \neq - s(\alpha)^{-1}$, 
$$\begin{array}{ll}
TGRS_{k}(\alpha,{\bf v};1,k-1,\eta)^{\perp_E}&\\
~~~~~=TGRS_{n-k}(\alpha,\frac{\bf u}{\bf v};1,n-k-1,\frac{-\eta}{1+ s(\alpha)\eta})
\end{array}$$
\item for $ s(\alpha)= 0$ and $\eta \neq 0$, 
$$
\begin{array}{ll}
TGRS_{k}(\alpha,{\bf v};1,k-1,\eta)^{\perp_E}&\\
~~~~~=TGRS_{n-k}(\alpha,\frac{\bf u}{\bf v};1,n-k-1,-\eta)&
\end{array}
$$
\end{enumerate}
\end{prop}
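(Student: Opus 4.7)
The plan is to read the Euclidean dual directly off the explicit parity check matrices (\ref{parity-check:1}) and (\ref{parity-check:2}), since a parity check matrix of a linear code is by definition a generator matrix of its Euclidean dual. The only substantive task is to recognize the rows of $H_{n-k}$ as the image, under the evaluation map scaled by $\u/\v$, of a basis of twisted polynomials with the right parameters.

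For case~1, I would pull the scalar $u_i/v_i$ out of the $i$-th column of (\ref{parity-check:1}). The resulting (unscaled) rows are the evaluations at $\alpha$ of the monomials $1,x,\ldots,x^{n-k-2}$ together with the twisted monomial $x^{n-k-1}+\eta' x^{n-k}$, where $\eta'=-\eta/(1+s(\alpha)\eta)$. These $n-k$ polynomials form a basis of $\mathcal{P}_{n-k}[x;1,n-k-1,\eta']$, so the row span of $H_{n-k}$ is precisely $\frac{\u}{\v}\cdot\Phi_\alpha(\mathcal{P}_{n-k}[x;1,n-k-1,\eta']) = TGRS_{n-k}(\alpha,\u/\v;1,n-k-1,\eta')$. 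Case~2 is entirely analogous, with $\eta'=-\eta$ read off directly from (\ref{parity-check:2}).

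The only point requiring attention is that $\eta'$ lies in $\F_q^*$, so that the dual genuinely belongs to the TGRS family and not merely to the GRS one. In case~1, the hypothesis $\eta\ne -s(\alpha)^{-1}$ guarantees $1+s(\alpha)\eta\ne 0$, so $\eta'$ is well defined, and $\eta'\ne 0$ since $\eta\in\F_q^*$; in case~2, $\eta'=-\eta\in\F_q^*$ directly. Linear independence of the $n-k$ rows is automatic because $H_{n-k}$, as a parity check matrix of a code of dimension $k$, has full row rank $n-k$. Thus no genuine obstacle arises; the proposition is in essence a translation of the parity check formulas of \cite{HuangYueNiuLi} into the TGRS language, the one nontrivial bookkeeping being the identification of the correct twist $\eta'$ in each case.
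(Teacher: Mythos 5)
Your proposal is correct and matches the paper's approach: the paper likewise offers no separate argument, simply deducing the proposition from the displayed parity check matrices (\ref{parity-check:1})--(\ref{parity-check:2}) of \cite{HuangYueNiuLi}, which is exactly the row-reading you carry out. Your extra check that the new twist $\eta'$ lies in $\F_q^*$ is a sensible piece of bookkeeping the paper leaves implicit.
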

As a consequence of Proposition \ref{prop:E-dual}, we immediately derive the following lemma.
\begin{lem} \label{lem2.1}
Let ${\bf c}=(v_{1}f(\alpha_{1}), v_{2}f(\alpha_{2}), \ldots, v_{n}f(\alpha_{n}))$  a codeword of $TGRS_{k}(\alpha,{\bf v};1,k-1,\eta)$. Then, 
\begin{enumerate}
\item for $ s(\alpha)\not=0,\eta\not=- s(\alpha)^{-1}$, ${\bf c}$ is contained in $TGRS_{k}(\alpha,{\bf v};1,k-1,\eta)^{\perp_E}$ if and only if there exists a polynomial $g(x)\in \mathcal P_{n-k}[x;1,n-k-1,\frac{-\eta}{1+ s(\alpha)\eta}]$ such that
\begin{equation*}
  \begin{split}
      & (v_{1}^{2}f(\alpha_{1}), v_{2}^{2}f(\alpha_{2}), \ldots, v_{n}^{2}f(\alpha_{n})) \\
       & =(u_{1}g(\alpha_{1}), u_{2}g(\alpha_{2}),\ldots, u_{n}g(\alpha_{n})).
   \end{split}
\end{equation*}
\item for $ s(\alpha)=0,\eta\not=0$, ${\bf c}$ is contained in $TGRS_{k}(\alpha,{\bf v};1,k-1,\eta)^{\perp_E}$ if and only if there exists a polynomial $g(x)\in \mathcal P_{n-k}[x;1,n-k-1,-\eta]$ such that
\begin{equation*}
  \begin{split}
      & (v_{1}^{2}f(\alpha_{1}), v_{2}^{2}f(\alpha_{2}), \ldots, v_{n}^{2}f(\alpha_{n})) \\
       & =(u_{1}g(\alpha_{1}), u_{2}g(\alpha_{2}),\ldots, u_{n}g(\alpha_{n})).
   \end{split}
\end{equation*}
\end{enumerate}
\end{lem}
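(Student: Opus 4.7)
My plan is to derive Lemma \ref{lem2.1} directly from Proposition \ref{prop:E-dual} by simply unfolding the explicit description of the Euclidean dual. Fix a codeword ${\bf c}=(v_{1}f(\alpha_{1}),\dots,v_{n}f(\alpha_{n}))$ arising from some $f\in \mathcal P_{k}[x;1,k-1,\eta]$. The condition ${\bf c}\in TGRS_{k}(\alpha,{\bf v};1,k-1,\eta)^{\perp_{E}}$ is, by Proposition \ref{prop:E-dual}, equivalent to ${\bf c}$ being realised as an evaluation codeword of the dual code on the right-hand side of that proposition.

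In case (1), that dual is $TGRS_{n-k}(\alpha,\frac{{\bf u}}{{\bf v}};1,n-k-1,\frac{-\eta}{1+s(\alpha)\eta})$, whose arbitrary element has the form $(\frac{u_{i}}{v_{i}}g(\alpha_{i}))_{i=1}^{n}$ for some $g\in \mathcal P_{n-k}[x;1,n-k-1,\frac{-\eta}{1+s(\alpha)\eta}]$. Setting this equal to $(v_{i}f(\alpha_{i}))_{i=1}^{n}$ coordinate-wise and multiplying through by $v_{i}\neq 0$ yields exactly $v_{i}^{2}f(\alpha_{i})=u_{i}g(\alpha_{i})$ for every $i$; conversely, any such $g$ exhibits ${\bf c}$ as a dual codeword. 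Case (2) is obtained by the identical argument, the only change being that the twist $\frac{-\eta}{1+s(\alpha)\eta}$ collapses to $-\eta$ when $s(\alpha)=0$, which is precisely the form recorded in part (2) of Proposition \ref{prop:E-dual}.

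I do not anticipate any substantive obstacle. The proof is essentially a coordinate-wise rewriting, and the only auxiliary fact needed is that each $v_{i}$ is nonzero, which is built into the TGRS definition so that the division step is legitimate. The real value of the lemma lies not in the proof itself but in the symmetric shape $(v_{i}^{2}f(\alpha_{i}))_{i}=(u_{i}g(\alpha_{i}))_{i}$ that it delivers; I would flag this explicitly, since later hull computations can then promote this coordinate-wise equality of evaluation vectors to an equality of polynomials via Lagrange interpolation on the $n$ distinct evaluation points $\alpha_{i}$.
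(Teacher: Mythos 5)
Your proposal is correct and matches the paper exactly: the paper states that Lemma \ref{lem2.1} is ``immediately derived'' from Proposition \ref{prop:E-dual}, and the intended derivation is precisely your coordinate-wise unfolding of the dual code's codewords $\left(\frac{u_i}{v_i}g(\alpha_i)\right)_{i=1}^{n}$ followed by multiplication by $v_i\neq 0$ (the same manipulation the paper carries out explicitly in its proof of the Hermitian analogue, Lemma \ref{lem2.2}). No gaps.
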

The Hermitian case can also be obtained as follows.
\begin{lem}\label{lem2.2}
Let ${\bf c}=(v_{1}f(\alpha_{1}), v_{2}f(\alpha_{2}), \ldots, v_{n}f(\alpha_{n}))$  a codeword of $TGRS_{k}(\alpha,{\bf v};1,k-1,\eta)$. Then, 
\begin{enumerate}
\item for $ s(\alpha)\not=0,\eta\not=- s(\alpha)^{-1}$, ${\bf c}$ is contained in $TGRS_{k}(\alpha,{\bf v};1,k-1,\eta)^{\perp_H}$ if and only if there exists a polynomial $g(x)\in \mathcal P_{n-k}[x;1,n-k-1,\frac{-\eta}{1+ s(\alpha)\eta}]$ such that
\begin{equation*}
  \begin{split}
      & (v_{1}^{q+1}f(\alpha_{1}), v_{2}^{q+1}f(\alpha_{2}), \ldots, v_{n}^{q+1}f(\alpha_{n})) \\
       & =(u_{1}g(\alpha_{1}), u_{2}g(\alpha_{2}),\ldots, u_{n}g(\alpha_{n})).
   \end{split}
\end{equation*}
\item for $ s(\alpha)=0,\eta\not=0$, ${\bf c}$ is contained in $TGRS_{k}(\alpha,{\bf v};1,k-1,\eta)^{\perp_H}$ if and only if there exists a polynomial $g(x)\in \mathcal P_{n-k}[x;1,n-k-1,-\eta]$ such that
\begin{equation*}
  \begin{split}
      & (v_{1}^{q+1}f(\alpha_{1}), v_{2}^{q+1}f(\alpha_{2}), \ldots, v_{n}^{q+1}f(\alpha_{n})) \\
       & =(u_{1}g(\alpha_{1}), u_{2}g(\alpha_{2}),\ldots, u_{n}g(\alpha_{n})).
   \end{split}
\end{equation*}
\end{enumerate}
\end{lem}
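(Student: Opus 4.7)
The plan is to mirror the proof of Lemma \ref{lem2.1}, replacing the Euclidean dual relation by its Hermitian counterpart. The cornerstone is the identification $C^{\perp_H}=(C^{\perp_E})^{(q)}$, where $(\cdot)^{(q)}$ denotes coordinatewise raising to the $q$-th power. Indeed, $u\in C^{\perp_H}$ iff $\sum_i u_i c_i^q=0$ for every $c\in C$, equivalently $u\in(C^{(q)})^{\perp_E}$, and a dimension count gives $(C^{(q)})^{\perp_E}=(C^{\perp_E})^{(q)}$. Consequently, if $H_{n-k}$ is the parity-check matrix displayed in (\ref{parity-check:1})--(\ref{parity-check:2}), then its entrywise Frobenius image $H_{n-k}^{(q)}$ is a generator matrix for $TGRS_{k}(\alpha,\mathbf v;1,k-1,\eta)^{\perp_H}$.

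With this in hand I would write $\mathbf c=\sum_{j=0}^{n-k-1}\lambda_j\cdot\bigl(\text{$j$-th row of $H_{n-k}^{(q)}$}\bigr)$ componentwise. Under the standing hypothesis $\alpha_i,\eta\in\F_q$, the Frobenius fixes each $\alpha_i$, each $u_i=\prod_{j\ne i}(\alpha_i-\alpha_j)^{-1}$, $s(\alpha)$, and $\eta$, so the entries of $H_{n-k}^{(q)}$ collapse to $\frac{u_i}{v_i^q}\alpha_i^j$ for $j<n-k-1$ and, in case (1), to $\frac{u_i}{v_i^q}\bigl(\alpha_i^{n-k-1}-\frac{\eta}{1+s(\alpha)\eta}\alpha_i^{n-k}\bigr)$ for the last row, with the analogous simplification in case (2) producing the twist $-\eta$. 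Multiplying the resulting identity through by $v_i^q$ yields $v_i^{q+1}f(\alpha_i)=u_ig(\alpha_i)$ with $g(x)=\sum_{j=0}^{n-k-2}\lambda_jx^j+\lambda_{n-k-1}x^{n-k-1}-\frac{\eta\lambda_{n-k-1}}{1+s(\alpha)\eta}x^{n-k}$, which manifestly lies in $\mathcal P_{n-k}[x;1,n-k-1,-\eta/(1+s(\alpha)\eta)]$ in case (1), and in $\mathcal P_{n-k}[x;1,n-k-1,-\eta]$ in case (2). The converse is routine: from any admissible $g$, read off the $\lambda_j$ from its coefficients and verify that $\mathbf c$ is the corresponding combination of rows of $H_{n-k}^{(q)}$.

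The only delicate point, which I expect to be the main technical obstacle, is keeping careful track of the Frobenius powers introduced by passing from $H_{n-k}$ to $H_{n-k}^{(q)}$: it is precisely the collapse $\alpha_i^q=\alpha_i$, $u_i^q=u_i$, $s(\alpha)^q=s(\alpha)$, $\eta^q=\eta$ that makes the Hermitian characterization parallel the Euclidean one, the only formal difference being the replacement $v_i^2\leftrightarrow v_i^{q+1}$.
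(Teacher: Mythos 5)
Your overall strategy --- reducing the Hermitian dual to the Euclidean one via the coordinatewise Frobenius and then invoking Proposition~\ref{prop:E-dual} / the explicit parity-check matrices --- is exactly the route the paper takes (the paper phrases it as ``$\mathbf c\in C^{\perp_H}$ iff $\mathbf c^{(q)}\in C^{\perp_E}$'' rather than ``$H_{n-k}^{(q)}$ generates $C^{\perp_H}$'', but these are the same reduction). The identity $C^{\perp_H}=(C^{\perp_E})^{(q)}$ and its justification are fine.

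The genuine gap is the sentence ``Under the standing hypothesis $\alpha_i,\eta\in\F_q$, the Frobenius fixes each $\alpha_i$, each $u_i$, $s(\alpha)$, and $\eta$.'' There is no such hypothesis: for the Hermitian dual the code lives over $\F_{q^2}$, and in every application of this lemma in the paper (Theorem~\ref{thm:TGRS-hermitian}) the evaluation points range over subsets of $\F_{q^2}$ and $\eta$ is chosen with $\eta^q=-\eta$, so none of $\alpha_i$, $u_i$, $s(\alpha)$, $\eta$ need be Frobenius-fixed. Consequently the claimed collapse of $H_{n-k}^{(q)}$ fails, and your computation actually yields $v_i^{q+1}f(\alpha_i)=u_i^{q}\,\tilde g(\alpha_i^{q})$ rather than $v_i^{q+1}f(\alpha_i)=u_i g(\alpha_i)$. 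The correct way to finish (and what the paper does) is to apply the Frobenius to the codeword instead of to the dual's generator matrix: equating $\mathbf c^{(q)}=(v_i^{q}f^{q}(\alpha_i))_i$ with $(\tfrac{u_i}{v_i}g(\alpha_i))_i$ gives $v_i^{q+1}f^{q}(\alpha_i)=u_i g(\alpha_i)$ with $f^{q}$, not $f$ --- this is the form the paper actually uses later (see Eq.~(\ref{3})), and the $f$ appearing in the printed statement of the lemma is evidently a typo for $f^{q}$. In short: your argument only ``matches'' the printed display because of the unjustified Frobenius-fixing assumption; in the generality in which the lemma is applied, that step fails and the conclusion must carry the extra Frobenius on $f$.
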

\begin{proof} Assume that $ s(\alpha)\not=0,\eta\not=- s(\alpha)^{-1}$. Let ${\bf c}=(v_{1}f(\alpha_{1}), v_{2}f(\alpha_{2}), \ldots, v_{n}f(\alpha_{n}))$ be  a codeword of $TGRS_{k}(\alpha,{\bf v};1,k-1,\eta)$. Then,  ${\bf c}$ is contained in $TGRS_{k}(\alpha,{\bf v};1,k-1,\eta)^{\perp_H}$ if and only if ${\bf c}^q=(v_{1}^qf^q(\alpha_{1}), v_{2}^qf^q(\alpha_{2}), \ldots, v_{n}^qf^q(\alpha_{n}))$ is contained in $TGRS_{k}(\alpha,{\bf v};1,k-1,\eta)^{\perp_E}$, and from Proposition \ref{prop:E-dual}, ${\bf c}^q=(\frac{u_{1}}{v_1}g(\alpha_{1}), \frac{u_{2}}{v_2}g(\alpha_{2}),\ldots, \frac{u_{n}}{v_n}g(\alpha_{n}))$ for some $g(x)\in \mathcal P_{n-k}[x;1,n-k-1,-\eta]$. By equating ${\bf c}^q$ in both cases, the proof point 1) is completed. The proof of point 2) follows with the same reasoning.
\end{proof}

It is well known that both TRS and TGRS codes are either MDS or AMDS. In \cite{BeePuRos17}, the authors give necessary and sufficient conditions that TGRS are MDS and AMDS, and we rephrase the statement as follows:
\begin{lem}[\cite{BeePuRos17}] \label{lem:BeePuRon}
Let $\alpha_1,\ldots,\alpha_n$ be distinct elements in $\F_q$, $\alpha=(\alpha_1,\ldots,\alpha_n)$, 
${\bf v}=(v_1,\ldots, v_n)\in \left(\F_q^*\right)^n$, and $\eta\in \F_q^*.$ Denote
$$
S_k=\{\sum\limits_{i\in J}\alpha_i: J\subset \{1,\hdots, n\},\sharp J=k\}.
$$

Then,
\begin{enumerate}
\item the code $TGRS_{k}(\alpha,{\bf v};1,k-1,\eta)$ is MDS if and only if $\eta^{-1}\not=-s$ for any $s \in S_k$;
\item the code $TGRS_{k}(\alpha,{\bf v};1,k-1,\eta)$ is AMDS if and only if $\eta^{-1}=-s$ for some $s\in S_k$.
\end{enumerate}
\end{lem}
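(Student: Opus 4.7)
The plan is to reduce the question to computing the determinant of an arbitrary $k\times k$ submatrix of the generator matrix $G_k$ in~(\ref{eq:generator-matrix}) and to use the standard criterion that an $[n,k]$ code is MDS if and only if every $k$ columns of a generator matrix are linearly independent.

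Fix $J\subset\{1,\ldots,n\}$ with $|J|=k$, and let $M_J$ denote the $k\times k$ submatrix of $G_k$ obtained from the columns indexed by $J$. First I would pull the factor $v_j$ out of column $j$ for each $j\in J$; this contributes a nonzero scalar $\prod_{j\in J}v_j$ and leaves a matrix whose first $k-1$ rows are $(\alpha_j^i)_{0\le i\le k-2,\ j\in J}$ and whose last row is $(\alpha_j^{k-1}+\eta\alpha_j^k)_{j\in J}$. By multilinearity in the last row,
\[
\det(M_J) = \Bigl(\prod_{j\in J}v_j\Bigr)\bigl(V(\alpha_J) + \eta\, V'(\alpha_J)\bigr),
\]
where $V(\alpha_J)$ is the standard Vandermonde built from the rows of exponents $(0,1,\ldots,k-1)$ and $V'(\alpha_J)$ is the generalized Vandermonde built from the rows of exponents $(0,1,\ldots,k-2,k)$.

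The key step is to evaluate $V'(\alpha_J)$. Applying the Schur-polynomial identity for the partition $(1,0,\ldots,0)$ (equivalently, expanding along the last row and recognizing the resulting alternant), one gets
\[
V'(\alpha_J) = V(\alpha_J)\cdot\sum_{j\in J}\alpha_j = V(\alpha_J)\cdot s_J,
\]
with $s_J:=\sum_{j\in J}\alpha_j\in S_k$. This is the only nontrivial algebraic identity used in the proof, and it is the place where the condition $t=1$, $h=k-1$ is essential; I expect this compact rewriting of $V'(\alpha_J)$ to be the main point to verify carefully.

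Combining the previous two displays gives
\[
\det(M_J) = \Bigl(\prod_{j\in J}v_j\Bigr)\cdot V(\alpha_J)\cdot (1+\eta\, s_J).
\]
Since $v_j\ne 0$ and the $\alpha_j$ are distinct, $\det(M_J)\neq 0$ iff $1+\eta s_J\neq 0$, i.e.\ iff $\eta^{-1}\neq -s_J$. Thus the code is MDS iff $\eta^{-1}\neq -s$ for all $s\in S_k$, establishing (1). For (2), I would observe that every nonzero codeword comes from a polynomial of degree at most $k$ in $\mathcal P_k[x;1,k-1,\eta]$, hence has at most $k$ zeros among the $\alpha_i$, so the minimum distance is at least $n-k$; therefore the code is either MDS or AMDS. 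If it fails to be MDS, some $\det(M_J)=0$ by the above, which produces a nonzero codeword with zeros exactly on $J$ and hence weight $n-k$, showing the code is AMDS precisely when $\eta^{-1}=-s$ for some $s\in S_k$.
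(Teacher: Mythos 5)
The paper does not prove this lemma: it is quoted verbatim from \cite{BeePuRos17} (``we rephrase the statement as follows''), so there is no in-paper argument to compare against. Judged on its own, your proof is correct and is essentially the standard determinantal argument for this result. The reduction to $k\times k$ minors of $G_k$, the factorization $\det(M_J)=\bigl(\prod_{j\in J}v_j\bigr)\bigl(V(\alpha_J)+\eta V'(\alpha_J)\bigr)$ by multilinearity in the last row, and the identity $V'(\alpha_J)=V(\alpha_J)\sum_{j\in J}\alpha_j$ are all sound; the last identity is exactly the Schur/alternant formula for the partition $(1,0,\dots,0)$, since the exponent sequence $(0,1,\dots,k-2,k)$ differs from the staircase by adding $1$ to the top entry (a quick sanity check at $k=2$: $\alpha_2^2-\alpha_1^2=(\alpha_2-\alpha_1)(\alpha_1+\alpha_2)$). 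The endgame is also right: since every codeword is the evaluation of a nonzero polynomial of degree at most $k$, one gets $d\ge n-k$, so the code is MDS or AMDS, and a singular $M_J$ yields a nonzero $u$ with $u^{\top}M_J=0$, hence a nonzero codeword vanishing on $J$ and of weight exactly $n-k$. The only presentational nit is that you should state explicitly that a vanishing $k\times k$ minor gives a row-space (not column-space) dependency, i.e.\ a codeword supported off $J$; as written the phrase ``linearly independent columns'' could be read either way, but the argument you actually run is the correct one.
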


\section{Constructing hulls of MDS codes with non Reed-Solomon types}\label{section:construction}
\subsection{Constructing Euclidean hulls of TGRS codes}

In this subsection, we give, by applying Lemma \ref{lem2.1} and choosing a suitable set of evaluation points, an explicit construction of TGRS codes with arbitrary Euclidean hull dimensions for $q$ even.

\begin{thm}\label{thm:q:even}
Let $q$ be even and $(n-1)|(q-1)$. Then, for any $0\le r\le k-1$, 
there exists an $[n,k,\ge n-k]_q$ code  $C$ for $k\le \lfloor \frac{n}{2}\rfloor $ with $0\le \dim Hull_E(C)\le k-r$.
\end{thm}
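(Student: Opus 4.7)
The plan is to pick the evaluation tuple so that Lemma \ref{lem2.1} applies in the symmetric case $s(\alpha)=0$, and to exploit characteristic $2$ to freely solve the equations $v_l^2=\text{(prescribed)}$.

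First I would take $\alpha_1,\ldots,\alpha_{n-1}$ to be the $(n-1)$-th roots of unity in $\F_q$ (these exist since $(n-1)\mid(q-1)$) and set $\alpha_n=0$. Then $s(\alpha)=0$, so Lemma \ref{lem2.1} applies in its second case. A short differentiation of $x^{n-1}-1$ yields $u_l=(n-1)^{-1}$ for $l<n$ and $u_n=1$, using that $-1=1$ in characteristic $2$ and that $n-1$ is odd (because $q-1$ is odd).

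For a given $r\in\{0,\ldots,k-1\}$, I would next choose a subset $I\subset\{1,\ldots,n-1\}$ of size $r$ together with scalars $w_l\in\F_q^*$ with $w_l=1$ for $l\notin I$ and $w_l\ne 1$ for $l\in I$. Because the Frobenius $x\mapsto x^2$ is a bijection on $\F_q$ in characteristic $2$, there are unique $v_l\in\F_q^*$ with $v_l^2=w_l u_l$. Set $C=TGRS_k(\alpha,{\bf v};1,k-1,\eta)$ for any $\eta\in\F_q^*$. Lemma \ref{lem2.1}(2) then characterizes the hull: $(v_l f(\alpha_l))_l$ lies in $Hull_E(C)$ iff there is $g\in\mathcal P_{n-k}[x;1,n-k-1,-\eta]$ with $f(\alpha_l)=g(\alpha_l)$ for $l\notin I$ and $w_l f(\alpha_l)=g(\alpha_l)$ for $l\in I$.

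A degree argument finishes the count. Since $f-g$ has degree at most $n-k$ but vanishes on $n-r$ distinct points, the hypothesis $r\le k-1<k$ forces $f=g$. Consequently $f\in\mathcal P_k\cap\mathcal P_{n-k}$ and, because $w_l-1\ne 0$, also $f(\alpha_l)=0$ for every $l\in I$. In characteristic $2$ the identity $-\eta=\eta$ makes $\mathcal P_k\cap\mathcal P_{n-k}$ easy to describe: it equals $\mathcal P_k$ in the generic range and is a codimension-one subspace of $\mathcal P_k$ in the boundary case $n=2k+1$. The $r$ evaluation constraints $f(\alpha_l)=0$ are linearly independent because their coefficient matrix contains a nonsingular Vandermonde minor in the distinct $\alpha_l$, $l\in I$. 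Hence $\dim Hull_E(C)$ is either $k-r$ or $k-1-r$, and in either case at most $k-r$.

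The bound $d\ge n-k$ comes for free from Lemma \ref{lem:BeePuRon}, since every TGRS code is MDS or AMDS for any choice of $\eta$. The main bookkeeping obstacle is the boundary case $n=2k+1$, where the power sum $S_{2k}=\sum_l v_l^2\alpha_l^{2k}$ does not vanish and shaves one dimension off the self-orthogonal benchmark (so the hull sits at $k-1-r$ instead of $k-r$); once this is absorbed into the $\le k-r$ inequality, the remaining verification is a uniform Vandermonde rank computation.
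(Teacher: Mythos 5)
Your proposal is correct and follows essentially the same route as the paper: the same evaluation set $\{\alpha:\alpha^{n}=\alpha\}$ with $s(\alpha)=0$, the same device of perturbing $r$ of the relations $v_l^{2}=u_l$ (your $w_l$ playing the role of the paper's single scalar $a^{2}$) to break self-orthogonality at those coordinates, and the same root-counting argument that forces $f=g$ via Lemma \ref{lem2.1}. Your extra care in the boundary case $n=2k+1$, where $\mathcal P_k[x;1,k-1,\eta]\cap\mathcal P_{n-k}[x;1,n-k-1,-\eta]$ has codimension one and the hull dimension is $k-r-1$ rather than $k-r$, is in fact a refinement that the paper's proof glosses over (its coefficient-matching step and its lower-bound construction with $\deg f_1\le k-r-1$ silently assume the intersection is all of $\mathcal P_k$), though both versions establish the stated inequality $0\le \dim Hull_E(C)\le k-r$.
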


\begin{proof} Denote $U=\{a\in \F_q: a^n=a\}$, label the elements of $U$ as $\alpha_{1}, \alpha_{2}, \ldots, \alpha_{n}$, and set $\alpha=(\alpha_{1}, \alpha_{2}, \ldots, \alpha_{n})$. It is easy to check that $s(\alpha)=\sum\limits_{i=1}^n\alpha_i=0.$
Choose $\eta\in \F_q^*$. Take $a \in \mathbb{F}_{q}^{*}$ with $a^{2} \neq 1$.

Since $q$ is even, each element of $\F_q$ is a square element in $\mathbb{F}_{q}$, and thus there exist $v_{1}, \ldots,  v_{n} \in \mathbb{F}_{q}^{*}$ such that
$ u_{i}= v_{i}^{2}$, for $1 \leq i \leq n$. Let $0\le r \le k-1$. Denote $\textbf{v}=(a v_{1},\ldots, a v_{r}, v_{r+1},\ldots, v_{n})$ and ${C}=TGRS_{k}(\alpha,{\bf v};1,k-1,\eta)$.
Let $\textbf{c}=(a v_{1}f(\alpha_{1}),\ldots, a v_{r}f(\alpha_{r}),v_{r+1}f(\alpha_{r+1}),\ldots,v_{n}f(\alpha_{n}))$ 
be a non-zero codeword in $Hull_{E}(C)$ with $\deg(f(x)) \leq k$. By Lemma \ref{lem2.1}, there exists a non-zero polynomial $g(x)\in \mathcal P_{n-k}[x;1,n-k-1,\frac{-\eta}{1+s(\alpha)\eta}]$ such that
\begin{eqnarray*}
   (a^{2} v_{1}^{2}f(\alpha_{1}),\ldots, a^{2} v_{r}^{2}f(\alpha_{r}),v_{r+1}^{2}f(\alpha_{r+1}),\ldots,v_{n}^{2}f(\alpha_{n})) \\
    =(u_{1}g(\alpha_{1}),\ldots, u_{r}g(\alpha_{r}),u_{r+1}g(\alpha_{r+1}),\ldots,u_{n}g(\alpha_{n})).
\end{eqnarray*}
Since $ u_{i}=v_{i}^{2}$, we have
\begin{equation}\label{eq:1}
  \begin{split}
     (a^{2}  u_{1}f(\alpha_{1}),\ldots, a^{2}  u_{r}f(\alpha_{r}),  u_{r+1}f(\alpha_{r+1}),\ldots, u_{n}f(\alpha_{n})) \\
      =(u_{1}g(\alpha_{1}),\ldots, u_{r}g(\alpha_{r}), u_{r+1}g(\alpha_{r+1}),\ldots, u_{n}g(\alpha_{n})).
  \end{split}
\end{equation}
By matching the last $n-r$ coordinates of Eq. (\ref{eq:1}), it follows that $ f(\alpha_{i})=g(\alpha_{i})$ for $i=r+1,\hdots, n$. Denote $h(x)=f(x)-g(x)$. Then, $h(x)$ has at least $n-r$ distinct roots. 
Since $2k\le n$, we get that $\deg(f(x)) \leq k \le n-k< n-r$. Moreover, since $\deg(g(x)) \leq n-k< n-r$,
it follows that $\deg (h)< n-r$, and thus $h(x)=f(x)-g(x)=0$. 
Since $g(x)\in \mathcal P_{n-k}[x;1,n-k-1,\frac{-\eta}{1+s(\alpha)\eta}]$ and $f(x)\in \mathcal P_{k}[x;1,k-1,\eta]$, by matching the polynomial coefficients in $f$ and $g$,
it is easy to obtain that $\deg (f(x))\le k-1.$

By considering the first $r$ coordinates of Eq. (\ref{eq:1}), we obtain that
\[a^{2}  u_{i}f(\alpha_{i})= u_{i}g(\alpha_{i})= u_{i}f(\alpha_{i}),\]
for any $1 \leq i \leq r$. It follows from $a^{2} \neq 1$ and $ u_{i} \neq 0$ that $f(\alpha_{i})=0$ for any $1\le i\le r$. Thus, $f(x)$ can be written as:
\[f(x)=f_1(x)\prod_{i=1}^{r}(x-\alpha_{i}),\]
for some $f_1(x) \in \mathbb{F}_{q}[x]$ with $\deg(f_1(x)) \leq k-r-1$.
It follows that $\dim(Hull_{E}(C)) \leq k-r$.

Next, we show that $\dim(Hull_{E}(C)) \geq\dim (C)=k\ge k-r$. Let $f(x)$ be a polynomial of form $f_1(x)\prod_{i=1}^{r}(x-\alpha_{i})$, where $f_1(x) \in \mathbb{F}_{q}[x]$ and $\deg(f_1(x)) \leq k-r-1$. Take $g(x)= f(x)$. Then, $\deg(g(x)) \leq n-k$ and
\begin{eqnarray*}
\begin{split}
   (a^{2} v_{1}^{2}f(\alpha_{1}),\ldots, a^{2} v_{r}^{2}f(\alpha_{r}),v_{r+1}^{2}f(\alpha_{r+1}),\ldots,v_{n}^{2}f(\alpha_{n}))\\
    =(u_{1}g(\alpha_{1}),\ldots, u_{r}g(\alpha_{r}),u_{r+1}g(\alpha_{r+1}),\ldots,u_{n}g(\alpha_{n})).
\end{split}
\end{eqnarray*}
By Lemma \ref{lem2.1}, the vector
$(a v_{1}f(\alpha_{1}),\ldots,a v_{r}f(\alpha_{r}),$ $v_{r+1}f(\alpha_{r+1}),\ldots,v_{n}f(\alpha_{n})) \in Hull_{E}(C).$
Therefore, $\dim(Hull_{E}(C)) \geq\dim (C)=k\ge k-r$, hence $\dim(Hull_{E}(C)) = k-r$.
\end{proof}
By applying Lemma \ref{lem:BeePuRon}, we deduce the following result.
\begin{cor}
Let $q$ be even and $(n-1)|(q-1)$. Put $U=\{a\in \F_q: a^n=a\}$ and $S_k=\{\sum\limits_{\alpha_i\in J}\alpha_i: J\subset U,\sharp J=k\}$. Then, for any $0\le r\le k-1$, 
\begin{enumerate}
\item there exists an MDS $[n,k, n-k+1]_q$ code  $C$ for $k\le \lfloor \frac{n}{2}\rfloor $ with $0\le \dim Hull_E(C)\le k-r$ if there exists $\eta\in\F_{q}^*$ such that $\eta^{-1}\not=-a,\forall a\in S_k$;
\item there exists an AMDS $[n,k, n-k]_q$ code  $C$ for $k\le \lfloor \frac{n}{2}\rfloor $ with $0\le \dim Hull_E(C)\le k-r$ if there exists $\eta\in\F_{q}^*$ such that $\eta^{-1}=-a,$ for some $a\in S_k$.
\end{enumerate}
\end{cor}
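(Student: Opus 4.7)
The plan is essentially to combine Theorem \ref{thm:q:even} with Lemma \ref{lem:BeePuRon}, since the construction in Theorem \ref{thm:q:even} already yields a TGRS code with the prescribed Euclidean hull dimension, and Lemma \ref{lem:BeePuRon} tells us exactly when such a TGRS code is MDS versus AMDS as a function of $\eta$.

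First, I would recall the construction from the proof of Theorem \ref{thm:q:even}: label the elements of $U=\{a\in \F_q:a^n=a\}$ as $\alpha_1,\dots,\alpha_n$, set $\alpha=(\alpha_1,\dots,\alpha_n)$ (noting that the condition $(n-1)\mid(q-1)$ guarantees $|U|=n$ and that $s(\alpha)=0$ since $U$ is closed under the relevant additive structure as used in the theorem), choose $\eta\in\F_q^*$, pick $a\in\F_q^*$ with $a^2\neq 1$, and use the square roots $v_i$ of the $u_i$ (available because $q$ is even) to define $\mathbf{v}=(av_1,\dots,av_r,v_{r+1},\dots,v_n)$ and $C=TGRS_k(\alpha,\mathbf{v};1,k-1,\eta)$. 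Theorem \ref{thm:q:even} then gives $\dim Hull_E(C)=k-r$, which lies in the required range $0\le \dim Hull_E(C)\le k-r$, and $C$ has length $n$ and dimension $k$.

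Next, I would invoke Lemma \ref{lem:BeePuRon}, which asserts that this TGRS code is MDS precisely when $\eta^{-1}\neq -s$ for every $s\in S_k$, and AMDS precisely when $\eta^{-1}=-s$ for some $s\in S_k$. Thus, under the hypothesis of part (1), the chosen $\eta$ forces $C$ to be MDS with parameters $[n,k,n-k+1]_q$; under the hypothesis of part (2), the chosen $\eta$ forces $C$ to be AMDS with parameters $[n,k,n-k]_q$. In both cases the hull dimension guaranteed by Theorem \ref{thm:q:even} is preserved, since the choice of $\eta\in\F_q^*$ does not disrupt the argument that produced the hull-dimension equality.

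There is no real obstacle here: both ingredients have already been established, and the corollary is a straightforward packaging of Theorem \ref{thm:q:even} (which controls the hull dimension) together with Lemma \ref{lem:BeePuRon} (which controls the minimum distance through the choice of $\eta$). The only point meriting a brief verification is that the existence hypotheses on $\eta$ in (1) and (2) are consistent with the constraints used in Theorem \ref{thm:q:even}, namely that $\eta\in\F_q^*$ and $s(\alpha)=0$; since we are in the regime $s(\alpha)=0$, the excluded value $\eta=-s(\alpha)^{-1}$ does not arise, so any nonzero $\eta$ satisfying the respective MDS or AMDS condition can indeed be used in the construction.
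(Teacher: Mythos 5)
Your proposal is correct and matches the paper's approach exactly: the paper derives this corollary by combining the construction and hull-dimension computation of Theorem \ref{thm:q:even} with the MDS/AMDS criterion of Lemma \ref{lem:BeePuRon}, which is precisely what you do. Your added check that the hypothesis $s(\alpha)=0$ keeps the choice of $\eta$ consistent with the theorem's construction is a sensible (if unstated in the paper) verification.
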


\begin{thm}\label{thm:q:odd}
Let $q=p^m$ be an odd prime power. Then, for any $n\le q-2$ and $0\le r\le k-1$, 
\begin{enumerate}
\item there exists an $[n=2k+1,k,\ge k+1]_{q^2}$ code  $C$ with $0\le \dim Hull_E(C)\le k-r-1$, and
\item there exists an $[n,k,\ge n-k]_{q^2}$ code  $C$ for $k\le \lfloor \frac{n}{2}\rfloor -1$ with $0\le \dim Hull_E(C)\le k-r$.
\end{enumerate}
\end{thm}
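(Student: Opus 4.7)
The plan is to mimic the proof of Theorem~\ref{thm:q:even}, now working over $\F_{q^2}$ in place of $\F_q$. The crucial algebraic fact enabling the translation is that every element of $\F_q^*$ is a square in $\F_{q^2}^*$: if $a \in \F_q^*$, then $a^{(q^2-1)/2} = (a^{q-1})^{(q+1)/2} = 1$. So even though not every element of $\F_{q^2}^*$ is a square when $q$ is odd, restricting the evaluation points to $\F_q$ still gives $u_i \in \F_q^* \subseteq (\F_{q^2}^*)^2$, and we can choose $v_i \in \F_{q^2}^*$ with $v_i^2 = u_i$ exactly as in the even-characteristic argument.

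First I would pick $n$ distinct $\alpha_1,\dots,\alpha_n \in \F_q$ with $s(\alpha)=0$, which is possible since $n \le q-2$, then form the $u_i$ and $v_i$ as above, fix $\eta \in \F_q^*$ and $a \in \F_{q^2}^*$ with $a^2 \ne 1$, and set $\mathbf v = (av_1,\dots,av_r, v_{r+1},\dots,v_n)$ and $C = TGRS_k(\alpha,\mathbf v;1,k-1,\eta)$. The minimum distance bound $d \ge n-k$ is automatic, since every TGRS code is MDS or AMDS.

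For the hull analysis I would invoke Lemma~\ref{lem2.1}(2). A hull codeword corresponds to polynomials $f \in \mathcal P_k[x;1,k-1,\eta]$ and $g \in \mathcal P_{n-k}[x;1,n-k-1,-\eta]$ satisfying $a^2 f(\alpha_i) = g(\alpha_i)$ for $i \le r$ and $f(\alpha_i) = g(\alpha_i)$ for $i > r$, after cancelling the common factor $u_i = v_i^2$. Setting $h = f-g$, the $n-r$ equalities combined with $\deg h \le n-k$ force $h \equiv 0$ whenever $n-r > n-k$, i.e. $r < k$; so $f = g$. This is where the two parts split. In part (2), $n-k-1 \ge k+1$, so identifying $f$ with $g$ just forces $b_{k+1}=\dots=b_{n-k-1}=0$ without constraining $a_{k-1}$, and the analysis of Theorem~\ref{thm:q:even} goes through verbatim to yield $\dim Hull_E(C) \le k-r$. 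In part (1), $n-k-1 = k$, so $g(x) = \sum_{i=0}^k b_i x^i - \eta b_k x^{k+1}$; matching the $x^{k+1}$-coefficient (which is $0$ in $f$) forces $b_k = 0$, and matching the $x^k$-coefficient then forces $\eta a_{k-1} = 0$, hence $a_{k-1}=0$. This extra constraint reduces $\deg f$ to at most $k-2$ and is the source of the $-1$ in the upper bound.

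The remaining steps are standard: the first $r$ equations combined with $a^2 \ne 1$ force $f(\alpha_i)=0$ for $i \le r$, so $f(x) = f_1(x)\prod_{i=1}^r(x-\alpha_i)$ with $\deg f_1 \le k-r-2$ (part 1) or $\le k-r-1$ (part 2), yielding the claimed upper bound. The matching lower bound is obtained by reversing the construction exactly as in Theorem~\ref{thm:q:even}: for any admissible $f_1$ set $g = f$ and check the criterion of Lemma~\ref{lem2.1} directly. The main obstacle I anticipate is the careful coefficient bookkeeping in part (1), where the twisted dual space $\mathcal P_{n-k}[x;1,n-k-1,-\eta]$ sits exactly at the boundary $n-k = k+1$; it is precisely this coincidence between the twist monomial of $g$ and the top degree of $f$ that is responsible for the drop from $k-r$ to $k-r-1$, and one must verify that no similar coincidence silently occurs in part (2).
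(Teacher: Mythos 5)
Your proposal is correct and follows essentially the same route as the paper: evaluation points chosen inside $\F_q$ so that the $u_i$ lie in $\F_q^*$ and are therefore squares in $\F_{q^2}$, Lemma~\ref{lem2.1} to reduce membership in the hull to the polynomial identity $f=g$, and the coefficient comparison at degrees $k$ and $k+1$ that forces $a_{k-1}=0$ precisely when $n-k=k+1$, which is the source of the extra $-1$ in part (1). The only cosmetic difference is that you arrange $s(\alpha)=0$ (so the dual twist parameter is $-\eta$), whereas the paper takes $s(\alpha)\neq 0$ with $\eta=-2/s(\alpha)$ (making the dual twist parameter equal to $\eta$); since the argument only uses that this parameter is nonzero, both choices work.
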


\begin{proof}
First note that $\sum\limits_{a\in \F_q}a=0$. Denote ${\bf v}=\F_q\backslash \{0,1\}$. Choose $U$ to be a subset of $V$ of size $n$ such that $\sum\limits_{a\in U}a\not=0$, and put $$P_U(x)=\prod\limits_{\alpha\in U}(x-\alpha).$$
Label the elements of $U$ as $\alpha_{1}, \alpha_{2}, \ldots, \alpha_{n}$, and set $\alpha=(\alpha_{1}, \alpha_{2}, \ldots, \alpha_{n})$ as well as $s(\alpha)=\sum\limits_{i=1}^n\alpha_i$. 
It is easy to check that:
$$
u_i=1/P_U'(\alpha_i).
$$
Choose $\eta=-2/s(\alpha)$. Take $a \in \mathbb{F}_{q^2}^{*}$ with $a^{2} \neq 1$.

Since $u_i$ is in $\F_q$, it is a square element in $\mathbb{F}_{q^2}$, and thus there exist $v_{1}, \ldots,  v_{n} \in \mathbb{F}_{q^2}^{*}$ such that
$ u_{i}= v_{i}^{2}$, for $1 \leq i \leq n$. Let $0\le r \le k-1$. Denote $\textbf{v}=(a v_{1},\ldots, a v_{r}, v_{r+1},\ldots, v_{n})$ and ${C}=GTRS_{k}(\alpha,{\bf v};1,k-1,\eta)$.
Let $\textbf{c}=(a v_{1}f(\alpha_{1}),\ldots, a v_{r}f(\alpha_{r}),v_{r+1}f(\alpha_{r+1}),\ldots,v_{n}f(\alpha_{n}))$ 
be a non-zero codeword in $Hull_{E}(C)$ with $\deg(f(x)) \leq k$. By Lemma \ref{lem2.1}, there exists a non-zero polynomial $g(x)\in \mathcal P_{n-k}[x;1,n-k-1,\frac{-\eta}{1+s(\alpha)\eta}]$ such that
\begin{eqnarray*}
    (a^{2} v_{1}^{2}f(\alpha_{1}),\ldots, a^{2} v_{r}^{2}f(\alpha_{r}),v_{r+1}^{2}f(\alpha_{r+1}),\ldots,v_{n}^{2}f(\alpha_{n})) \\
    =(u_{1}g(\alpha_{1}),\ldots, u_{r}g(\alpha_{r}),u_{r+1}g(\alpha_{r+1}),\ldots,u_{n}g(\alpha_{n})).
\end{eqnarray*}
Since $ u_{i}=v_{i}^{2}$, we have
\begin{equation}\label{eq:2}
  \begin{split}
     (a^{2}  u_{1}f(\alpha_{1}),\ldots, a^{2}  u_{r}f(\alpha_{r}),  u_{r+1}f(\alpha_{r+1}),\ldots, u_{n}f(\alpha_{n})) \\
      =(u_{1}g(\alpha_{1}),\ldots, u_{r}g(\alpha_{r}), u_{r+1}g(\alpha_{r+1}),\ldots, u_{n}g(\alpha_{n})).
  \end{split}
\end{equation}
By matching the last $n-r$ coordinates of Eq. (\ref{eq:2}), it follows that $ f(\alpha_{i})=g(\alpha_{i})$ for $i=r+1,\hdots, n$. Denote $h(x)=f(x)-g(x)$. Then, $h(x)$ has at least $n-r$ distinct roots. 
\begin{enumerate}[1)]
\item Case $n-1=2k:$\\
If $n-1=2k$, then we get that $\deg(f(x)) \leq k = n-k-1< n-r-1$. Moreover, since $\deg(g(x)) \leq n-k< n-r$. 
It follows that $\deg (h)< n-r$, and thus $h(x)=f(x)-g(x)=0$. 
Since $g(x)\in \mathcal P_{n-k}[x;1,n-k-1,\frac{-\eta}{1+s(\alpha)\eta}]$ and $f(x)\in \mathcal P_{k}[x;1,k-1,\eta]$, 
it is easy to obtain that $\deg (f(x))\le k-2.$

By considering the first $r$ coordinates of Eq. (\ref{eq:2}), we obtain that
\[a^{2}  u_{i}f(\alpha_{i})= u_{i}g(\alpha_{i})= u_{i}f(\alpha_{i}),\]
for any $1 \leq i \leq r$. It follows from $a^{2} \neq 1$ and $ u_{i} \neq 0$ that $f(\alpha_{i})=0$ for any $1\le i\le r$. Thus, $f(x)$ can be written as:
\[f(x)=f_1(x)\prod_{i=1}^{r}(x-\alpha_{i}),\]
for some $f_1(x) \in \mathbb{F}_{q}[x]$ with $\deg(f_1(x)) \leq k-r-2$.
It follows that $\dim(Hull_{E}(C)) \leq k-r-1$.

Next, we show that $\dim(Hull_{E}(C)) \geq\dim (C)=k\ge k-r-1$.  Let $f(x)$ be a polynomial of form $f_1(x)\prod_{i=1}^{r}(x-\alpha_{i})$, where $f_1(x) \in \mathbb{F}_{q}[x]$ and $\deg(f_1(x)) \leq k-r-2$. Take $g(x)= f(x)$. Then, $\deg(g(x)) \leq n-k$ and
\begin{eqnarray*}
\begin{split}
   (a^{2} v_{1}^{2}f(\alpha_{1}),\ldots, a^{2} v_{r}^{2}f(\alpha_{r}),v_{r+1}^{2}f(\alpha_{r+1}),\ldots,v_{n}^{2}f(\alpha_{n}))\\
    =(u_{1}g(\alpha_{1}),\ldots, u_{r}g(\alpha_{r}),u_{r+1}g(\alpha_{r+1}),\ldots,u_{n}g(\alpha_{n})).
\end{split}
\end{eqnarray*}
By Lemma \ref{lem2.1}, the vector
$(a v_{1}f(\alpha_{1}),\ldots,a v_{r}f(\alpha_{r}),$ $v_{r+1}f(\alpha_{r+1}),\ldots,v_{n}f(\alpha_{n})) \in Hull_{E}(C).$
Therefore, $\dim(Hull_{E}(C)) \geq\dim (C)=k\ge k-r-1$, hence $\dim(Hull_{E}(C)) = k-r-1$.

\item Case $n-1\not=2k$:\\
If $n-1\not=2k$, then we obtain that $\deg(f(x)) \leq k<n-k< n-r$.  Moreover, since $\deg(g(x)) \leq n-k< n-r$, it follows that $\deg (h)< n-r$, and thus $h(x)=f(x)-g(x)=0$. Since $g(x)\in \mathcal P_{n-k}[x;1,n-k-1,\frac{-\eta}{1+s(\alpha)\eta}]$ and $f(x)\in \mathcal P_{k}[x;1,k-1,\eta]$, 
it is easy to obtain that $\deg (f(x))\le k-1.$ With a similar discussion as point 1), we obtain that $\dim(Hull_{E}(C)) = k-r$.
\end{enumerate}
\end{proof}
\begin{exam} Take $q=13$, $\theta$ a primitive element of $\F_{13^2}$, $n=11$, $r=1$, $a=\theta$, and $\alpha=(2, 4, 8, 3, 6, 12, 11, 9, 5, 10, 7 )$. 
It follows that 
${\bf v}=(\theta^{133}, \theta^{161}, \theta^{119}, \theta^{119},     1,     7, \theta^{133},     1 ,   10, \theta^{161},    10)$, $s(\alpha)=12$, and $\eta=-2/s(\alpha)=2$. 

If we take $k=5$, then we get a $[11, 5, 6]_{13^2}$ code $C_5$ with $\dim Hull_E(C_5)=k-r-1=3.$ Its generator matrix is given as follows:
{\scriptsize
$$
\left(
\begin{array}{lllllllllll}
\theta^{134}&\theta^{161}&\theta^{119}&\theta^{119}&1&7&\theta^{133}&1&10&\theta^{161}&10\\
\theta^{50}&\theta^{91}&\theta^{63}&\theta^{77}&10&10&\theta^{147}&4&2&\theta^{49}&8\\
\theta^{134}&\theta^{21}&\theta^{7}&\theta^{35}&9&5&\theta^{161}&3&3&\theta^{105}&9\\
\theta^{50}&\theta^{119}&\theta^{119}&\theta^{161}&12&9&\theta^{7}&12&11&\theta^{161}&2\\
\theta^{50}&\theta^{21}&\theta^{133}&\theta^{49}&11&9&\theta^{147}&3&1&\theta^{35}&0\\
\end{array}
\right).
$$
}

If we take $k=4$, then we get a $[11, 4, 7]_{13^2}$ code $C_4$ with $\dim Hull_E(C_4)=k-r=3.$
Its generator matrix is given as follows:
{\scriptsize
$$
\left(
\begin{array}{lllllllllll}
\theta^{134}&\theta^{161}&\theta^{119}&\theta^{119}&1&7&\theta^{133}&1&10&\theta^{161}&10\\
\theta^{50}&\theta^{91}&\theta^{63}&\theta^{77}&10&10&\theta^{147}&4&2&\theta^{49}&8\\
\theta^{134}&\theta^{21}&\theta^{7}&\theta^{35}&9&5&\theta^{161}&3&3&\theta^{105}&9\\
\theta^{134}&\theta^{91}&\theta^{21}&\theta^{91}&5&5&\theta^{133}&4&5&\theta^{147}&0\\
\end{array}
\right).
$$
}
\end{exam}

\subsection{Constructing Hermitian hulls of TGRS codes}

In this subsection, we give, by applying Lemma \ref{lem2.2} and choosing suitable sets of evaluation points, explicit constructions of TGRS codes with arbitrary Hermitian hull dimensions.

\begin{thm}\label{thm:TGRS-hermitian} Let $q=p^m$ be a prime power and $1\le k\le \lfloor \frac{n}{q+1} \rfloor$. Assume that one of the following conditions holds:
\begin{enumerate}
\item $(n-1)|(q^2-1)$;
\item $n=tq$, $1\le t\le q-1$;
\item $n=(t+1)N+1$, $N|(q^2-1)$, $n_2=\frac{N}{\gcd (N,q+1)}$, $1\le t\le \frac{q-1}{n_2}-2$.
\end{enumerate}
Then, for any $0\le r\le k-1$, 
there exists an $[n,k, \ge n-k]_{q^2}$ code  $C$ with $0\le \dim Hull_H(C)\le k-r$.
\end{thm}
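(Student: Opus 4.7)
The plan is to mirror the Euclidean argument of Theorem \ref{thm:q:even}, replacing the squaring map $x\mapsto x^2$ by the $\F_{q^2}/\F_q$ norm $N(x)=x^{q+1}$. Because $N$ is surjective onto $\F_q^\ast$, every element of $\F_q^\ast$ has a $(q+1)$-th root in $\F_{q^2}^\ast$. Hence the crux of the proof reduces to producing, under each of the three stated hypotheses on $n$, a set of $n$ distinct evaluation points $\alpha_1,\dots,\alpha_n\in\F_{q^2}$ whose Lagrange denominators $u_i=\prod_{j\ne i}(\alpha_i-\alpha_j)^{-1}$ all lie in $\F_q^\ast$. Once these are available, each $u_i$ admits some $v_i\in\F_{q^2}^\ast$ with $v_i^{q+1}=u_i$, and the remainder of the construction parallels the Euclidean case.

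In each case the strategy is to arrange $P_U(x)=\prod_i(x-\alpha_i)\in\F_q[x]$, so that $u_i=1/P_U'(\alpha_i)\in\F_q$ automatically. In case~1, $(n-1)\mid(q^2-1)$: take $U$ to be the zero set in $\F_{q^2}$ of $x^n-x\in\F_q[x]$, giving $s(\alpha)=0$ and $u_i\in\F_q$. In case~2, $n=tq$: take $U$ to be the union of $t$ additive $\F_q$-cosets $\{\beta_j+\F_q\}_{j=1}^t\subset\F_{q^2}$ with the $\beta_j$ chosen so that the full union is Frobenius-stable, so that $P_U(x)=\prod_j\bigl((x-\beta_j)^q-(x-\beta_j)\bigr)\in\F_q[x]$. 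In case~3, $n=(t+1)N+1$ with $N\mid(q^2-1)$: take $U=\{0\}\cup\bigcup_{j=0}^{t}\gamma_j V$, where $V\subset\F_{q^2}^\ast$ is the unique subgroup of order $N$ and the $\gamma_j$ are coset representatives chosen so that $\{\gamma_j V\}$ is Frobenius-stable. The arithmetic $n_2=N/\gcd(N,q+1)$ is exactly the size of the Frobenius orbits on $\F_{q^2}^\ast/V$, and the bound $t\le(q-1)/n_2-2$ is precisely what is needed for enough disjoint Frobenius-stable coset unions to exist while keeping the $\alpha_i$ distinct. In each case I also compute $s(\alpha)$ and then choose $\eta\in\F_{q^2}^\ast$ so that Lemma \ref{lem2.2} applies ($\eta$ arbitrary when $s(\alpha)=0$, $\eta\ne -s(\alpha)^{-1}$ otherwise).

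With this data fixed, pick $a\in\F_{q^2}^\ast$ with $a^{q+1}\ne 1$ (possible since the kernel of $N$ has only $q+1$ elements while $|\F_{q^2}^\ast|=q^2-1$), set $\mathbf v=(av_1,\dots,av_r,v_{r+1},\dots,v_n)$ and $C=TGRS_k(\alpha,\mathbf v;1,k-1,\eta)$. Applying Lemma \ref{lem2.2} to a codeword arising from $f\in\mathcal P_k[x;1,k-1,\eta]$, membership in $\mathrm{Hull}_H(C)$ amounts to the existence of $g\in\mathcal P_{n-k}[x;1,n-k-1,\eta']$ with $a^{q+1}f(\alpha_i)=g(\alpha_i)$ for $i\le r$ and $f(\alpha_i)=g(\alpha_i)$ for $i>r$ (the common factor $u_i$ cancels). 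The last $n-r$ equations make $f-g$ vanish at $n-r$ points. Since $k\le\lfloor n/(q+1)\rfloor$ forces $2k\le n$ for $q\ge 2$, one has $\deg(f-g)\le\max(k,n-k)=n-k<n-r$ (using $r\le k-1<k$), hence $f=g$. Matching the twisted coefficients of $\mathcal P_k$ and $\mathcal P_{n-k}$ then forces $\deg f\le k-1$, and the first $r$ equations combined with $a^{q+1}\ne 1$ and $u_i\ne 0$ give $f(\alpha_i)=0$ for $i\le r$. Consequently, $f=f_1\prod_{i=1}^r(x-\alpha_i)$ with $\deg f_1\le k-r-1$, giving $\dim\mathrm{Hull}_H(C)\le k-r$. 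The lower bound is the standard converse: for any such $f$, setting $g=f$ satisfies the conditions of Lemma \ref{lem2.2}, producing a hull codeword.

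The main obstacle is case~3: the clean Frobenius bookkeeping needed to show that $U=\{0\}\cup\bigcup_{j=0}^{t}\gamma_j V$ can indeed be assembled from $t+1$ Frobenius-stable $V$-coset unions in precisely the range $1\le t\le(q-1)/n_2-2$, and that the resulting $P_U(x)$ lies in $\F_q[x]$. Cases~1 and~2 reduce to classical facts about the roots of $x^n-x$ and about additive subgroups of $\F_{q^2}$, while the Hermitian coefficient-matching itself closely parallels the arguments already carried out in Theorems \ref{thm:q:even} and \ref{thm:q:odd}; the novelty concentrates in the coset construction of case~3.
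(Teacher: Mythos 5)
Your overall architecture matches the paper's: in each of the three cases you build an evaluation set $U$ with all Lagrange denominators $u_i=1/P_U'(\alpha_i)$ in $\F_q^*$, take $(q+1)$-th roots $v_i^{q+1}=u_i$, rescale the first $r$ coordinates by $a$ with $a^{q+1}\neq 1$, and run a degree-counting argument. (For cases~2 and~3 the paper does not argue via Frobenius-stability of $U$; it computes $P_U'$ at each point directly and shows the value lies in $\F_q^*$, but that is a difference of bookkeeping, not of substance.)

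There is, however, a genuine gap at the heart of your degree count. The Hermitian membership condition is \emph{not} $v_i^{q+1}f(\alpha_i)=u_i g(\alpha_i)$: as the proof of Lemma~\ref{lem2.2} shows, ${\bf c}\in C^{\perp_H}$ iff ${\bf c}^q\in C^{\perp_E}$, which yields $v_i^{q+1}\,f(\alpha_i)^q=u_i\,g(\alpha_i)$. (The displayed statement of Lemma~\ref{lem2.2} omits the $q$-th power, but its proof and the paper's proof of this theorem both carry it.) Consequently the polynomial forced to vanish at the last $n-r$ points is $h(x)=f^q(x)-g(x)$, where $f^q(x)=(f(x))^q$ has degree up to $qk$, not $f(x)-g(x)$ of degree up to $k$. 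The hypothesis $k\le\lfloor n/(q+1)\rfloor$ is there precisely to give $\deg f^q\le qk\le n-k<n-r$; your reading extracts only the much weaker consequence $2k\le n$, and with the un-conjugated condition your argument would ``prove'' the theorem for every $k\le n/2$, which is false for Hermitian hulls (large-dimensional Hermitian self-orthogonal codes of this type do not exist, as the paper itself emphasizes in the introduction). Relatedly, after concluding $f^q=g$ you must match the twisted coefficient of $f^q$ (which involves $\eta^q$) against that of $g$ (which is $-\eta$, since $s(\alpha)=0$ in all three cases); the paper's specific choice $\eta^q=-\eta$ is what makes this step close, whereas your ``$\eta$ arbitrary when $s(\alpha)=0$'' is not enough. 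Finally, you flag case~3 (assembling $U$ from cosets of the order-$N$ subgroup and verifying $P_U'(\alpha_i)\in\F_q^*$ in the stated range of $t$) as unresolved; the paper settles it by an explicit computation of $P_U'(\alpha_j u_s)=N\alpha_j^N(1-\alpha_j^N)\prod_{i\neq j}(\alpha_j^N-\alpha_i^N)$ together with the observation that $\alpha_j^N=\theta^{e_jn_2(q+1)}\in\F_q^*$, so that part of your plan is completable but is not yet a proof.
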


\begin{proof}
\begin{enumerate}
\item Denote $U=\{\alpha\in \F_{q^2}: \alpha^n=\alpha\}$, and put $$P_U(x)=\prod\limits_{\alpha\in U}(x-\alpha).$$
Label the elements of $U$ as $\alpha_{1}, \alpha_{2}, \ldots, \alpha_{n}$ with $\alpha_n=0$, and set $\alpha=(\alpha_{1}, \alpha_{2}, \ldots, \alpha_{n})$. It follows that $s(\alpha)=\sum\limits_{a\in U}a=0$. The derivative of $P_U(x)$ is $P_U'(x)=nx^{n-1}-1.$
We have that $P_U'(\alpha_i)=n-1$ for $1\le i\le n-1$ and $P_U'(\alpha_n)=-1$, which are all in $\F_{q}^*$, and thus 
$(P_U'(\alpha_i))_{1\le i\le n}$ can be written as a $(q+1)$-power element of $\F_{q^2}^*$.
Take $v_i=\frac{1}{\sqrt[q+1]{n-1}}$ for $1\le i\le n-1$ and $v_n=\frac{1}{\sqrt[q+1]{-1}}$.
Choose $\eta\in \F_{q^2}^*$ such that $\eta^q=-\eta$. Take $a \in \mathbb{F}_{q}^{*}$ with $a^{q+1} \neq 1$.
Let $0\le r \le k-1$. Denote $\textbf{v}=(a v_{1},\ldots, a v_{r}, v_{r+1},\ldots, v_{n})$ and ${C}=TGRS_{k}(\alpha,{\bf v};1,k-1,\eta)$.
Let $\textbf{c}=(a v_{1}f(\alpha_{1}),\ldots, a v_{r}f(\alpha_{r}),v_{r+1}f(\alpha_{r+1}),\ldots,v_{n}f(\alpha_{n}))$ 
be a non-zero codeword in $Hull_{H}(C)$ with $\deg(f(x)) \leq k$. By Lemma \ref{lem2.2}, there exists a non-zero polynomial $g(x)\in \mathcal P_{n-k}[x;1,n-k-1,\frac{-\eta}{1+s(\alpha)\eta}]$ such that
{
\begin{eqnarray*}
     (a^{q_1} v_{1}^{q_1}f^q(\alpha_{1}),\ldots, a^{q_1} v_{r}^{q_1}f^q(\alpha_{r}),v_{r+1}^{q_1}f^q(\alpha_{r+1}),\ldots,v_{n}^{q_1}f^q(\alpha_{n})) \\
    =(u_{1}g(\alpha_{1}),\ldots, u_{r}g(\alpha_{r}),u_{r+1}g(\alpha_{r+1}),\ldots,u_{n}g(\alpha_{n})),
\end{eqnarray*}
}
where $q_1=q+1$.

Since $ u_{i}=v_{i}^{q+1}$, we have
{
\begin{equation}\label{3}
  \begin{split}
     (a^{q_1}  u_{1}f^q(\alpha_{1}),\ldots, a^{q_1}  u_{r}f^q(\alpha_{r}),  u_{r+1}f^q(\alpha_{r+1}),\ldots, u_{n}f^q(\alpha_{n})) \\
      =(u_{1}g(\alpha_{1}),\ldots, u_{r}g(\alpha_{r}), u_{r+1}g(\alpha_{r+1}),\ldots, u_{n}g(\alpha_{n})).
  \end{split}
\end{equation}
}
By matching the last $n-r$ coordinates of Eq. \eqref{3}, it follows that $ f^q(\alpha_{i})=g(\alpha_{i})$ for $i=r+1,\hdots, n$. Denote $h(x)=f^q(x)-g(x)$. Then, $h(x)$ has at least $n-r$ distinct roots. 
Since $k(q+1)\le n $, it follows that $\deg(f^q(x)) \leq qk\le n-k< n-r$.  Moreover, since $\deg(g(x)) \leq n-k< n-r$, it follows that $\deg (h)< n-r$, and thus $h(x)=f^q(x)-g(x)=0$. 

Since $g(x)\in \mathcal P_{n-k}[x;1,n-k-1,\frac{-\eta}{1+s(\alpha)\eta}]$ and $f(x)\in \mathcal P_{k}[x;1,k-1,\eta]$, 
it is easy to obtain that $\deg (f(x))\le k-1.$

By considering the first $r$ coordinates of Eq. \eqref{3}, we obtain that
\[a^{q+1}  u_{i}f(\alpha_{i})= u_{i}g(\alpha_{i})= u_{i}f(\alpha_{i}),\]
for any $1 \leq i \leq r$. It follows from $a^{q+1} \neq 1$ and $ u_{i} \neq 0$ that $f(\alpha_{i})=0$ for any $1\le i\le r$. Thus, $f(x)$ can be written as:
\[f(x)=f_1(x)\prod_{i=1}^{r}(x-\alpha_{i}),\]
for some $f_1(x) \in \mathbb{F}_{q}[x]$ with $\deg(f_1(x)) \leq k-r-1$.
It follows that $\dim(Hull_{H}(C)) \leq k-r$.

Next, we show that $\dim(Hull_{E}(C)) \geq\dim (C)=k\ge k-r$. Let $f(x)$ be a polynomial of form $f_1(x)\prod_{i=1}^{r}(x-\alpha_{i})$, where $f_1(x) \in \mathbb{F}_{q}[x]$ and $\deg(f_1(x)) \leq k-r-1$. Take $g(x)= f(x)$. Then, $\deg(g(x)) \leq n-k$ and
{
\begin{eqnarray*}
\begin{split}
   & (a^{q_1} v_{1}^{q_1}f^q(\alpha_{1}),\ldots, a^{q_1} v_{r}^{q_1}f^q(\alpha_{r}),v_{r+1}^{q_1}f^q(\alpha_{r+1}),\ldots,v_{n}^{q_1}f^q(\alpha_{n}))\\
    & =(u_{1}g(\alpha_{1}),\ldots, u_{r}g(\alpha_{r}),u_{r+1}g(\alpha_{r+1}),\ldots,u_{n}g(\alpha_{n})).
\end{split}
\end{eqnarray*}
}
By Lemma \ref{lem2.1}, the vector
$(a v_{1}f(\alpha_{1}),\ldots,a v_{r}f(\alpha_{r}),$ $v_{r+1}f(\alpha_{r+1}),\ldots,v_{n}f(\alpha_{n})) \in Hull_{H}(C).$
Therefore, $\dim(Hull_{H}(C)) \geq\dim (C)=k\ge k-r$, hence $\dim(Hull_{H}(C)) = k-r$.
\item Fix an element $\beta\in \F_{q^2}\backslash \F_q$, and label the elements of $\F_q$ as $\{a_1,\hdots, a_q\}$. 
Denote $\beta_{i,j}=a_i\beta+a_j$ for $1\le i\le t$ and $1\le j\le q$. Put $U=\{\beta_{i,j}|1\le i\le t, 1\le j\le q\}$, set $\alpha=(\beta_{i,j})_{1\le i\le t, 1\le j\le q}$,
and write
\begin{equation*}
P_U(x)=\prod\limits_{
1\le i\le t,1\le j\le q
}(x-\beta_{i,j}).
\end{equation*}
It follows that $s(\alpha)=\sum\limits_{a\in U}a=0$.
The derivative $P_U'(x)$ at $\beta_{i_0,j_0}\in U$ is computed as follows:
{
\begin{equation*}
\begin{array}{ll}
P_U'(\beta_{i_0,j_0})
&=
\prod\limits_{
\begin{array}{c}
1\le i\le t,1\le j\le q\\
(i,j)\not= (i_0,j_0)\\
\end{array}
}(\beta_{i_0,j_0}-\beta_{i,j})\\
&=
\prod\limits_{
\begin{array}{c}
1\le j\le q\\
j\not= j_0\\
\end{array}
}(a_{i_0}\beta+a_{j_0}-a_{i_0}\beta-a_j)\\
&\prod\limits_{
\begin{array}{c}
1\le i\le t,\\
1\le j\le q\\
i\not= i_0\\
\end{array}
}(a_{i_0}\beta+a_{j_0}-a_{i}\beta-a_j)\\
&=
\prod\limits_{
\begin{array}{c}
1\le j\le q\\
j\not= j_0\\
\end{array}
}(a_{j_0}-a_j)\\
&\prod\limits_{
\begin{array}{c}
1\le i\le t\\
1\le j\le q\\
i\not= i_0\\
\end{array}
}((a_{i_0}-a_i)\beta+(a_{j_0}-a_j))\\
&=
-
\prod\limits_{
\begin{array}{c}
1\le i\le t\\
i\not= i_0\\
\end{array}
}((a_{i_0}-a_i)^q\beta^q+(a_{j_0}-a_j)\beta)\\
&=
-(\beta^q-\beta)^{t-1}
\prod\limits_{
\begin{array}{c}
1\le i\le t\\
i\not= i_0\\
\end{array}
}(a_{i_0}-a_i).\\
\end{array}
\end{equation*}
}
The two last equalities hold due to the fact that the product of all element in $\F_q^*$ is equal to $-1$.

It follows that $P_U'(\beta_{i_0,j_0})\in \F_{q}^*$ and thus can be written as a $(q+1)$-power element of $\F_{q^2}^*$.
Set $(v_1,\hdots,v_n)=\left(\frac{1}{\sqrt[q+1]{P_U'(\beta_{i_0,j_0})}}\right)_{1\le i_0\le t,1\le j_0\le q}$.
Choose $\eta\in \F_{q^2}^*$ such that $\eta^q=-\eta$. Take $a \in \mathbb{F}_{q}^{*}$ with $a^{q+1} \neq 1$.
Let $0\le r \le k-1$. Denote $\textbf{v}=(a v_{1},\ldots, a v_{r}, v_{r+1},\ldots, v_{n})$ and ${C}=TGRS_{k}(\alpha,{\bf v};1,k-1,\eta)$. 
The rest follows with the same reasoning as that in the first part.
\item Assume that $N|(q^2-1)$, and put
\begin{equation}
n_{1}=\gcd (N, q+1), 
\label{eq:n1}
\end{equation} and
\begin{equation}
n_{2}=\frac{N}{\gcd (N,q+1)}. 
\label{eqn2}
\end{equation}
From (\ref{eq:n1}), it follows that $\gcd (n_2,\frac{q+1}{n_1})=1$. We get that $n_2|(q-1)\frac{q+1}{n_1}$ since $N|(q^2-1)$, and thus $n_2|(q-1)$.
Let $U_N$ and $V_N$  be two multiplicative subgroups of $\mathbb{F}_{q^{2}}^{*}$ generated by $\theta^{\frac{q^{2}-1}{N}}$ and $\theta^{\frac{q+1}{n_{1}}}$, respectively, where $\theta$ is a primitive element of $\mathbb{F}_{q^{2}}$. We can easily check that their orders are $\sharp U_N=N$ and $\sharp V_N=(q-1)n_{1}$. Since $\frac{q^{2}-1}{N}=\frac{q+1}{n_{1}}\cdot\frac{q-1}{n_{2}}$, we obtain that $ \frac{q+1}{n_{1}} \mid \frac{q^{2}-1}{N}$, and thus $U_N$ is a subgroup of $V_N$.
Let $\alpha_{1}U_N, \ldots , \alpha_{\frac{q-1}{n_{2}}-1}U_N $ be all the distinct cosets of $V_N$ different from $U_N$. 

For $1 \leq t \leq \frac{q-1}{n_{2}}-2$, label the elements of $U_N$ as $a_1,\hdots,a_N$. Put 
\begin{equation}
U=U_N \bigcup^{t}\limits_{j=1}\alpha_{j}U_N\cup \{0\},
\label{eq:multi-coset}
\end{equation}
label the elements of $U$ as $a_1,\hdots,a_{(t+1)N+1}$, set $\alpha=(a_1,\hdots,a_{(t+1)N+1})$, and
write $$P_U(x)=\prod\limits_{\alpha\in U}(x-\alpha).$$
It follows that $s(\alpha)=\sum\limits_{a\in U}a=0.$
The derivative $P_U'(x)$ is computed as follows:
\begin{equation*}
\begin{array}{ll}
P_U'(x)=&((N+1)x^N-1)\prod\limits_{i=1}^t(x^N-\alpha_i^N)\\
&+Nx^N(x^N-1)\left(\sum\limits_{i=1}^t\prod\limits_{j=1,j\not=i}^t(x^N-\alpha_j^N)\right).\\
\end{array}
\end{equation*}

For $1\le j\le t,1\le s\le N$, we have 
\begin{equation} P_U'(\alpha_ju_s)=N\alpha_j^N(1-\alpha_j^N)\prod\limits_{i=1,i\not=j}^t(\alpha_j^N-\alpha_i^N).
\label{eq:derivative}
\end{equation}

From the fact that $\alpha_{j}$ is in $V_N$, we have that $\alpha_{j}=\theta^{e_j\frac{q+1}{n_{1}}}$ for some positive integer $e_j$. Hence,
$\alpha_{j}^{N}=\theta^{e_jN\frac{q+1}{n_{1}}}=\theta^{e_jn_{2}(q+1)}$, and this shows that  $\alpha_{j}^{N}$ is an element of $\mathbb{F}^{*}_{q}$. So, we deduce that for any $1\le i\le n=(t+1)N+1$, we have $P_U'(a_i)\in \F_q^*$ and thus $P_U'(a_i)=\beta_i^{q+1}$ for some $\beta_i \in \F_{q^2}$.
Set $v_i=\frac{1}{\beta_i}$ for $1\le i\le n.$
Choose $\eta\in \F_{q^2}^*$ such that $\eta^q=-\eta$. Take $a \in \mathbb{F}_{q}^{*}$ with $a^{q+1} \neq 1$.
Let $0\le r \le k-1$. Denote $\textbf{v}=(a v_{1},\ldots, a v_{r}, v_{r+1},\ldots, v_{n})$ and ${C}=TGRS_{k}(\alpha,{\bf v};1,k-1,\eta)$.
The rest follows with the same reasoning as that in the first part.
\end{enumerate}
\end{proof}

By applying Lemma \ref{lem:BeePuRon}, we deduce the following result.
\begin{cor} Let $q=p^m$ be a prime power and $1\le k\le \lfloor \frac{n}{q+1} \rfloor$. 
Assume that one of the following conditions holds:
\begin{enumerate}
\item $(n-1)|(q^2-1)$ and $U=\{a\in \F_q: a^n=a\}$;
\item $n=tq$, $1\le t\le q-1$ and $U=\{a_i\beta+a_j|1\le i\le t, 1\le j\le q, a_i,a_j\in \F_q\}$, where $\beta\in \F_{q^2}\backslash \F_q$;
\item $n=(t+1)N+1$, $N|(q^2-1)$, $n_2=\frac{N}{\gcd (N,q+1)}$, $1\le t\le \frac{q-1}{n_2}-2$ and $U$ is given by (\ref{eq:multi-coset}).
\end{enumerate}
Put $S_k=\{\sum\limits_{\alpha_i\in J}\alpha_i: J\subset U,\sharp J=k\}$.
Then, for any $0\le r\le k-1$, 
\begin{enumerate}
\item there exists an MDS $[n,k, n-k+1]_{q^2}$ code  $C$  with $0\le \dim Hull_H(C)\le k-r$ 
if there exists $\eta\in\F_{q}^*$ such that $\eta^q=-\eta$ and $\eta^{-1}\not=-a,\forall a\in S_k$;
 \item there exists an AMDS $[n,k, n-k]_{q^2}$ code  $C$ with $0\le \dim Hull_H(C)\le k-r$ 
if if there exists $\eta\in\F_{q}^*$ such that $\eta^q=-\eta$ and $\eta^{-1}=-a,$ for some $a\in S_k$.
\end{enumerate}
\end{cor}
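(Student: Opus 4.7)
The plan is to read this corollary off directly from Theorem \ref{thm:TGRS-hermitian} and Lemma \ref{lem:BeePuRon}: the former already delivers, in each of the three listed arithmetic regimes on $n$, a TGRS code of length $n$ and dimension $k$ with prescribed Hermitian hull dimension $k-r$, and the latter pins down whether such a code is MDS or AMDS by a condition on $-\eta^{-1}$ and the set $S_k$.

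For a fixed $r$ with $0\le r\le k-1$, I would first set up the ingredients exactly as in the proof of Theorem \ref{thm:TGRS-hermitian}: in case (1) take $U=\{a\in \F_{q^2}:a^n=a\}$; in case (2) take $U=\{a_i\beta+a_j\mid 1\le i\le t,\,1\le j\le q\}$ for some $\beta\in \F_{q^2}\setminus \F_q$; in case (3) take $U$ as in (\ref{eq:multi-coset}). In each regime the construction also supplies a multiplier vector $\mathbf{v}=(av_1,\dots,av_r,v_{r+1},\dots,v_n)$ whose entries are built from the $(q+1)$-th roots of the $u_i=1/P_U'(\alpha_i)$, together with a twist $\eta\in \F_{q^2}^*$ satisfying $\eta^q=-\eta$. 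By the existential hypothesis of the corollary such an $\eta$ is available, so Theorem \ref{thm:TGRS-hermitian} applies and yields
\[
\dim Hull_H\bigl(TGRS_k(\alpha,\mathbf{v};1,k-1,\eta)\bigr)=k-r.
\]

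For part 1 the additional hypothesis is $\eta^{-1}\neq -s$ for every $s\in S_k$, which is precisely condition (1) of Lemma \ref{lem:BeePuRon}; hence $C:=TGRS_k(\alpha,\mathbf{v};1,k-1,\eta)$ is MDS with parameters $[n,k,n-k+1]_{q^2}$, and combining this with the hull dimension gives the desired MDS code with $\dim Hull_H(C)=k-r$. For part 2 the additional hypothesis $\eta^{-1}=-s$ for some $s\in S_k$ is precisely condition (2) of Lemma \ref{lem:BeePuRon}, so the same $C$ is AMDS with parameters $[n,k,n-k]_{q^2}$, and again the hull dimension is inherited.

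I do not anticipate any substantive obstacle: the proof reduces to invoking two earlier results whose hypotheses on $\eta$ have been arranged to be mutually compatible. The one point worth verifying in passing is the simultaneous compatibility of the trace-zero condition $\eta^q=-\eta$ (required for the Hermitian-hull argument) with the prescription on $-\eta^{-1}$ in either clause, but since the statement of the corollary itself is quantified existentially over $\eta$, this compatibility is absorbed into the hypothesis and nothing further needs to be checked.
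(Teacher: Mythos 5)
Your proposal is correct and follows exactly the route the paper intends: the paper offers no separate proof for this corollary beyond the remark that it follows ``by applying Lemma \ref{lem:BeePuRon},'' i.e., one takes the TGRS code with Hermitian hull dimension $k-r$ produced by Theorem \ref{thm:TGRS-hermitian} for the given choice of $U$ and $\eta$, and then reads off MDS versus AMDS from whether $\eta^{-1}=-s$ for some $s\in S_k$. Your closing observation about the compatibility of $\eta^q=-\eta$ with the condition on $-\eta^{-1}$ being absorbed into the existential hypothesis is exactly the right reading of the statement.
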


\subsection{Constructing Hermitian hulls of Roth-Lempel typed codes}
Recall that the Roth-Lempel typed code is an $[n+2,k]$ code which is either MDS or AMDS. The generator matrix of such a code has the following form:
\begin{equation}
G=(L:R),
\end{equation}
where 
$$
L=\left(
\begin{array}{cccccccc}
v_1&\hdots&v_{n}\\
v_1\alpha_1&\hdots&v_{n}\alpha_{n}\\
\vdots&\vdots&\vdots\\
v_1\alpha_1^{k-3}&\hdots&v_{n}\alpha_{n}^{k-3}\\
v_1\alpha_1^{k-2}&\hdots&v_{n}\alpha_{n}^{k-2}\\
v_1\alpha_1^{k-1}&\hdots&v_{n}\alpha_{n}^{k-1}\\
\end{array}
\right),
R=\left(
\begin{array}{cc}
0&0\\
0&0\\
\vdots&\vdots\\
0&0\\
1&0\\
\delta&1\\
\end{array}
\right).
$$
It should be noted that for $k\le \lfloor \frac{n-1+q}{q+1}\rfloor $, MDS $[n,k]_{q^2}$ codes, with Hermitian hull dimensions ranging between $1$ and $k$, have been constructed in\cite{FangFuLiZhu}. So for the rest of the paper, we consider $[n,k']_{q^2}$ codes with $k'>\lfloor \frac{n-1+q}{q+1}\rfloor$.
To be easier to deal with the Hermitian hull, we consider codes with a slightly different form from the Roth-Lempel typed codes, that is, by slightly changing the entries in the matrix $R$. More precisely, we consider the codes with the following generator matrix:
\begin{equation}\label{eq:gen-RL}
G_{k+2}=(L_{k+2}:R_{k+2}),
\end{equation}
where $k\le \lfloor \frac{n-1+q}{q+1}\rfloor $ and
$$
L_{k+2}=\left(
\begin{array}{cccccccc}
v_1&\hdots&v_{n}\\
v_1\alpha_1&\hdots&v_{n}\alpha_{n}\\
\vdots&\vdots&\vdots\\
v_1\alpha_1^{k-1}&\hdots&v_{n}\alpha_{n}^{k-1}\\
v_1\alpha_1^{k}&\hdots&v_{n}\alpha_{n}^{k}\\
v_1\alpha_1^{k+1}&\hdots&v_{n}\alpha_{n}^{k+1}\\
\end{array}
\right),
R_{k+2}=\left(
\begin{array}{cc}
0&0\\
0&0\\
\vdots&\vdots\\
0&0\\
\lambda_1&0\\
\delta&\lambda_2\\
\end{array}
\right),
$$

We denote the conjugate transpose of a matrix $M$ by $M^{\dag}$, that is, if $M=m_{i,j}$ Then, $M^{\dag}=m_{j,i}^q.$ 
The following lemma enables us to easily compute the Hermitian hull dimension of a linear code if its generator matrix or parity check matrix is given.
\begin{lem}\textnormal{\cite{GJG18}}
\label{lem:hull-H}
Let $C$ be an $[n,k,d]_{q^2}$ code with parity check matrix $P$ and generator matrix $G$.
Then, $\textnormal{rank}(PP^{\dag})$ and $\textnormal{rank}(GG^{\dag})$ are independent of $P$ and $G$ so that
$$
\begin{array}{ll}
\textnormal{rank}(PP^{\dag})&=n-k-\dim(Hull_H(C)) \\
&= n-k-\dim(Hull_H(C^{\perp_H})),
\end{array}
$$
and
$$
\begin{array}{ll}
\textnormal{rank}(GG^{\dag})&=k-\dim(Hull_H(C)) \\
&= k-\dim(Hull_H(C^{\perp_H})).
\end{array}
$$
\end{lem}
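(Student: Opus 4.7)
The plan is to reduce both rank identities to a single computation on the generator matrix, and then resolve the three hull-dimension equalities using the interplay between the Euclidean and Hermitian duals.

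First I would set up the key identification on the generator side. Writing any codeword as $c = uG$ with $u \in \F_{q^2}^k$ and using $G^{\dag} = (G^{(q)})^T$ (entrywise $q$-th power followed by transpose), a direct unwinding of the definition of $\langle \cdot, \cdot \rangle_H$ shows that $c \in C^{\perp_H}$ iff $c\,G^{\dag} = 0$. Consequently, $c \in Hull_H(C)$ iff $u (GG^{\dag}) = 0$. Since $u \mapsto uG$ is injective, this identifies $Hull_H(C)$ with the left null space of $GG^{\dag}$, giving
\[
\textnormal{rank}(GG^{\dag}) = k - \dim Hull_H(C).
\]
Independence from the choice of $G$ is automatic: any other generator matrix is $G' = SG$ for some invertible $S$, and $G'(G')^{\dag} = S(GG^{\dag})S^{\dag}$ has the same rank.

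Next I would apply exactly the same argument to $P$, viewed now as a generator matrix of $C^{\perp_E}$ (which has dimension $n-k$). This yields
\[
\textnormal{rank}(PP^{\dag}) = (n-k) - \dim Hull_H(C^{\perp_E}),
\]
together with independence from the choice of $P$ by the same $S(\cdot)S^{\dag}$ argument.

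The remaining task is the identification of hull dimensions. The first equality $\dim Hull_H(C) = \dim Hull_H(C^{\perp_H})$ is immediate from $(C^{\perp_H})^{\perp_H} = C$, since then $Hull_H(C^{\perp_H}) = C^{\perp_H} \cap C = Hull_H(C)$. For the other equality, I would use the identity $C^{\perp_H} = (C^{(q)})^{\perp_E}$ (straightforward from the definitions of the two inner products and the Frobenius automorphism), which, applied to $C^{\perp_E}$ in place of $C$, gives $(C^{\perp_E})^{\perp_H} = C^{(q)}$, so that
\[
Hull_H(C^{\perp_E}) = C^{\perp_E} \cap C^{(q)}.
\]
Applying the coordinatewise Frobenius $x \mapsto x^{(q)}$ to $Hull_H(C) = C \cap (C^{(q)})^{\perp_E}$ and using $C^{(q^2)} = C$, one checks that its image is exactly $C^{(q)} \cap C^{\perp_E} = Hull_H(C^{\perp_E})$. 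Since Frobenius is an $\F_q$-linear bijection on $\F_{q^2}^n$, the two hulls have the same $\F_{q^2}$-dimension.

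The only step that needs a bit of care is this last Frobenius argument, specifically the identity $C^{\perp_H} = (C^{(q)})^{\perp_E}$ and the fact that the coordinatewise $q$-th power, although only $\F_q$-semilinear, still preserves $\F_{q^2}$-dimension of subspaces. Everything else is a routine null-space/rank count, and combining the three displays above finishes the proof.
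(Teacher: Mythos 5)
The paper itself gives no proof of this lemma; it is imported verbatim from \cite{GJG18}, so there is nothing internal to compare against. Your argument is a correct and complete proof, and it follows the standard route: identifying $Hull_H(C)$ with the image of the left null space of $GG^{\dag}$ under $u\mapsto uG$ gives $\textnormal{rank}(GG^{\dag})=k-\dim Hull_H(C)$, the conjugation-by-$S$ argument gives independence of the choice of matrix, and $Hull_H(C^{\perp_H})=C^{\perp_H}\cap C=Hull_H(C)$ disposes of one of the two hull equalities for free. You also correctly isolate and resolve the only genuinely delicate point: because $P$ generates the \emph{Euclidean} dual while the hull is \emph{Hermitian}, one must still show $\dim Hull_H(C^{\perp_E})=\dim Hull_H(C)$, and your coordinatewise-Frobenius computation $\bigl(Hull_H(C)\bigr)^{(q)}=C^{(q)}\cap C^{\perp_E}=Hull_H(C^{\perp_E})$, together with the fact that the $q$-power map is a semilinear bijection preserving $\F_{q^2}$-dimension, does exactly that.
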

The following lemma is useful for constructing an GRS code that is either Euclidean self-orthogonal \cite{JinXin} or Hermitian self-orthogonal \cite{FangFu,SokQSC}.
\begin{lem}\label{lem:existence} Put $\alpha=(\alpha_1,\hdots, \alpha_n)$ and ${\bf v}=(v_1,\hdots,v_n)$. 
\begin{enumerate}
\item If $u_i=v_i^2$ for $1\le i\le n$ and $k\le n/2$, then the code $GRS_{k}(\alpha,{\bf v})$ is Euclidean self-orthogonal over $\F_{q}$.
\item If $u_i=v_i^{q+1}$ for $1\le i\le n$ and $k\le \lfloor \frac{n-1+q}{q+1}\rfloor $, then the code $GRS_{k}(\alpha,{\bf v})$ is Hermitian self-orthogonal over $\F_{q^2}$.
\end{enumerate}
\end{lem}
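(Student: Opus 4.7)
The plan is to leverage the known structure of the dual of a GRS code. Recall (as a specialization of Proposition~\ref{prop:E-dual} with $\eta=0$, or by the standard fact) that
$$GRS_{k}(\alpha,{\bf v})^{\perp_E}=GRS_{n-k}(\alpha,{\bf u}/{\bf v}),$$
where ${\bf u}=(u_1,\ldots,u_n)$ with $u_i=\prod_{j\neq i}(\alpha_i-\alpha_j)^{-1}$. Both self-orthogonality claims will follow by rewriting the right-hand side using the hypothesis on $v_i$, then checking that an arbitrary codeword of $GRS_k(\alpha,{\bf v})$ (or its $q$-th power) can be expressed as an evaluation of a polynomial of the correct degree.

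For part (1), the hypothesis $u_i=v_i^2$ gives $u_i/v_i=v_i$, so $GRS_{k}(\alpha,{\bf v})^{\perp_E}=GRS_{n-k}(\alpha,{\bf v})$. Any ${\bf c}\in GRS_{k}(\alpha,{\bf v})$ has the form $(v_i f(\alpha_i))_{1\le i\le n}$ with $\deg f\le k-1\le n-k-1$ (using $k\le n/2$), hence ${\bf c}$ also lies in $GRS_{n-k}(\alpha,{\bf v})$. Therefore $GRS_{k}(\alpha,{\bf v})\subseteq GRS_{k}(\alpha,{\bf v})^{\perp_E}$.

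For part (2), I will use the standard reduction $C\subseteq C^{\perp_H}\iff C^{(q)}\subseteq C^{\perp_E}$, where $C^{(q)}=\{(c_1^q,\ldots,c_n^q):(c_1,\ldots,c_n)\in C\}$. Under the hypothesis $u_i=v_i^{q+1}$, one has $u_i/v_i=v_i^q$, so $GRS_k(\alpha,{\bf v})^{\perp_E}=GRS_{n-k}(\alpha,{\bf v}^q)$, where ${\bf v}^q=(v_1^q,\ldots,v_n^q)$. An arbitrary codeword of $C=GRS_k(\alpha,{\bf v})$ has the form $(v_if(\alpha_i))_i$ with $\deg f\le k-1$, and its $q$-th power is $(v_i^q f(\alpha_i)^q)_i$. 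Since $f^q$ is a polynomial in $x^q$ evaluated at $\alpha_i$, its degree is at most $q(k-1)$. Thus ${\bf c}^{(q)}\in GRS_{n-k}(\alpha,{\bf v}^q)$ provided $q(k-1)\le n-k-1$, i.e. $k(q+1)\le n+q-1$, which is exactly the stated bound $k\le\lfloor\frac{n-1+q}{q+1}\rfloor$.

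The main (very minor) obstacle is bookkeeping the degree counts and the identification of the twisting vectors: one must correctly simplify ${\bf u}/{\bf v}$ in each case and match the bound on $\deg f$ (respectively $\deg f^q$) against $n-k-1$, the maximum degree allowed by the dual code. No deeper ingredient is required; the argument is purely dimensional once the dual formula for GRS codes is in hand.
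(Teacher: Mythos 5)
Your argument is correct. Note that the paper itself gives no proof of this lemma; it is stated with citations to the literature (\cite{JinXin} for the Euclidean case, \cite{FangFu,SokQSC} for the Hermitian case), so there is nothing to compare line by line. Your route — specializing the dual formula to $GRS_{k}(\alpha,{\bf v})^{\perp_E}=GRS_{n-k}(\alpha,{\bf u}/{\bf v})$, simplifying ${\bf u}/{\bf v}$ to ${\bf v}$ (resp.\ ${\bf v}^q$) under the hypothesis on the $v_i$, and then checking degrees — is the standard one and all the bookkeeping checks out: in part (1), $\deg f\le k-1\le n-k-1$ follows from $k\le n/2$; in part (2), the conjugated codeword is the evaluation of $\sum_j a_j^q x^{jq}$, of degree at most $q(k-1)$, and $q(k-1)\le n-k-1$ is exactly $k\le\lfloor\frac{n+q-1}{q+1}\rfloor$. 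The only cosmetic caveat is that Proposition~\ref{prop:E-dual} as stated requires $\eta\in\F_q^*$, so one should invoke the classical GRS dual formula directly rather than "$\eta=0$" literally; an equally common alternative in the cited sources is to verify $\langle g_i,g_j\rangle=\sum_l u_l\alpha_l^{m}=0$ for $0\le m\le n-2$ row by row, which is the same computation in generator-matrix form.
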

We are now ready to provide existence of an (MDS) $[n,k]$ code $C$ with $k> \lfloor \frac{n+q-1}{q+1} \rfloor$ such that its Hermitian hull dimension $\dim(Hull_H(C))$ can be computed explicitly.
\begin{thm}\label{thm:embedding-new}Assume that $(n-1)|(q^2-1)$ and $1\le k\le \lfloor \frac{n+q-1}{q+1} \rfloor$. 
Then, there exists a linear code $C$ with parameters
 $[n+2,k+2,\ge n-k]_{q^2}$ such that $\dim(Hull_H(C))=k+2-i$ for $i=1,2$ if $(n-1)|(k+1)(q+1)$ or $(n-1)|k(q+1)$.
\end{thm}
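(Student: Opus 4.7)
The plan is to apply Lemma \ref{lem:hull-H} to reduce computing $\dim(Hull_H(C))$ to that of $\textnormal{rank}(G_{k+2}G_{k+2}^{\dag})$. Following the setup of Theorem \ref{thm:TGRS-hermitian}(1), I would take the evaluation points $\alpha_1,\ldots,\alpha_n$ to be the elements of $U = \{a \in \F_{q^2} : a^n = a\}$ (possible by $(n-1) \mid q^2 - 1$) with $\alpha_n = 0$, set $u_i = 1/P_U'(\alpha_i) \in \F_q^*$, and pick $v_i \in \F_{q^2}^*$ with $v_i^{q+1} = u_i$ using surjectivity of the $(q+1)$-norm. By Lemma \ref{lem:existence}(2), since $k \le \lfloor (n+q-1)/(q+1) \rfloor$, the first $k$ rows of $L_{k+2}$ generate a Hermitian self-orthogonal GRS code, which (padded by the two zeros of the first $k$ rows of $R_{k+2}$) contributes $k$ linearly independent hull vectors; hence $\textnormal{rank}(G_{k+2}G_{k+2}^{\dag}) \le 2$, and the task reduces to choosing $\lambda_1, \lambda_2, \delta$ so this rank equals $1$ or $2$.

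The heart of the argument is the entrywise evaluation of $L_{k+2}L_{k+2}^{\dag}$ via the identity
\[
\sum_{l=1}^n u_l \alpha_l^m = \begin{cases} 1, & m \ge 1 \text{ and } (n-1) \mid m, \\ 0, & \text{otherwise}, \end{cases}
\]
obtained by summing over the $(n-1)$-st roots of unity in $\F_{q^2}^*$. The bound $(k-1)(q+1) \le n-2$ (from the hypothesis on $k$) forces the top-left $k \times k$ block of $L_{k+2}L_{k+2}^{\dag}$ to vanish. For the entries having at least one index in $\{k, k+1\}$, the exponent $i + jq$ reduces modulo $n-1$ to residues such as $i-k$, $i-k-1$, $\pm q$, or $\pm 1$ after applying the divisibility hypothesis and $\gcd(n-1, q) = 1$; the implicit constraint $n \le q^2$ (needed so that $|U| = n$) forces $k < q$ in non-trivial regimes, ensuring these residues are nonzero modulo $n-1$ and the corresponding entries vanish. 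What remains are $(L_{k+2}L_{k+2}^{\dag})_{k,k} = \epsilon_1$ and $(L_{k+2}L_{k+2}^{\dag})_{k+1, k+1} = \epsilon_2$, where $\epsilon_1 = 1$ iff $(n-1) \mid k(q+1)$ and $\epsilon_2 = 1$ iff $(n-1) \mid (k+1)(q+1)$; at least one equals $1$ by hypothesis.

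Consequently $\textnormal{rank}(G_{k+2}G_{k+2}^{\dag}) = \textnormal{rank}(M)$ where
\[
M = \begin{pmatrix} \epsilon_1 + \lambda_1^{q+1} & \lambda_1 \delta^q \\ \delta \lambda_1^q & \epsilon_2 + \delta^{q+1} + \lambda_2^{q+1} \end{pmatrix}.
\]
To realize $\textnormal{rank}(M) = 2$ (hull dim $k$), I would set $\delta = 0$ and pick $\lambda_1, \lambda_2 \in \F_{q^2}^*$ with $\lambda_i^{q+1} \ne -\epsilon_i$, possible by surjectivity of the $(q+1)$-norm $\F_{q^2}^* \to \F_q^*$; to realize $\textnormal{rank}(M) = 1$ (hull dim $k+1$), I would set $\lambda_1 = 0$, fix $\lambda_2 \in \F_{q^2}^*$, and choose $\delta$ so that $\epsilon_2 + \delta^{q+1} + \lambda_2^{q+1} = 0$ (when $\epsilon_1 = 1$) or $\ne 0$ (when $\epsilon_1 = 0$), again by norm surjectivity. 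The distance bound $d \ge n - k$ follows from the standard polynomial-zero argument: a nonzero codeword comes from a polynomial of degree $\le k+1$ vanishing at most $k+1$ of the $\alpha_i$, and the two trailing coordinates provide an additional nonzero entry when $\lambda_1, \lambda_2 \ne 0$. The main obstacle is the case-by-case verification that all cross and off-diagonal entries of $L_{k+2}L_{k+2}^{\dag}$ vanish under the divisibility hypotheses, which requires careful modular analysis and may need additional genericity assumptions on $(n, k, q)$ to handle small edge cases where exponents such as $(k+1)q$ can accidentally land on a multiple of $n-1$.
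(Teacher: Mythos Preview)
Your approach is essentially the paper's: same evaluation set $U=\{a\in\F_{q^2}:a^n=a\}$ with $\alpha_n=0$, same use of Lemma~\ref{lem:existence}(2) to make the top $k\times k$ block of $L_{k+2}L_{k+2}^{\dag}$ vanish, same reduction via Lemma~\ref{lem:hull-H} to $\mathrm{rank}(G_{k+2}G_{k+2}^{\dag})$, and the same root-of-unity identity $\sum_l u_l\alpha_l^{m}=[m>0\text{ and }(n-1)\mid m]$ for the remaining entries.

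The only substantive difference is how rank $1$ is achieved. The paper keeps $\delta=0$ throughout and both $\lambda_1,\lambda_2\neq 0$; in the case $(n-1)\mid(k+1)(q+1)$ it obtains
\[
G_{k+2}G_{k+2}^{\dag}=\mathrm{diag}\bigl(0,\ldots,0,\ \lambda_1^{q+1},\ \lambda_2^{q+1}-v_n^{q+1}\bigr)
\]
and sets $\lambda_2=\pm v_n$ to kill the last entry (and symmetrically in the other case). Your alternative of putting $\lambda_1=0$ and solving for $\delta$ also gives rank $1$, but it makes the two appended columns of $G_{k+2}$ proportional, so your distance sketch ``the two trailing coordinates provide an additional nonzero entry when $\lambda_1,\lambda_2\ne0$'' no longer applies verbatim in that case; the paper's choice preserves the Roth--Lempel shape with two independent extra columns, and $d\ge n-k$ follows more directly.

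As for your flagged obstacle on the cross terms: the paper carries out precisely the modular-reduction check you outline, splitting into the two cases $(n-1)\mid k(q+1)$ and $(n-1)\mid(k+1)(q+1)$ and verifying in each that $\langle g_i,g_{k+1}\rangle_H=\langle g_i,g_{k+2}\rangle_H=0$ for $i\le k$ and $\langle g_{k+1},g_{k+2}\rangle_H=0$; it does not introduce any further genericity assumptions.
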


\begin{proof} Take $U_{n-1}=\{ \alpha\in \F_{q^2}|\alpha^{n-1}=1\}$, and label the elements of $U_{n-1}$ as $\alpha_1,\hdots,\alpha_{n-1}$, as well as set $\alpha_n=0$. From Lemma \ref{lem:existence} and under the assumption in the theorem, the code $GRS_{k}(\alpha,{\bf v})$ is Hermitian self-orthogonal with parameters $[n,k]_{q^2}$, where $\alpha=(\alpha_1,\hdots, \alpha_n)$ and ${\bf v}=(v_1,\hdots,v_n$ with $u_j=v_j^{q+1}$ for $1\le j\le n$. 
Consider the code $C_{k+2}$ with generator matrix $G_{k+2}$ written as in (\ref{eq:gen-RL}) such that $\delta=0$. 
For $i=1,2$, we want to show that there exist $\lambda_1,\lambda_2\in \F_{q^2}^*$ such that $\text{rank}(G_{k+2}G_{k+2}^\dag)=i$.

Let $g_i$ be the $i$-th row of $L_{k+2}$. Consider the two following cases:\\

\noindent
{\bf Case $(n-1)|(k+1)(q+1)$:}
\begin{enumerate}[(a)]
\item It is easy to check that $<g_{i},g_{k+1}>_H=<g_{i},g_{k+2}>_H=0$ for $1\le i\le k$. 
\item It follows after taking the exponent of $\alpha_i$ modulo $n-1$ that $<g_{k+1},g_{k+1}>_H=\sum\limits_{l=1}^nv_l^{q+1}\alpha_l^{j}=0$, where $0< j\le n-2$. 
\item Under the condition $(n-1)|(k+1)(q+1)$, we have that $(\alpha_1^{(k+1)(q+1)},\hdots, \alpha_n^{(k+1)(q+1)})=(1,\hdots,1,0)$, and so 
$<g_{k+2},g_{k+2}>_H=v_1^{q+1}+\hdots+ v_{n-1}^{q+1}=-v_{n}^{q+1}\not=0$.
\item We now check that $<g_{k+1},g_{k+2}>_H=0$, that is, 
\begin{equation}
\sum\limits_{i=1}^n\alpha_i^{kq+k+1}v_i^{q+1}=0.
\label{eq:gkandgk1}
\end{equation}
\begin{itemize}
\item If $k(q+1)+1\le n-1$, then $k(q+1)+1\le n-2$, and thus $<g_{k+1},g_{k+2}>_H=0$.
\item  If $k(q+1)+1 > n-1$, then we can write $k(q+1)+1=(n-1)A+B$ with $0\le B<(n-1)$ and thus $<g_{k+1},g_{k+2}>_H=0$.
\end{itemize}
\end{enumerate}
\begin{itemize}
\item By choosing $\lambda_1\not=0$ and $\lambda_2\not=\pm v_n$, we obtain that 
$$
\begin{array}{ll}
G_{k+2}G_{k+2}^\dag&=L_{k+2}L_{k+2}^\dag+R_{k+2}R_{k+2}^\dag\\
&=\text{diag} (0,\hdots,0,\lambda_1^{q+1}, \lambda_2^{q+1}-v_n^{q+1}).
\end{array}
$$
It follows that $\dim(Hull_H(C))=k+2-\text{rank}(G_{k+2}G_{k+2}^\dag)=k$.
\item By choosing $\lambda_1\not=0$ and $\lambda_2=\pm v_n$, we obtain that $\dim(Hull_H(C))=k+2-\text{rank}(G_{k+2}G_{k+2}^\dag)=k+1$.
\end{itemize}
\noindent
{\bf Case $(n-1)|k(q+1)$:}
\begin{enumerate}[(a)]
\item It is easy to check that $<g_{i},g_{k+1}>_H=<g_{i},g_{k+2}>_H=0$ for $1\le i\le k$. 
\item Under the condition $(n-1)|k(q+1)$, we have that $(\alpha_1^{k(q+1)},\hdots, \alpha_n^{k(q+1)})=(1,\hdots,1,0)$, and so 
$<g_{k+1},g_{k+1}>_H=v_1^{q+1}+\hdots+ v_{n-1}^{q+1}=-v_{n}^{q+1}\not=0$.

\item It follows after taking the exponent of $\alpha_i$ modulo $n-1$ that $<g_{k+2},g_{k+2}>_H=\sum\limits_{l=1}^nv_l^{q+1}\alpha_l^{j}=0$, where $0< j\le n-2$. 
\item The fact that $<g_{k+1},g_{k+2}>_H=0$ follows from the same reasoning 
as in the proof of  point (d) of the case $(n-1)|(k+1)(q+1)$.
\end{enumerate}
\begin{itemize}
\item By choosing $\lambda_2\not=0$ and $\lambda_1\not=\pm v_n$, we obtain that 
$$
\begin{array}{ll}
G_{k+2}G_{k+2}^\dag&=L_{k+2}L_{k+2}^\dag+R_{k+2}R_{k+2}^\dag\\
&=\text{diag} (0,\hdots,0,\lambda_1^{q+1}-v_n^{q+1}, \lambda_2^{q+1}).
\end{array}
$$
It follows that $\dim(Hull_H(C))=k+2-\text{rank}(G_{k+2}G_{k+2}^\dag)=k$.
\item By choosing $\lambda_2\not=0$ and $\lambda_1=\pm v_n$, we obtain that $\dim(Hull_H(C))=k+2-\text{rank}(G_{k+2}G_{k+2}^\dag)=k+1$.
\end{itemize}
\end{proof}

By puncturing the last two coordinates of the codes in Theorem \ref{thm:embedding-new}, we obtain the following result.
\begin{cor}
Assume that $(n-1)|(q^2-1)$ and $1\le k\le \lfloor \frac{n+q-1}{q+1} \rfloor$. 
Then, there exists an MDS linear code $C$ with parameters
 $[n,k+2,\ge n-k-1]_{q^2}$ such that $\dim(Hull_H(C))=k+1$ if $(n-1)|(k+1)(q+1)$ or $(n-1)|k(q+1)$.
\end{cor}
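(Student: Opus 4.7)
The corollary is a direct consequence of Theorem \ref{thm:embedding-new} via puncturing, and the plan is to verify both the parameters and the Hermitian hull dimension of the punctured code.

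First, I would start from the code $C_{k+2}$ constructed in Theorem \ref{thm:embedding-new}, with generator matrix $G_{k+2} = (L_{k+2}\,:\,R_{k+2})$ as in (\ref{eq:gen-RL}). Puncturing the last two coordinates amounts to deleting the block $R_{k+2}$, so the punctured code $C$ is generated by $L_{k+2}$ alone. The matrix $L_{k+2}$ is the standard generator matrix of the GRS code $\mathrm{GRS}_{k+2}(\alpha,\mathbf{v})$ of length $n$ at the evaluation points $\alpha_1,\ldots,\alpha_n$ with column multipliers $v_1,\ldots,v_n$. Since $k+2\le n$ follows from the hypothesis $k\le \lfloor (n+q-1)/(q+1)\rfloor$, the rows of $L_{k+2}$ are linearly independent and $C$ is an MDS code with parameters $[n,k+2,n-k-1]_{q^2}$.

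Next, to pin down the Hermitian hull dimension, I would apply Lemma \ref{lem:hull-H}, which reduces the question to computing $\mathrm{rank}(L_{k+2}L_{k+2}^{\dag})$. The inner products between the rows of $L_{k+2}$ have essentially already been computed in the proof of Theorem \ref{thm:embedding-new}. Denote by $g_i$ the $i$-th row of $L_{k+2}$. By Lemma \ref{lem:existence}, the first $k$ rows generate the Hermitian self-orthogonal code $\mathrm{GRS}_k(\alpha,\mathbf{v})$, so $\langle g_i,g_j\rangle_H = 0$ for $1\le i,j\le k$. The calculations in items (a)--(d) of the proof of Theorem \ref{thm:embedding-new} further yield $\langle g_i,g_{k+1}\rangle_H = \langle g_i,g_{k+2}\rangle_H = 0$ for $1\le i\le k$, and $\langle g_{k+1},g_{k+2}\rangle_H = 0$. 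Under either divisibility hypothesis, exactly one of the diagonal entries $\langle g_{k+1},g_{k+1}\rangle_H$ or $\langle g_{k+2},g_{k+2}\rangle_H$ equals $-v_n^{q+1}\neq 0$ while the other is zero, depending on whether $(n-1)\mid k(q+1)$ or $(n-1)\mid (k+1)(q+1)$. Consequently $L_{k+2}L_{k+2}^{\dag}$ has exactly one nonzero entry and hence rank $1$, so $\dim(Hull_H(C)) = (k+2) - 1 = k+1$.

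The principal obstacle is the careful bookkeeping on the Vandermonde-style sums $\sum_l v_l^{q+1}\alpha_l^{i-1+(j-1)q}$: one must verify that the exponents $i-1+(j-1)q$ avoid multiples of $n-1$ in the relevant ranges while the single ``bad'' exponent occurs exactly once on the diagonal under the divisibility hypothesis. All of this is already discharged inside the proof of Theorem \ref{thm:embedding-new}, so the corollary truly follows by puncturing. It is worth observing that the Hermitian hull dimension happens to be preserved under this puncturing precisely because, in Theorem \ref{thm:embedding-new} with the choice $i=1$, the block $R_{k+2}$ was tuned (via $\lambda_2=\pm v_n$ or $\lambda_1=\pm v_n$) to cancel rather than add rank to $L_{k+2}L_{k+2}^{\dag}$.
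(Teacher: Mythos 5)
Your proposal is correct and follows essentially the same route as the paper: the paper's proof consists precisely of puncturing the last two coordinates of the code of Theorem \ref{thm:embedding-new}, leaving the GRS code generated by $L_{k+2}$, whose Gram matrix $L_{k+2}L_{k+2}^{\dag}$ has rank $1$ by the computations (a)--(d) already carried out there, so Lemma \ref{lem:hull-H} gives $\dim(Hull_H(C))=k+2-1=k+1$. Your filled-in details (linear independence of the rows of $L_{k+2}$, MDS-ness of the punctured code, and the single nonzero diagonal entry $-v_n^{q+1}$) are exactly what the paper leaves implicit.
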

Next, we consider the code with dimension $k+i$  strictly greater than $ \lfloor \frac{n-1+q}{q+1}\rfloor+2$, that is, the code $C_{k+i}$ with the following generator matrix $G_{k+i}$:
\begin{equation}
G_{k+i}=(L_{k+i}:R_{k+i}),
\label{eq:Gi}
\end{equation}
where
{\scriptsize
\begin{equation}
L_{k+i}=\left(
\begin{array}{cccccccc}
v_1&\hdots&v_{n}\\
v_1\alpha_1&\hdots&v_{n}\alpha_{n}\\
\vdots&\vdots&\vdots\\
v_1\alpha_1^{k-1}&\hdots&v_{n}\alpha_{n}^{k-1}\\
v_1\alpha_1^{k}&\hdots&v_{n}\alpha_{n}^{k}\\
v_1\alpha_1^{k+1}&\hdots&v_{n}\alpha_{n}^{k+1}\\
\vdots&\vdots&\vdots\\
v_1\alpha_1^{k+i-2}&\hdots&v_{n}\alpha_{n}^{k+i-2}\\
v_1\alpha_1^{k+i-1}&\hdots&v_{n}\alpha_{n}^{k+i-1}\\
\end{array}
\right),
R_{k+i}=\left(
\begin{array}{cc}
0&0\\
0&0\\
\vdots&\vdots\\
0&0\\
0&0\\
0&0\\
\vdots&\vdots\\
\lambda_1&0\\
\delta&\lambda_2\\
\end{array}
\right).
\end{equation}
}
In the sequel, we denote
\begin{equation}\label{eq:delta_t}
\delta_{t-1}:=<g_t,g_t>_H=\sum\limits_{j=1}^nv_j^{q+1}\alpha_j^{(t-1)(q+1)},
\end{equation}
where $g_t$ is the $t$-th row of the generator matrix of $GRS_{k+i}(\alpha,{\bf v})$,
\begin{equation}
\Delta_{k+i}=\{j|\delta_j\not =0,0\le j\le k+i-1\},
\end{equation}
and write $\sharp \Delta_{k+i}$ for the size of $\Delta_{k+i}$.

\begin{lem}
\label{lem:2} Let $q=p^m$ be a prime power and $n$ be a positive integer such that $(n-1)|(q^2-1)$. 
Let $\alpha_1,\hdots,\alpha_{n-1}$ be $n-1$ distinct elements of $\F_{q^2}$ such that $\alpha_i^{n-1}=1$. Let $k_0'$ be the smallest integer such that $(n-1)|k_0'(q+1)$. Assume that the parameters of the $GRS_{k_0'}(\alpha,{\bf v})$ code, where $\alpha=(\alpha_1,\hdots,\alpha_{n-1},\alpha_n)$  with $\alpha_n=0$ and ${\bf v}=(v_1,\hdots,v_n)$, satisfy $u_i=v_i^{q+1}$ for $1\le i\le n$.

Then, the code $GRS_{k_0'+1}(\alpha,{\bf v})$ code has a generator matrix $(g_1^\top,\hdots, g_{k'_0+1}^\top)$ satisfying the following properties:
\begin{enumerate}
\item $\langle g_i,g_j\rangle_H=0$ for any $1\le i\not=j\le k_0' +1$;
\item $\langle g_{k_0'+1},g_{k_0'+1}\rangle_H\not=0$ if $(n-1)|k_0'(q+1)$;
\item $\langle g_{k_0'+1},g_{k_0'+1}\rangle_H=0$ if $(n-1){\not|}k_0'(q+1)$.
\end{enumerate}
\end{lem}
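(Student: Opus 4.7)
The plan is a direct evaluation of the Hermitian pairings $\langle g_i, g_j\rangle_H$ of the rows $g_i = (v_1\alpha_1^{i-1}, \ldots, v_n\alpha_n^{i-1})$ of the generator matrix, reducing each to a power sum over $(n-1)$-st roots of unity. First I would exhibit $u_l$ explicitly: since $P_U(x) = x(x^{n-1}-1) = x^n - x$ gives $P_U'(x) = nx^{n-1}-1$, one has $u_l = 1/(n-1)$ for $1 \leq l \leq n-1$ and $u_n = -1$. The hypothesis $u_l = v_l^{q+1}$ then lets me rewrite
\[
\langle g_i, g_j \rangle_H = \sum_{l=1}^n u_l\, \alpha_l^{(i-1)+q(j-1)}.
\]
Splitting off the $l=n$ term (nonzero only at $i=j=1$) and applying $\sum_{l=1}^{n-1}\alpha_l^m = (n-1)\,[\,(n-1)\mid m\,]$, I obtain $\langle g_i, g_j \rangle_H = 1$ exactly when $(n-1)\mid (i-1)+q(j-1)$ and $(i,j)\neq(1,1)$, and $0$ otherwise.

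Claim (2) then falls out immediately: at $i=j=k_0'+1$ the exponent is $k_0'(q+1)$, a multiple of $n-1$ by the defining property of $k_0'$, so $\langle g_{k_0'+1},g_{k_0'+1}\rangle_H = 1 \neq 0$. Claim (3), whose hypothesis negates the very definition of $k_0'$, is a formal complement and follows from the same dichotomy. For the substantive claim (1) I would split the off-diagonal pairs into two groups. For $1 \leq i \neq j \leq k_0'$, I apply Lemma \ref{lem:existence}(2) to dimension $k_0'$ (the implicit bound $k_0' \leq \lfloor(n-1+q)/(q+1)\rfloor$ being inherited from the setup of Theorem \ref{thm:embedding-new}) to deduce that $GRS_{k_0'}(\alpha,{\bf v})$ is Hermitian self-orthogonal, hence every such pairing vanishes. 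For pairs in which exactly one index equals $k_0'+1$, I would exploit the congruence $q k_0' \equiv -k_0' \pmod{n-1}$ (immediate from $(n-1)\mid (q+1)k_0'$) together with $\gcd(q,n-1)=1$ (inherited from $(n-1)\mid q^2-1$): the relevant exponent $(i-1)+q k_0'$ reduces to $(i-1)-k_0'$, a nonzero integer of absolute value at most $k_0'$, strictly less than $n-1$, and so not divisible by $n-1$. The Frobenius-symmetry of the Hermitian pairing (it vanishes in one order iff in the other) then handles pairs of the form $(k_0'+1, j)$.

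The main obstacle I anticipate is the mild nondegeneracy check $k_0' < n-1$, equivalent to $d := \gcd(q+1,n-1) > 1$: if $d = 1$ then $k_0' = n-1$ and $GRS_{k_0'+1}(\alpha,{\bf v})$ fills the ambient space, a boundary case that can be excluded at the outset. Modulo this routine check, the whole argument amounts to tracking residues mod $n-1$, combined with the defining minimality of $k_0'$ and the self-orthogonality already afforded by Lemma \ref{lem:existence}(2).
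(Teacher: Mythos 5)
Your proof is correct, and it reaches the key identity by a more direct route than the paper. Both arguments boil down to evaluating $\langle g_i,g_j\rangle_H=\sum_{l}v_l^{q+1}\alpha_l^{(i-1)+q(j-1)}$ and deciding when the exponent is divisible by $n-1$; the difference is in how the power-sum identities $\sum_l v_l^{q+1}\alpha_l^{e}=0$ (for $1\le e\le n-2$) are obtained. The paper introduces the auxiliary vector $z_i=v_i^{(q+1)/2}$, invokes the Euclidean self-orthogonality of $GRS_{k'}(\alpha,\mathbf z)$, and reads the identities off from that code (a detour that tacitly assumes $q$ odd so that $(q+1)/2$ is an integer), whereas you compute $u_l$ explicitly ($u_l=1/(n-1)$ for $l\le n-1$, $u_n=-1$) and get the full dichotomy $\langle g_i,g_j\rangle_H=1$ iff $(n-1)\mid (i-1)+q(j-1)$ and $(i,j)\ne(1,1)$ from the elementary character sum over the $(n-1)$-st roots of unity; this is cleaner and works uniformly in $q$. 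Your treatment is also more careful on two points where the paper is silent: you explicitly dispose of the off-diagonal pairs with one index equal to $k_0'+1$ via $qk_0'\equiv -k_0'\pmod{n-1}$ (the paper's point 1 only argues the case $1\le i\ne j\le k_0'$), and you correctly flag that some upper bound on $k_0'$, e.g.\ $k_0'\le\lfloor\frac{n+q-1}{q+1}\rfloor$ as in Lemma \ref{lem:existence}, is genuinely needed — without it point 1 can fail (for $q=5$, $n=9$ one has $k_0'=4$ and $\langle g_4,g_2\rangle_H\ne0$), so the hypothesis you call ``inherited from the setup'' is not cosmetic but is exactly what the intended application in Theorem \ref{thm:extended} supplies via $(n-1)\mid k(q+1)$ with $k=\lfloor\frac{n+q-1}{q+1}\rfloor$. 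The only cosmetic remark is that once that bound is in force, the divisibility check $1\le (i-1)+q(j-1)\le (k_0'-1)(q+1)\le n-2$ already handles the pairs $i\ne j\le k_0'$ directly, so your appeal to Lemma \ref{lem:existence}(2) could be dropped.
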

\begin{proof} Assume that $u_i=v_i^{q+1}$ for $1\le i\le n$.
Set $z_i=v_i^{\frac{q+1}{2}}$ for $1\le i\le n$. Since $u_i={z_i}^2$ for $1\le i\le n$, for any $k'\le \frac{n}{2}$, the code $GRS_{k'}(\alpha,{\bf z})$, where ${\bf z}=(z_1,\hdots,z_n)$, is Euclidean self-orthogonal with parameters $[n,k']_{q^2}$. Consider a generator matrix of $GRS_{k'}(\alpha,{\bf z})$ with the following form:
\begin{equation}
{\cal G}'=\left(
\begin{array}{ccccc}
z_1&z_2&\hdots&z_{n}\\
z_1\alpha_1&z_2\alpha_2&\hdots&z_{n}\alpha_{n}\\
\vdots&\vdots&\vdots&\vdots\\
z_1\alpha_1^{k'-1}&z_2\alpha_2^{k'-1}&\hdots&v_{n}\alpha_{n}^{k'-1}\\
\end{array}
\right)
=\left(
\begin{array}{c}
{g}'_1\\
{g}'_2\\
\vdots\\
{g}'_{k'}\\
\end{array}
\right).
\end{equation}
By the Euclidean self-orthogonality of the code $GRS_{k'}(\alpha,{\bf z})$, it follows that $$\langle {g}'_1,{g}'_1\rangle_E=\sum\limits_{l=1}^n{z_l^2}=0,$$ and for $1\le i\le k',2\le j\le k'$,
$$\langle {g}'_i,{g}'_j\rangle_E=\sum\limits_{l=1}^n{z_l^2\alpha_l^{(i-1)+(j-1)}}=0,$$
equivalently,
\begin{equation}\label{eq:E0}
\sum\limits_{l=1}^nv_l^{q+1}=0,
\end{equation}
\begin{equation}\label{eq:E1}
\sum\limits_{l=1}^nv_l^{q+1}\alpha_l^{i}=0\text{ for }1\le i \le n-2.
\end{equation}
Denote the generator matrix of $GRS_{k_0'}(\alpha,{\bf v})$ by
\begin{equation}
{\cal G}=\left(
\begin{array}{ccccc}
v_1&v_2&\hdots&v_{n}\\
v_1\alpha_1&v_2\alpha_2&\hdots&v_{n}\alpha_{n}\\
\vdots&\vdots&\vdots&\vdots\\
v_1\alpha_1^{k_0'}&v_2\alpha_2^{k_0'}&\hdots&v_{n}\alpha_{n}^{k_0'}\\
\end{array}
\right)
=\left(
\begin{array}{c}
{g}_1\\
{g}_2\\
\vdots\\
{g}_{k_0'+1}\\
\end{array}
\right).
\end{equation}
Then, for $1\le i,j\le k_0'+1,$
\begin{equation}\label{eq:H-gigj}
\langle {g}_i,{g}_j\rangle_H=\sum\limits_{l=1}^n{v_l^{q+1}\alpha_l^{(i-1)+q(j-1)}}.
\end{equation}
Now, let us write
\begin{equation}\label{eq:B(t)}
(i-1)+q(j-1)=A\times (n-1)+B(i,j),
\end{equation}
with $A$ being a non-negative integer and $0\le B(i,j)<n-1$. 
We have that
$$
\begin{array}{ll}
\delta_{k_0'}=\langle {g}_{k_0'+1},{g}_{k_0'+1}\rangle_H&=\sum\limits_{l=1}^nv_l^{q+1}\alpha_l^{k_0'(q+1)}.\\
\end{array}
$$
It follows that $\delta_{k_0'}\not=0$ if $(n-1)|k_0'(q+1)$, and $\delta_{k_0'}=0$ if $(n-1){\not|}k_0'(q+1)$,
which proves point 2) and point 3), respectively. 

It follows from (\ref{eq:E1}) that, for $1\le i\not=j\le k_0'$,
$$
\begin{array}{ll}
\langle {g}_{i},{g}_{j}\rangle_H&=\sum\limits_{l=1}^nv_l^{q+1}\alpha_l^{i-1+q(j-1)}\\
&=\sum\limits_{l=1}^nv_l^{q+1}\alpha_l^{B(i,j)}\\
&=0.
\end{array}
$$
Since $k_0'$ is the smallest integer such that $(n-1)|k_0'(q+1)$, $(n-1){\not|}j(q+1)$ for $1\le j\le k_0'-1$, and we obtain that $\langle {g}_{j},{g}_{j}\rangle_H=0$ for $1\le j\le k_0-1$. This proves point 1).
\end{proof}

In the following discussion, we will explore the possible values to be taken by $k_0'$ (greater than that in Lemma \ref{lem:2}) such that the code $GRS_{k_0'+1}(\alpha,{\bf v})$ code has a generator matrix $(g_1^\top,\hdots, g_{k'_0+1}^\top)$ satisfying some desire properties.

Let $n$ be an odd integer and $k_0$ be the smallest integer such that $(n-1)|k_0(q+1)$. Then, from Lemma \ref{lem:2}, we have that
 \begin{equation}
 \begin{array}{l}
   \langle g_{k_0+1},g_{k_0+1}\rangle_H\not=0.\\
  \end{array}
 \end{equation}
Let $k_0'$ be an integer such that $k_0<k_0'$.
By writing 
\begin{equation}\label{eq:parity-Bkk}
(k_0'-1)+q(k_0'-1)=A\times (n-1)+B(k_0',k_0'),
\end{equation}
we obtain that $$(k_0'-1)+k_0'q=(k_0'-1)+q(k_0'-1)=A\times (n-1)+B(k_0',k_0')+q.$$ 
Note that for both $k_0'$ odd and even, $B(k_0',k_0')$ is even since $n-1$ is even.
By the same reasoning as that in the proof of Lemma \ref{lem:2}, it follows that $\langle g_{k_0'},g_{k_0'+1}\rangle_H=0$, and thus $1\le B(k_0',k_0')+q\le n-2$ with $B(k_0',k_0')$ even.
Thus, one can choose $k_0'$ in the following range:
\begin{equation}\label{eq:k01}
  2(k_0'-1)+q\le n-2,
 \end{equation} 
 Since $\langle g_{k_0+1},g_{k_0+1}\rangle_H=\langle g_{(k_0+1)+(q-1)},g_{(k_0+1)+(q-1)}\rangle_H\not=0$, we obtain that 
$\langle g_{k_0+q-1},g_{k_0+q-1}\rangle_H=0$. Hence, one
can choose an upper bound for $k_0'$ as follows:
\begin{equation}\label{eq:k02}
k_0'\le k_0+q-1.
\end{equation}
Therefore, from (\ref{eq:k01}) and (\ref{eq:k02}), one can choose  $k_0'= \min \left(\frac{n-q}{2}, k_0+q-1\right)$ so that the code $GRS_{k_0'+1}(\alpha,{\bf v})$ has a generator matrix $(g_1^\top,\hdots, g_{k'_0+1}^\top)$ satisfying the following properties:
\begin{enumerate}
\item $\langle g_i,g_j\rangle_H=0$ for any $1\le i\not=j\le k_0' +1$;
\item $\langle g_{k_0'+1},g_{k_0'+1}\rangle_H\not=0$ if $(n-1)|k_0'(q+1)$;
\item $\langle g_{k_0'+1},g_{k_0'+1}\rangle_H=0$ if $(n-1){\not|}k_0'(q+1)$.
\end{enumerate}

Combining the above discussion with Lemma \ref{lem:2}, we get the following result.
\begin{thm}\label{thm:extended} Assume that $n$ is odd and $(n-1)|(q^2-1)$.
Put $k= \lfloor \frac{n+q-1}{q+1} \rfloor$, $i_{\min}=\min (\lfloor \frac{n-q}{2}\rfloor-k,q-1)$ and $K=k+{i_{\min}}+1$. Assume further that $(n-1)|k(q+1)$.
\begin{enumerate}
\item If $(n-1){\not|}(K-1)(q+1)$ and $(n-1){\not|}(K-2)(q+1)$, then there exists a linear code $C$ with parameters
 $[n+2,K, n+2-K]_{q^2}$ with $\dim(Hull_H(C))=K-\sharp \Delta_{K}-2$;
\item 
If $(n-1){\not|}(K-1)(q+1)$ and $(n-1)|(K-2)(q+1)$, then there exists a linear code $C$ with parameters
 $[n+2,K, n+2-K]_{q^2}$ with $\dim(Hull_H(C))=K-\sharp \Delta_{K}-1$;  
 \item 
If $(n-1)|(K-1)(q+1)$ and $(n-1){\not|}(K-2)(q+1)$, then there exists a linear code $C$ with parameters
 $[n+2,K, n+2-K]_{q^2}$ with $\dim(Hull_H(C))=K-\sharp \Delta_{K}-1$; 
 \item If $(n-1)|(K-1)(q+1)$ and $(n-1)|(K-2)(q+1)$, then there exists a linear code $C$ with parameters
 $[n+2,K, n+2-K]_{q^2}$ with $\dim(Hull_H(C))=K-\sharp \Delta_{K}$.
 \end{enumerate}
\end{thm}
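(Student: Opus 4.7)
The plan is to apply Lemma \ref{lem:hull-H}, which reduces computing $\dim(Hull_H(C))$ to a rank computation, and then decompose
\begin{equation*}
G_{k+i} G_{k+i}^{\dag} = L_{k+i} L_{k+i}^{\dag} + R_{k+i} R_{k+i}^{\dag}
\end{equation*}
and compute this rank explicitly. Throughout, I take $\alpha_1,\ldots,\alpha_{n-1}$ to be the $(n-1)$-th roots of unity in $\F_{q^2}$, $\alpha_n=0$, and $(v_1,\ldots,v_n)$ chosen as in the setup of Lemma \ref{lem:2}, so that the underlying code $GRS_k(\alpha,\mathbf v)$ is Hermitian self-orthogonal by Lemma \ref{lem:existence}.

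First, I extend Lemma \ref{lem:2} to cover all rows $g_1,\ldots,g_K$. The hypothesis $K=k+i_{\min}+1$ with $i_{\min}=\min(\lfloor (n-q)/2\rfloor - k, q-1)$ is precisely the combination of the upper bounds (\ref{eq:k01}) and (\ref{eq:k02}) derived in the discussion before the theorem; under these bounds the residue $B(i,j)$ from (\ref{eq:B(t)}) lies in $\{1,\ldots,n-2\}$ for every pair $1\le i\ne j\le K$. Invoking the Euclidean-to-Hermitian passage (working with the auxiliary Euclidean self-orthogonal code $GRS_{\lfloor n/2\rfloor}(\alpha,\mathbf z)$ with $z_l=v_l^{(q+1)/2}$, as in the proof of Lemma \ref{lem:2}), one obtains $\langle g_i,g_j\rangle_H=0$ for all such pairs, while $\langle g_t,g_t\rangle_H=\delta_{t-1}$ by the definition (\ref{eq:delta_t}). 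Consequently
\begin{equation*}
L_{k+i} L_{k+i}^{\dag}=\mathrm{diag}(\delta_0,\delta_1,\ldots,\delta_{K-1}).
\end{equation*}

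Second, I set $\delta=0$ in the matrix $R_{k+i}$ of (\ref{eq:Gi}). A direct calculation then gives $R_{k+i}R_{k+i}^{\dag}$ supported only at the diagonal positions $(K-1,K-1)$ and $(K,K)$, with values $\lambda_1^{q+1}$ and $\lambda_2^{q+1}$ respectively. Hence
\begin{equation*}
G_{k+i} G_{k+i}^{\dag}=\mathrm{diag}(\delta_0,\ldots,\delta_{K-3},\ \delta_{K-2}+\lambda_1^{q+1},\ \delta_{K-1}+\lambda_2^{q+1}),
\end{equation*}
whose rank equals the number of nonzero diagonal entries, and $\dim(Hull_H(C))=K-\mathrm{rank}(G_{k+i}G_{k+i}^{\dag})$.

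Third, I carry out the case analysis. By the definition of $\Delta_K$, among the first $K-2$ entries $\delta_0,\ldots,\delta_{K-3}$ the number of nonzero ones is $\sharp\Delta_K-c_1-c_2$, where $c_1,c_2\in\{0,1\}$ record whether $(n-1)\mid (K-2)(q+1)$ and $(n-1)\mid (K-1)(q+1)$, respectively. Choosing $\lambda_1,\lambda_2\in\F_{q^2}^*$ so that $\delta_{K-2}+\lambda_1^{q+1}\ne 0$ and $\delta_{K-1}+\lambda_2^{q+1}\ne 0$ is always possible because the norm map $x\mapsto x^{q+1}$ surjects from $\F_{q^2}^*$ onto $\F_q^*$, and at worst one value must be avoided for each $\lambda_j$. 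These two additional nonzero entries contribute $2$ to the rank, so $\mathrm{rank}(G_{k+i}G_{k+i}^{\dag})=\sharp\Delta_K-c_1-c_2+2$; plugging in the four combinations of $(c_1,c_2)$ recovers the four stated hull dimensions. The main obstacle is the first step: carefully verifying that the bounds (\ref{eq:k01}) and (\ref{eq:k02}) genuinely force $B(i,j)\in\{1,\ldots,n-2\}$ for \emph{every} relevant pair $(i,j)$ with $i\ne j$ (and not merely for $(i,j)=(K-1,K)$), so that all off-diagonal Hermitian products vanish and $L_{k+i}L_{k+i}^{\dag}$ really is diagonal. The AMDS assertion $d\ge n+2-K$ is a separate routine verification from the Vandermonde structure of $L_{k+i}$ together with the nonzero pivots $\lambda_1,\lambda_2$ in $R_{k+i}$.
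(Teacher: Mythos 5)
Your proposal is correct and follows essentially the same route as the paper: the same lengthened generator matrix $(L_K:R_K)$ with $\delta=0$, the same reduction via Lemma \ref{lem:hull-H} to $\mathrm{rank}(G_KG_K^\dag)$, the same diagonalization of $G_KG_K^\dag$ using the orthogonality properties from Lemma \ref{lem:2} and the discussion preceding the theorem, and the same selection of $\lambda_1,\lambda_2$ avoiding a single fiber of the norm map. Your unified bookkeeping with the indicators $c_1,c_2$ is just a compact restatement of the paper's four-case analysis.
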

\begin{proof} 
Take $U_{n-1}=\{ \alpha\in \F_{q^2}|\alpha^{n-1}=1\}$, and label the elements of $U_{n-1}$ as $\alpha_1,\hdots,\alpha_{n-1}$, as well as set $\alpha_n=0$. Put $\alpha=(\alpha_1,\hdots, \alpha_n)$ and ${\bf v}=(v_1,\hdots,v_n)$. Since $(n-1)|(q^2-1)$, $1\le k\le \lfloor \frac{n+q-1}{q+1} \rfloor$ and  $u_i=v_i^{q+1}$ for $1\le i\le n$, the code code $GRS_{k}(\alpha,{\bf v})$ has a generator matrix satisfying properties 1)-3) of Lemma \ref{lem:2}.
Consider the code $C_{K}$ with generator matrix $G_{K}$ ($K=k+{i_{\min}}+1$) written as in (\ref{eq:Gi}) such that $\delta=0$. 
It follows that the code $C_{K}$ also has a generator matrix satisfying the properties 1)-3) of Lemma \ref{lem:2}.\\
\begin{enumerate}
\item If $(n-1){\not|}(K-1)(q+1)$ and $(n-1){\not|}(K-2)(q+1)$, then we can take $\lambda_1$ and $\lambda_2$ to be any elements of $\F_{q^2}^*$. Hence, $\text{rank}(G_{K}G_{K}^\dag)=\sharp \Delta_{K}+2$;
\item If $(n-1){\not|}(K-1)(q+1)$ and $(n-1)|(K-2)(q+1)$, then we can take $\lambda_1$ to be any element of $\F_{q^2}^*$ and $\lambda_2\not=0,\pm v_n$. Hence, $\text{rank}(G_{K}G_{K}^\dag)=\sharp \Delta_{K}+1$; 
 \item If $(n-1)|(K-1)(q+1)$ and $(n-1){\not|}(K-2)(q+1)$, then we can take $\lambda_2$ to be any element of $\F_{q^2}^*$ and $\lambda_1\not=0,\pm v_n$. Hence, $\text{rank}(G_{K}G_{K}^\dag)=\sharp \Delta_{K}+1$;
 \item If $(n-1)|(K-1)(q+1)$ and If $(n-1)|(K-2)(q+1)$,  Then, we can take $\lambda_1\not=0,\pm v_n,$ and $\lambda_2\not=0,\pm v_n$. Hence, $\text{rank}(G_{K}G_{K}^\dag)=\sharp \Delta_{K}$.
 \end{enumerate}
\end{proof}
By puncturing the last two coordinates of the code in Theorem \ref{thm:extended}, we obtain the following result.
\begin{cor}\label{cor:extended} Assume that $n$ is odd and $(n-1)|(q^2-1)$. Put $k= \lfloor \frac{n+q-1}{q+1} \rfloor$ and $i_{\min}=\min (\lfloor \frac{n-q}{2}\rfloor-k,q-1)$. Assume further that $(n-1)|k(q+1)$. Then, for any $1\le k'\le k+{i_{\min}}+1$, there exists an MDS linear code $C$ with parameters
 $[n,k']_{q^2}$  such that $\dim(Hull_H(C))=k'-\sharp \Delta_{k'}$.
\end{cor}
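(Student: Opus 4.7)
The plan is to obtain the desired MDS codes by puncturing the last two coordinates of the code $C_K$ constructed in Theorem~\ref{thm:extended}. First I would set up exactly the same data as in the proof of Theorem~\ref{thm:extended}: let $\alpha = (\alpha_1,\ldots,\alpha_{n-1},0)$ where $\alpha_1,\ldots,\alpha_{n-1}$ are the $(n-1)$-th roots of unity in $\F_{q^2}$, and choose $\mathbf{v} = (v_1,\ldots,v_n)$ so that $u_j = v_j^{q+1}$ for $1 \le j \le n$ (this is possible by Lemma~\ref{lem:existence} since $1 \le k \le \lfloor (n+q-1)/(q+1)\rfloor$). For each $1 \le k' \le K := k + i_{\min} + 1$, the punctured code has generator matrix $L_{k'}$, namely the first $k'$ rows of $L_K$ in (\ref{eq:Gi}), and is therefore the generalized Reed--Solomon code $GRS_{k'}(\alpha,\mathbf{v})$ of length $n$ and dimension $k'$. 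Being GRS, this punctured code is automatically MDS with parameters $[n, k', n - k' + 1]_{q^2}$.

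Next I would use Lemma~\ref{lem:hull-H} to reduce the Hermitian hull dimension computation to evaluating $\mathrm{rank}(L_{k'} L_{k'}^{\dag})$. The crucial input is the orthogonality structure already established in the proof of Theorem~\ref{thm:extended}: by Lemma~\ref{lem:2} combined with the extended range discussion preceding Theorem~\ref{thm:extended}, one has $\langle g_i, g_j\rangle_H = 0$ for all $1 \le i \neq j \le k_0' + 1$, where $k_0' = \min(\lfloor (n-q)/2 \rfloor,\, k_0 + q - 1)$ and $k_0 = k$ since $(n-1)\mid k(q+1)$. Hence $k_0' + 1 = K$, so the orthogonality $\langle g_i, g_j\rangle_H = 0$ holds for all $1 \le i \neq j \le k'$ whenever $k' \le K$. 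This forces $L_{k'} L_{k'}^{\dag}$ to be the diagonal matrix $\mathrm{diag}(\delta_0, \delta_1, \ldots, \delta_{k'-1})$ with $\delta_{t-1}$ as defined in (\ref{eq:delta_t}).

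Finally I would read off the rank: by the definition of $\Delta_{k'}$, the number of nonzero diagonal entries is exactly $\sharp \Delta_{k'}$, so $\mathrm{rank}(L_{k'} L_{k'}^{\dag}) = \sharp \Delta_{k'}$ and Lemma~\ref{lem:hull-H} yields $\dim(Hull_H(C)) = k' - \sharp \Delta_{k'}$.

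The main obstacle is really just verifying that the off-diagonal orthogonality persists uniformly across the full range $1 \le k' \le K$, not merely up to $k_0' + 1$ as stated in Lemma~\ref{lem:2}. The technical contents of the argument reside in the index computation $k + i_{\min} = K - 1 \le k_0'$, which is the whole reason for choosing $i_{\min} = \min(\lfloor (n-q)/2 \rfloor - k,\, q - 1)$. Everything else (MDS-ness, diagonality, rank count) is immediate from already-proved results, so the proof is essentially a bookkeeping translation of Theorem~\ref{thm:extended} under puncturing.
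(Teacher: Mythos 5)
Your proposal is correct and follows essentially the same route as the paper: the paper obtains the corollary precisely by puncturing the last two coordinates of the code from Theorem~\ref{thm:extended}, so that the resulting code is $GRS_{k'}(\alpha,{\bf v})$ (hence MDS), and its Hermitian hull dimension follows from the diagonal Gram matrix $\mathrm{diag}(\delta_0,\ldots,\delta_{k'-1})$ together with Lemma~\ref{lem:hull-H}. Your explicit verification that $K-1=k+i_{\min}\le k_0'$, so the off-diagonal orthogonality of Lemma~\ref{lem:2} and the surrounding discussion covers the whole range $1\le k'\le K$, is exactly the bookkeeping the paper leaves implicit.
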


\begin{exam}\label{ex:2} Take $q=7$, $n=17$ and $\theta$ as a primitive element of $\F_{7^2}$. Then, from \cite{SokQSC}, there exists a Hermitian self-orthogonal code $C$ with parameters $[17,2,16]_{49}$.
If $\lambda_1\not=\theta^3$ and $\lambda_2\not=0$, then by Theorem \ref{thm:extended}, the code $C_6$ has parameters $[19,6,13]_{49}$ with $\dim Hull_H(C_6)=3$. Its generator matrix $G_6=(A_6|B_6)$ is given as follows:
{\scriptsize
$
A_6=\left(
\begin{array}{llllllllllllllllllllllll}
\theta^{45}&\theta^{46}&\theta^{46}&\theta^{46}&\theta^{46}&\theta^{46}&\theta^{46}&\theta^{46}&\theta^{46}&\theta^{46}\\
0&\theta^{22}&\theta^{25}&\theta^{28}&\theta^{31}&\theta^{34}&\theta^{37}&5&\theta^{43}&\theta^{46}\\
0&\theta^{46}&\theta^4&\theta^{10}&2&\theta^{22}&\theta^{28}&\theta^{34}&5&\theta^{46}\\
0&\theta^{22}&\theta^{31}&5&\theta&\theta^{10}&\theta^{19}&\theta^{28}&\theta^{37}&\theta^{46}\\
0&\theta^{46}&\theta^{10}&\theta^{22}&\theta^{34}&\theta^{46}&\theta^{10}&\theta^{22}&\theta^{34}&\theta^{46}\\
0&\theta^{22}&\theta^{37}&\theta^4&\theta^{19}&\theta^{34}&\theta&2&\theta^{31}&\theta^{46}\\
\end{array}
\right),
$
$B_6=\left(
\begin{array}{llllllllllllllllllllllll}
\theta^{46}&\theta^{46}&\theta^{46}&\theta^{46}&\theta^{46}&\theta^{46}&\theta^{46}&0&0\\
\theta^{46}&\theta^4&\theta^7&\theta^{10}&\theta^{13}&2&\theta^{19}&0&0\\
\theta^4&\theta^{10}&2&\theta^{22}&\theta^{28}&\theta^{34}&5&\lambda_1&0\\
\theta^7&2&\theta^{25}&\theta^{34}&\theta^{43}&\theta^4&\theta^{13}&0&\lambda_2\\
\theta^{10}&\theta^{22}&\theta^{34}&\theta^{46}&\theta^{10}&\theta^{22}&\theta^{34}&0&0\\
\theta^{13}&\theta^{28}&\theta^{43}&\theta^{10}&\theta^{25}&5&\theta^7&0&0\\
\end{array}
\right).
$
}
It is easy to check that $\delta_2=\langle g_3,g_3\rangle\not=0,\delta_3=\langle g_4,g_4\rangle=0,\delta_4=\langle g_5,g_5\rangle\not=0,\delta_5=\langle g_6,g_6\rangle=0$.
By puncturing the last two coordinates of $G_6$, one obtains an MDS linear code $C'_6$ with parameters $[17,6,12]_{49}$ such that $\dim Hull_H(C'_6)=4$. One can obtain an MDS linear code $C'_5$ with parameters $[17,5,13]_{49}$ such that $\dim Hull_H(C'_5)=3$ by puncturing the last two coordinates and taking the first five rows of $G_6$ as a generator matrix of $C'_5$.
We summarize the parameters of the codes and their Hermitian hull dimensions obtained from this example as follows:
$$
\begin{array}{c|c|c}
\text{Parameters} &\text{Reference}&\ell\\
\hline
[19,6,13]&\text{Theorem \ref{thm:extended}}&4\\

[19,6,13]&\text{Theorem \ref{thm:extended}}&3\\

[19,5,14]&\text{Theorem \ref{thm:extended}}&3\\

[19,5,14]&\text{Theorem \ref{thm:extended}}&2\\

[19,4,15]&\text{Theorem \ref{thm:extended}}&3\\

[19,4,15]&\text{Theorem \ref{thm:extended}}&2\\

[19,3,16]&\text{Theorem \ref{thm:extended}}&2\\

[17,6,12]&\text{Corollary \ref{cor:extended}}&4\\

[17,5,13]&\text{Corollary \ref{cor:extended}}&3\\

[17,4,14]&\text{Corollary \ref{cor:extended}}&3\\

[17,3,15]&\text{Corollary \ref{cor:extended}}&2\\

\end{array}
$$
\end{exam}

By puncturing the zero coordinate among the last two coordinates of the code in Theorem \ref{thm:extended}, we obtain the following result.
\begin{cor}\label{cor:extended-1} Assume that $n$ is odd and $(n-1)|(q^2-1)$. Put $k= \lfloor \frac{n+q-1}{q+1} \rfloor$ and $i_{\min}=\min (\lfloor \frac{n-q}{2}\rfloor-k,q-1)$. Assume further that $(n-1)|k(q+1)$. Then, for any $1\le k'\le k+{i_{\min}}+1$, there exists an MDS linear code $C$ with parameters
 $[n+1,k', n-k'+2]_{q^2}$  such that $\dim(Hull_H(C))=k'-\sharp \Delta_{k'}$.
\end{cor}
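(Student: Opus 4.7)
The plan is to obtain the desired $[n+1, k', n-k'+2]_{q^2}$ code by puncturing one of the last two coordinates of the length-$(n+2)$ MDS code $C_{k'}$ constructed in Theorem \ref{thm:extended} (taking $K=k'$ there). Because puncturing an MDS code by a single coordinate yields an MDS code of length reduced by one, the punctured code automatically has parameters $[n+1,k',n-k'+2]_{q^2}$, and the only remaining task is to compute the Hermitian hull dimension via Lemma \ref{lem:hull-H}.

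First I would fix the block generator matrix $G_{k'}=(L_{k'}:R_{k'})$ given by (\ref{eq:Gi}) with $\delta=0$, and read off the Gram matrix of each possible punctured code. Using the diagonal identity $L_{k'}L_{k'}^{\dag}=\text{diag}(\delta_0,\ldots,\delta_{k'-1})$ established in Lemma \ref{lem:2}, puncturing the coordinate that carries $\lambda_1$ (position $n+1$) produces the diagonal Gram
$$
\text{diag}\bigl(\delta_0,\ldots,\delta_{k'-2},\ \delta_{k'-1}+\lambda_2^{q+1}\bigr),
$$
while puncturing the coordinate that carries $\lambda_2$ (position $n+2$) produces
$$
\text{diag}\bigl(\delta_0,\ldots,\delta_{k'-3},\ \delta_{k'-2}+\lambda_1^{q+1},\ \delta_{k'-1}\bigr).
$$

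The second step is a case analysis that parallels the four divisibility cases of Theorem \ref{thm:extended}. The principle is to puncture the coordinate whose surviving $\lambda$-entry can be tuned so that its perturbed diagonal term has the same zero/nonzero pattern as the corresponding $\delta_j$. Concretely, when $(n-1)\mid(k'-1)(q+1)$ I would puncture position $n+1$ and choose $\lambda_2\in\F_{q^2}^*$ avoiding the single "bad" value defined by $\lambda_2^{q+1}=-\delta_{k'-1}$; when only $(n-1)\mid(k'-2)(q+1)$ I would puncture position $n+2$ and choose $\lambda_1$ avoiding $\lambda_1^{q+1}=-\delta_{k'-2}$. In either situation the rank of the Gram equals $\sharp\Delta_{k'}$, so Lemma \ref{lem:hull-H} delivers $\dim Hull_H(C)=k'-\sharp\Delta_{k'}$.

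The main obstacle I expect is the degenerate subcase in which neither $(n-1)\mid(k'-1)(q+1)$ nor $(n-1)\mid(k'-2)(q+1)$ holds: both $\delta_{k'-1}$ and $\delta_{k'-2}$ then vanish and the $\lambda$-contribution from the surviving coordinate inflates the Gram rank by one, producing $\dim Hull_H(C)=k'-\sharp\Delta_{k'}-1$ instead of $k'-\sharp\Delta_{k'}$. I expect this degenerate case to be excluded over the admissible range $1\le k'\le k+i_{\min}+1$ by the hypotheses $(n-1)\mid k(q+1)$ together with the definition of $i_{\min}$, via the same alternating modular analysis that already underlies the choice of $K$ in the discussion preceding Theorem \ref{thm:extended}. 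Performing this bookkeeping carefully is the key step; once it is in place, the rank computation that yields the hull dimension is routine matrix algebra.
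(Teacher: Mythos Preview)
Your plan matches the paper's one-line argument (puncture one of the two appended coordinates of the code from Theorem~\ref{thm:extended}), and you go further by writing out both candidate Gram matrices and isolating the degenerate subcase in which both $\delta_{k'-1}$ and $\delta_{k'-2}$ vanish. Unfortunately that subcase is \emph{not} ruled out by the hypotheses. Take $q=5$, $n=25$: then $(n-1)\mid(q^{2}-1)$, $k=\lfloor 29/6\rfloor=4$, $(n-1)\mid k(q+1)$, and $i_{\min}=4$, so the admissible range is $1\le k'\le 9$. Here $\delta_{j}\ne 0$ precisely when $j\ge 1$ and $4\mid j$, so for each $k'\in\{2,3,4,7,8\}$ both $\delta_{k'-1}$ and $\delta_{k'-2}$ are zero. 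After puncturing either appended coordinate the surviving $\lambda$-entry contributes a new nonzero diagonal term, giving $\text{rank}(GG^{\dag})=\sharp\Delta_{k'}+1$ and hence $\dim(Hull_{H}(C))=k'-\sharp\Delta_{k'}-1$, not $k'-\sharp\Delta_{k'}$. The ``alternating modular analysis'' you invoke from the discussion preceding Theorem~\ref{thm:extended} is about the off-diagonal entries $\langle g_{i},g_{j}\rangle_{H}$ (it uses the parity of $B(k_{0}',k_{0}')$ to bound $k_{0}'$), not about any alternation of the diagonal $\delta_{j}$; it gives no leverage here.

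There is a second, independent gap. The length-$(n+2)$ code in Theorem~\ref{thm:extended} is stated with distance $n+2-K$, i.e.\ it is AMDS, not MDS, so ``puncturing an MDS code gives an MDS code'' does not apply as written. Puncturing position $n+1$ \emph{does} yield an MDS code, because the remaining generator $[L_{k'}\mid(0,\ldots,0,\lambda_{2})^{\top}]$ is an extended GRS matrix. But puncturing position $n+2$ leaves $[L_{k'}\mid(0,\ldots,0,\lambda_{1},0)^{\top}]$, and the $k'\times k'$ minor formed from this extra column together with columns $i_{1},\ldots,i_{k'-1}$ of $L_{k'}$ equals, up to a nonzero factor, $\alpha_{i_{1}}+\cdots+\alpha_{i_{k'-1}}$; this vanishes for many choices since $\{\alpha_{1},\ldots,\alpha_{n}\}=U_{n-1}\cup\{0\}$ is closed under negation when $n$ is odd. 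So your branch ``puncture $n+2$ when only $(n-1)\mid(k'-2)(q+1)$'' does not produce an MDS code, and the only branch that does (puncture $n+1$) controls $\delta_{k'-1}$ alone.
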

We have seen that the codes constructed in Theorem \ref{thm:extended}, Corollary \ref{cor:extended}, and Corollary \ref{cor:extended-1} have restrictive Hermitian hull dimensions. However, by slightly modifying the above constructions, 
can obtain codes with more flexible hull dimensions. 
\begin{thm}\label{thm:extended-2} Assume that $n$ is odd and $(n-1)|(q^2-1)$. Put $k= \lfloor \frac{n+q-1}{q+1} \rfloor$ and $i_{\min}=\min (\lfloor \frac{n-q}{2}\rfloor-k,q-1)$. Assume further that $(n-1)|k(q+1)$. Then, for any $1\le k'\le k+{i_{\min}}+1$, there exists an MDS linear code $C$ with parameters
 $[n,k',n-k'+1]_{q^2}$ such that $\dim(Hull_H(C))=j$ for $0\le j\le k'-\sharp \Delta_{k'}$.
 \end{thm}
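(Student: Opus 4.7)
The plan is to combine the construction of Corollary \ref{cor:extended}, which gives an MDS $[n,k']_{q^2}$ code whose Gram matrix $GG^\dag$ is diagonal of rank exactly $\sharp\Delta_{k'}$, with the column-scaling perturbation used in Theorems \ref{thm:q:even}, \ref{thm:q:odd}, and \ref{thm:TGRS-hermitian}, in order to cut the Hermitian hull dimension down to any prescribed value $j$ with $0\le j\le k'-\sharp\Delta_{k'}$.

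Concretely, set $U_{n-1}=\{\alpha\in\F_{q^2}\mid \alpha^{n-1}=1\}$, label its elements $\alpha_1,\hdots,\alpha_{n-1}$, put $\alpha_n=0$, and choose $\v=(v_1,\hdots,v_n)$ with $u_i=v_i^{q+1}$. Let $G_{k'}=(v_l\alpha_l^{i-1})_{1\le i\le k',\,1\le l\le n}$ be the generator matrix of the GRS code $C_0$ from Corollary \ref{cor:extended}; by the analysis underlying Theorem \ref{thm:extended}, $G_{k'}G_{k'}^\dag=\mathrm{diag}(\delta_0,\hdots,\delta_{k'-1})$ with $\delta_t\ne 0$ iff $t\in\Delta_{k'}$. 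For the target $j$, put $r=k'-\sharp\Delta_{k'}-j$, pick $a\in\F_{q^2}^{*}$ with $a^{q+1}\ne 1$, set $\v'=(av_1,\hdots,av_r,v_{r+1},\hdots,v_n)$, and let $C$ be the $[n,k']_{q^2}$ MDS code generated by $G'_{k'}=(v'_l\alpha_l^{i-1})$ (MDS-ness is preserved since it remains a GRS code with a nonzero column scaling). A direct computation gives the perturbation identity
\begin{equation*}
G'_{k'}{G'_{k'}}^\dag=G_{k'}G_{k'}^\dag+(a^{q+1}-1)\,V_rV_r^\dag,
\end{equation*}
where $V_r=(v_l\alpha_l^{i-1})_{1\le i\le k',\,1\le l\le r}$ is a $k'\times r$ Vandermonde-type matrix of full column rank $r$; via Lemma \ref{lem:hull-H} it then suffices to show $\mathrm{rank}(G'_{k'}{G'_{k'}}^\dag)=\sharp\Delta_{k'}+r$.

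To establish this rank equality, imitate the polynomial-vanishing argument of Theorems \ref{thm:q:even}, \ref{thm:q:odd}, and \ref{thm:TGRS-hermitian}. Write a hull codeword as $(av_1f(\alpha_1),\hdots,av_rf(\alpha_r),v_{r+1}f(\alpha_{r+1}),\hdots,v_nf(\alpha_n))$ with $\deg f\le k'-1$ and invoke the dual characterisation $C^{\perp_E}=GRS_{n-k'}(\alpha,\frac{\u}{\v'})$ to produce $g\in\F_{q^2}[x]$ with $\deg g\le n-k'-1$ such that $a^{q+1}f(\alpha_l)^{q}=g(\alpha_l)$ for $1\le l\le r$ and $f(\alpha_l)^{q}=g(\alpha_l)$ for $r+1\le l\le n$. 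Matching the last $n-r$ equations forces $f^q\equiv g$ as polynomials, and then $a^{q+1}\ne 1$ combined with the first $r$ equations forces $f(\alpha_l)=0$ for $1\le l\le r$. This yields $\dim(Hull_H(C))\le j$, and the matching lower bound is obtained by exhibiting explicit hull codewords parameterised by $f(x)=f_1(x)\prod_{l=1}^{r}(x-\alpha_l)$ with $f_1$ ranging over a subspace of dimension $j$ of those $f_1$ that, when combined with the vanishing at $\alpha_1,\hdots,\alpha_r$, kill every pairing against the rows of $G_{k'}$ corresponding to $\Delta_{k'}$.

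The main obstacle is the degree comparison $\deg(f^q-g)<n-r$ required to go from $n-r$ matching values to the identity $f^q\equiv g$. When $k'\le k$ this follows at once from $(q+1)k'\le n+q-1$, but for $k<k'\le k+i_{\min}+1$ one has $\deg f^q=q(k'-1)$, so one must exploit the restrictions $i_{\min}\le q-1$, $i_{\min}\le\lfloor(n-q)/2\rfloor-k$, and $r\le k'-\sharp\Delta_{k'}$ to squeeze out the needed bound. In the borderline regime where the degree argument is tight, an alternative route is to compute $\mathrm{rank}(G'_{k'}{G'_{k'}}^\dag)$ directly by exhibiting a nonsingular $(\sharp\Delta_{k'}+r)\times(\sharp\Delta_{k'}+r)$ principal submatrix for a suitable choice of $a$, using the diagonal structure of $G_{k'}G_{k'}^\dag$ and the Vandermonde-type structure of $V_rV_r^\dag$ to show that the rank increase from the perturbation is exactly $r$.
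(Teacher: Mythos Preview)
The polynomial-vanishing argument you import from Theorems \ref{thm:q:even}--\ref{thm:TGRS-hermitian} needs $\deg(f^q-g)<n-r$, hence $q(k'-1)<n-r$, to conclude $f^q\equiv g$ from $n-r$ matching evaluations. For $k'\le k$ this holds, but then $\sharp\Delta_{k'}=0$ and the statement reduces to what those earlier theorems already cover. The new content of the theorem is precisely the range $k<k'\le k+i_{\min}+1$, and there the degree bound fails outright: each unit increase of $k'$ past $k$ adds $q$ to $q(k'-1)$, so for $k'$ near $k+i_{\min}+1$ (with $i_{\min}$ as large as $q-1$) the left side exceeds $n$ by an amount of order $q^2$, and no admissible $r\ge 0$ can rescue the inequality. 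Your proposed ``squeeze from the $i_{\min}$ constraints'' therefore cannot close the gap, and the alternative ``exhibit a nonsingular $(\sharp\Delta_{k'}+r)\times(\sharp\Delta_{k'}+r)$ principal submatrix for a suitable $a$'' is left entirely unexecuted---and it is not a formality, since the column space of $V_rV_r^\dag$ need not be transverse to that of $\mathrm{diag}(\delta_t)$ for an arbitrary choice of the $r$ scaled columns.

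The paper sidesteps this by a different device. Rather than scaling columns of the GRS matrix and then wrestling with a full rank-$r$ perturbation, it first passes (``up to equivalence'') to a generator matrix of the shape $G'=(I_{k'}\mid A')$ for an equivalent MDS code, arranged so that $G'G'^{\dag}$ is still diagonal with its zero entries in the first $k'-\sharp\Delta_{k'}$ positions. Replacing the leading $r$ ones of the identity block by $a$ then changes $G'G'^{\dag}$ by the \emph{diagonal} correction $(a^{q+1}-1)\,\mathrm{diag}(1,\dots,1,0,\dots,0)$, so the new Gram matrix stays diagonal and its rank is read off immediately as $r+\sharp\Delta_{k'}$; Lemma \ref{lem:hull-H} finishes. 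The idea you are missing is this preliminary passage to a systematic-type form in which each of the first $r$ columns is supported on a single row: then the column scaling perturbs exactly one diagonal entry of the Gram matrix at a time, and no polynomial-degree or rank-subadditivity argument is needed at all.
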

 
 \begin{proof} Take $a \in \mathbb{F}_{q}^{*}$ with $a^{q+1} \neq 1$.
Let $1\le k'\le k+{i_{\min}}+1$ and $0\le r \le k'-1$. 
From Corollary \ref{cor:extended}, there exists an MDS linear code $C'$ with parameters $[n,k']$ such that $\dim Hull_H(C')=k'-\sharp \Delta_{k'}$. Let $G'$ be a generator matrix of $C'$. 
From Lemma  \ref{lem:2}, $G'$ can be, up to equivalence, written as:
$$
G'=\left(
\begin{array}{llllllllll}
1&0&\hdots&0&0&0&A'_1\\
0&1&\hdots&0&0&0&A'_2\\
\vdots&\vdots&&\vdots&\vdots&\vdots&\vdots\\
0&0&\hdots&1&0&0&A'_{r}\\
\vdots&\vdots&&\vdots&\vdots&\vdots&\vdots\\
0&0&\hdots&0&0&1&A'_{k'}\\
\end{array}
\right)=
\left(
\begin{array}{llllllllll}
g'_1\\
g'_2\\
\vdots\\
g'_r\\
\vdots\\
g'_{k'}\\
\end{array}
\right)
$$
such that $\langle g'_i,g'_j \rangle_H=0$ for $1\le i\not=j\le k'$, $\langle g'_i,g'_i \rangle_H=0$ for $1\le i\le k'-\sharp \Delta_{k'}$ and $\langle g'_i,g'_i \rangle_H\not=0$ for $1\le i\le k'-\sharp \Delta_{k'}+1,\hdots, k'$. 
Take a code $C''$ with the following generator matrix:
$$
G''=\left(
\begin{array}{llllllllll}
a&0&\hdots&0&0&0&0&A'_1\\
0&a&\hdots&0&0&0&0&A'_2\\
\vdots&\vdots&&\vdots&\vdots&\vdots&\vdots&\vdots\\
0&0&\hdots&a&0&0&0&A'_{r}\\
0&0&\hdots&0&1&0&0&A'_{r+1}\\
\vdots&\vdots&&\vdots&\vdots&\vdots&\vdots&\vdots\\
0&0&\hdots&0&0&0&1&A'_{k'}\\
\end{array}
\right)
$$
It is easy to see that $\text{rank}(G''G''^\dag)=r+\sharp \Delta_{k'}$ and thus the result follows from Lemma \ref{lem:hull-H}.
Hence the code $C''$ has parameters $[n,k',n-k'+1]_{q^2}$ such that $\dim Hull_H(C'')=j$ for $j=0,\hdots, k'-\sharp \Delta_{k'}$.\end{proof}
\begin{cor}\label{cor:extended-3} Assume that $n$ is odd and $(n-1)|(q^2-1)$. Put $k= \lfloor \frac{n+q-1}{q+1} \rfloor$ and $i_{\min}=\min (\lfloor \frac{n-q}{2}\rfloor-k,q-1)$. Assume further that $(n-1)|k(q+1)$. Then, for any $1\le k'\le k+{i_{\min}}+1$, there exists an MDS linear code $C$ with parameters
 $[n+1,k',n+2-k']_{q^2}$  such that $\dim(Hull_H(C))=j$ for $0\le j\le k'-\sharp \Delta_{k'}$.
\end{cor}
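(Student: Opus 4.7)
The plan is to mirror the proof of Theorem \ref{thm:extended-2} almost verbatim, replacing the length-$n$ starting code (from Corollary \ref{cor:extended}) by the length-$(n+1)$ code furnished by Corollary \ref{cor:extended-1}. First I would invoke Corollary \ref{cor:extended-1} to obtain an MDS $[n+1,k',n+2-k']_{q^2}$ code $C'$ with $\dim(Hull_H(C'))=k'-\sharp\Delta_{k'}$. By applying Gaussian elimination together with the structural analysis of Lemma \ref{lem:2} used in the proof of Theorem \ref{thm:extended-2}, I can choose a generator matrix $G'=(I_{k'}\,|\,A')$ of $C'$ whose rows $g'_1,\ldots,g'_{k'}$ satisfy $\langle g'_i,g'_j\rangle_H=0$ for $i\neq j$, $\langle g'_i,g'_i\rangle_H=0$ for $1\le i\le k'-\sharp\Delta_{k'}$, and $\langle g'_i,g'_i\rangle_H\neq 0$ for the remaining indices.

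Next, for any target Hermitian hull dimension $j$ with $0\le j\le k'-\sharp\Delta_{k'}$, I would set $r=k'-\sharp\Delta_{k'}-j$, pick $a\in\F_q^*$ with $a^{q+1}\neq 1$, and form the coordinate-scaled code $C''=\mathbf{c}\cdot C'$, where $\mathbf{c}=(a,\ldots,a,1,\ldots,1)\in(\F_{q^2}^*)^{n+1}$ carries $a$ in its first $r$ entries. Since $\mathbf{c}$ has no zero coordinate, $C''$ retains the MDS parameters $[n+1,k',n+2-k']_{q^2}$, and its generator matrix $G''$ differs from $G'$ only in that the first $r$ diagonal entries of the identity block are replaced by $a$.

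A brief computation of $G''G''^{\dag}$ then finishes the argument. The off-diagonal entries still vanish, because the identity portions of $g''_i$ and $g''_j$ for $i\neq j$ are supported on disjoint coordinates and the $A'$-block of each row is unchanged and already mutually Hermitian-orthogonal. On the diagonal, positions $i$ with $1\le i\le r$ become $a^{q+1}+\langle A'_i,A'_i\rangle_H=a^{q+1}-1\neq 0$ (using $1+\langle A'_i,A'_i\rangle_H=\langle g'_i,g'_i\rangle_H=0$), positions $r+1\le i\le k'-\sharp\Delta_{k'}$ remain zero, and the last $\sharp\Delta_{k'}$ diagonal entries remain nonzero. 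Hence $\text{rank}(G''G''^{\dag})=r+\sharp\Delta_{k'}$ and Lemma \ref{lem:hull-H} gives $\dim(Hull_H(C''))=k'-r-\sharp\Delta_{k'}=j$, as required.

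I do not expect any genuine obstacle: the whole argument is a transcription of the proof of Theorem \ref{thm:extended-2}, with ``length $n$'' replaced by ``length $n+1$'' throughout. The one modest point that needs care is the justification that the systematic-form reduction is compatible with the orthogonality pattern produced by Lemma \ref{lem:2} (so that $\langle A'_i,A'_j\rangle_H=0$ for $i\neq j$ and $\langle A'_i,A'_i\rangle_H=-1$ in the self-orthogonal range); this is already implicit in the proof of Theorem \ref{thm:extended-2} and carries over verbatim to the present length-$(n+1)$ setting provided by Corollary \ref{cor:extended-1}.
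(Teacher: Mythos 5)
Your proposal is correct and follows essentially the same route as the paper, which simply invokes the argument of Theorem \ref{thm:extended-2} with the starting code $C'$ replaced by the $[n+1,k',n+2-k']_{q^2}$ code of Corollary \ref{cor:extended-1}. Your explicit computation of $\mathrm{rank}(G''G''^{\dag})=r+\sharp\Delta_{k'}$ just spells out the details the paper leaves implicit.
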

\begin{proof} The proof follows with the same reasoning as that in Theorem \ref{thm:extended-2} by taking $C'$ to be the code in Corollary \ref{cor:extended-1}.
\end{proof}

\begin{cor}\label{cor:extended-4} Assume that $n$ is odd and $(n-1)|(q^2-1)$. Put $k= \lfloor \frac{n+q-1}{q+1} \rfloor$ and $i_{\min}=\min (\lfloor \frac{n-q}{2}\rfloor-k,q-1)$. Assume further that $(n-1)|k(q+1)$. Then, for any $1\le k'\le k+{i_{\min}}+1$, there exists an AMDS linear code $C$ with parameters
 $[n+2,k', n-k'+2]_{q^2}$  such that $\dim(Hull_H(C))=j$ for $0\le j\le k'-\sharp \Delta_{k'}$.
\end{cor}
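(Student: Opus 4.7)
The plan is to obtain the desired AMDS code by appending a single zero coordinate to the MDS code constructed in Corollary \ref{cor:extended-3}. Concretely, given $1 \le k' \le k + i_{\min} + 1$ and any target hull dimension $0 \le j \le k' - \sharp \Delta_{k'}$, I would invoke Corollary \ref{cor:extended-3} to produce an MDS $[n+1, k', n-k'+2]_{q^2}$ code $C'$ with $\dim(Hull_H(C')) = j$. Let $G'$ be any generator matrix of $C'$ and set
\begin{equation*}
G = (G' \mid \mathbf{0}),
\end{equation*}
where $\mathbf{0}$ denotes the zero column of length $k'$. Let $C$ be the code with generator matrix $G$.

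The next step is to verify the length, dimension, and minimum distance of $C$. By construction $C$ has length $n+2$ and dimension $k'$ (since the rows of $G'$ remain linearly independent). Every codeword of $C$ is obtained from a codeword of $C'$ by appending a zero, so the Hamming weights coincide and the minimum distance of $C$ equals that of $C'$, namely $n - k' + 2$. The Singleton bound for length $n+2$ and dimension $k'$ is $n - k' + 3$, so $C$ has Singleton defect one, i.e., $C$ is AMDS with parameters $[n+2, k', n-k'+2]_{q^2}$.

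For the Hermitian hull, the point is that appending a zero column contributes nothing to the inner product matrix:
\begin{equation*}
G G^{\dag} = G' G'^{\dag}.
\end{equation*}
Hence $\textnormal{rank}(G G^{\dag}) = \textnormal{rank}(G' G'^{\dag})$, and applying Lemma \ref{lem:hull-H} to both $C$ and $C'$ yields
\begin{equation*}
\dim(Hull_H(C)) = k' - \textnormal{rank}(G G^{\dag}) = k' - \textnormal{rank}(G' G'^{\dag}) = \dim(Hull_H(C')) = j.
\end{equation*}

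There is essentially no obstacle: the whole content of the corollary is transported from Corollary \ref{cor:extended-3} by the trivial zero-extension, whose only nontrivial verification is checking that the minimum distance does not jump (which follows because zero-padding preserves weights) and that the hull dimension is invariant (which follows from $GG^{\dag} = G'G'^{\dag}$). The delicate analysis — the choice of evaluation points with $(n-1)\mid(q^2-1)$, the orthogonality relations among the rows $g_i$, and the counting of $\sharp \Delta_{k'}$ — is already absorbed into Corollary \ref{cor:extended-3}, so no further computation with the $\delta_t$'s is needed here.
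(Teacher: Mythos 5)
Your proof is correct as a verification of the corollary's literal statement, but it takes a genuinely different route from the paper. The paper obtains the $[n+2,k',n-k'+2]_{q^2}$ code directly as the Roth--Lempel-typed code of Theorem \ref{thm:extended} (generator matrix $(L_{K}:R_{K})$ with the two extra columns carrying $\lambda_1,\lambda_2$), and then applies the rescaling device of Theorem \ref{thm:extended-2} --- multiplying the first $r$ coordinates by $a$ with $a^{q+1}\neq 1$ --- to sweep the hull dimension through $0\le j\le k'-\sharp\Delta_{k'}$. You instead start from the already-flexible length-$(n+1)$ MDS code of Corollary \ref{cor:extended-3} and append an identically zero coordinate; your computations (weights preserved, $GG^{\dag}=G'G'^{\dag}$, hence hull dimension preserved via Lemma \ref{lem:hull-H}) are all valid, and the resulting code is indeed AMDS with the stated hull dimension. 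The trade-off is that your code is degenerate: its Hermitian dual contains a weight-one vector, so $d(C^{\perp_H})=1$. This is harmless for the corollary as stated, but it matters downstream: Theorem \ref{thm:Q1} applies Lemma \ref{lem:GJG18-construction} to the codes of this corollary and claims an $[[n+2,\,n+2-k'-\ell,\,k';\,k'-\ell]]_q$ EAQECC, which needs the dual minimum distance to be $k'$; the paper's non-degenerate Roth--Lempel construction supports this, whereas your zero-padded code does not. So if you want your argument to serve the paper's purposes rather than only the corollary's literal statement, you should either use the paper's direct construction or additionally argue that a non-degenerate AMDS code with the same hull dimension exists.
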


\begin{proof} The proof follows with the same reasoning as that in Theorem \ref{thm:extended-2} by taking $C'$ to be the code in Theorem \ref{thm:extended}.
\end{proof}

\begin{thm}\label{thm:extended-3} Let $q=p^m$ be a prime power and $N|(q^2-1)$. Put $n=(t+1)N+1$, $k= \lfloor \frac{n+q-1}{q+1} \rfloor$ and $i_{\min}=\min (\lfloor \frac{n-q}{2}\rfloor-k,q-1)$, where $1\le t\le \frac{q-1}{n_2}-2$ and $n_2=\frac{N}{\gcd (N,q+1)}$. Assume that $n$ is odd and $(n-1)|k(q+1)$ or $(n-1)|(k+1)(q+1)$.
Then, for any $1\le k'\le k+{i_{\min}}+1$, there exists an MDS linear code $C$ with parameters
 $[n,k']_{q^2}$ such that $\dim(Hull_H(C))=j$ for $0\le j\le k'-\sharp \Delta_{k'}$.
 \end{thm}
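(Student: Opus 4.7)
The plan is to follow the proof of Theorem \ref{thm:extended-2} line by line, but with the cyclic evaluation set $\{\alpha\in\F_{q^2}:\alpha^{n-1}=1\}\cup\{0\}$ replaced by the multi-coset set $U=U_N\cup\bigcup_{j=1}^{t}\alpha_jU_N\cup\{0\}$ introduced in part (3) of Theorem \ref{thm:TGRS-hermitian}. For this $U$ the construction in Theorem \ref{thm:TGRS-hermitian}(3) already supplies a weight vector ${\bf v}=(v_1,\ldots,v_n)$ with $u_i=v_i^{q+1}$ for every $i$, so Lemma \ref{lem:existence}(2) provides a Hermitian self-orthogonal $[n,k]_{q^2}$ GRS code with $k=\lfloor(n+q-1)/(q+1)\rfloor$, which is the base object of the construction.

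The key step is to prove a multi-coset analog of Lemma \ref{lem:2}. Let $g_1,\ldots,g_{K}$ be the rows of the matrix $L_K$ in the generator matrix $G_K$ of (\ref{eq:Gi}), where $K=k+i_{\min}+1$. I would compute $\langle g_i,g_j\rangle_H=\sum_{\ell=1}^{n}v_\ell^{q+1}\alpha_\ell^{(i-1)+q(j-1)}$ by decomposing the sum over the cosets $U_N,\alpha_1U_N,\ldots,\alpha_tU_N$ and the isolated point $0$; using the identity $\alpha_j^N\in\F_q^*$ from Theorem \ref{thm:TGRS-hermitian}(3) one reduces each exponent modulo a multiple of $N$, and the hypothesis $(n-1)\mid k(q+1)$ or $(n-1)\mid(k+1)(q+1)$ then singles out exactly those diagonal terms that remain nonzero. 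This yields the three assertions of Lemma \ref{lem:2} in the present setting, and with the same choices of $\lambda_1,\lambda_2$ as in the proof of Theorem \ref{thm:extended} delivers an MDS $[n+2,K]_{q^2}$ code whose Gram matrix $G_{K}G_{K}^{\dag}$ has rank $\sharp\Delta_K+2$; puncturing the last two coordinates and truncating to any dimension $1\le k'\le K$ then provides a multi-coset analog of Corollary \ref{cor:extended}, an MDS $[n,k']_{q^2}$ code with $\dim(Hull_H(C))=k'-\sharp\Delta_{k'}$.

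To obtain the full range $0\le j\le k'-\sharp\Delta_{k'}$ of Hermitian hull dimensions, I would apply the rescaling trick from the proof of Theorem \ref{thm:extended-2}: row-reduce the generator matrix so that its first $k'-\sharp\Delta_{k'}$ rows are mutually and self Hermitian-isotropic, choose $a\in\F_q^*$ with $a^{q+1}\ne 1$, and multiply the first $r$ of those rows by $a$. The resulting Gram matrix has rank $r+\sharp\Delta_{k'}$, so Lemma \ref{lem:hull-H} gives hull dimension $k'-r-\sharp\Delta_{k'}$, sweeping over all the required values as $r$ varies. The main obstacle is precisely the multi-coset version of Lemma \ref{lem:2}: in the cyclic case the relation $\alpha_\ell^{n-1}=1$ instantly collapses the power sums $\sum_\ell u_\ell\alpha_\ell^e$, but in the multi-coset setting these sums split into $t+1$ geometric pieces, one over each coset, and verifying that the off-diagonal entries still vanish while exactly $\sharp\Delta_{k'}$ diagonal entries survive requires a careful exponent-reduction argument that exploits both the congruence $\alpha_j^{N(q+1)}=\alpha_j^{2N}\in\F_q^*$ and the divisibility hypothesis on $n-1$.
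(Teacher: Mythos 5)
Your proposal is correct in outline and follows the same overall strategy as the paper: establish a multi-coset analogue of Lemma~\ref{lem:2} for the evaluation set (\ref{eq:multi-coset}) and then apply the rescaling trick of Theorem~\ref{thm:extended-2} to sweep the hull dimension from $k'-\sharp\Delta_{k'}$ down to $0$. Two differences from the paper's argument are worth flagging. First, the paper does not detour through the $[n+2,K]$ Roth--Lempel extension with $\lambda_1,\lambda_2$ and then puncture; it works directly with the length-$n$ code $GRS_{k'}(\alpha,{\bf v})$, which is all the statement requires. Second, and more substantively, the mechanism by which the paper proves $\langle g_i,g_j\rangle_H=0$ for $i\neq j$ is not the per-coset exponent reduction you sketch: the paper sets $z_i=v_i^{(q+1)/2}$, observes that $GRS_{K'}(\alpha,{\bf z})$ is Euclidean self-orthogonal since $u_i=z_i^2$, extracts from this the single global power-sum identity (\ref{eq:E-multi}) valid for all exponents $1\le e\le (t+1)N-2$, and then reduces the Hermitian exponent $(i-1)+q(j-1)$ modulo $n-1=(t+1)N$ to land in that range. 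Your per-coset route can be pushed through because the weights $u_i=1/P_U'(a_i)$ are constant on each coset by (\ref{eq:derivative}), so each coset sum collapses to a geometric sum over $U_N$; but as written your sketch leaves the hardest identity (why the surviving $N\mid e$ contributions, summed across cosets together with the point $0$, still cancel) to ``a careful exponent-reduction argument,'' whereas the paper's Euclidean-to-Hermitian transfer disposes of all off-diagonal cases at once. If you adopt that transfer, the rest of your write-up (identification of $\Delta_{k'}$, the choice of $a$ with $a^{q+1}\neq 1$, and Lemma~\ref{lem:hull-H}) matches the paper.
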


\begin{proof}Let $U$ be defined as in Eq. (\ref{eq:multi-coset}). For $k<k'\le k+{i_{\min}}+1$, consider the code $GRS_{k'}(\alpha,{\bf v})$ with $\alpha=(0,{\bf a},\alpha_1{\bf a},\hdots,\alpha_t{\bf a})$ and ${\bf v}=(v_0,{\bf v}^{(0)},{\bf v}^{(1)},\hdots,{\bf v}^{(t)})$ being vectors of length $n=(t+1)N+1$, where ${\bf a}=(a_1,\hdots,a_N)$ and $v_i^{q+1}=u_i$ for $1\le i\le n$. Denote the generator matrix $GRS_{k'}(\alpha,{\bf v})$ by ${\cal G}_{k'}$, where ${\cal G}_{k'}$ is given as follows:
{\tiny
$$
\left(
\begin{array}{cc|c|c|c}
v_0&{\bf v}^{(0)}&{\bf v}^{(1)}&\hdots&{\bf v}^{(t)}\\
0&{\bf v}^{(0)}* {\bf a}&{\bf v}^{(1)}*(\alpha_1{\bf a})&\hdots&{\bf v}^{(t)}*(\alpha_t{\bf a})\\
\vdots&\vdots&\vdots&\vdots&\vdots\\
0&\left({\bf v}^{(0)}* {\bf a}\right)^{k'-1}&\left({\bf v}^{(1)}*(\alpha_1{\bf a})\right)^{k'-1}&\hdots&\left({\bf v}^{(t)}*(\alpha_t{\bf a})\right)^{k'-1}\\
\end{array}
\right),
$$
}
where $(x_1,\hdots,x_N)*(y_1,\hdots,y_N)$ denotes $(x_1y_1,\hdots,x_Ny_N)$.
Set $z_i=v_i^{\frac{q+1}{2}}$ for $1\le i\le n$. Since $u_i={z_i}^2$ for $1\le i\le n$, for any $K'\le \frac{n}{2}$, the code $GRS_{K'}(\alpha,{\bf z})$, where $\alpha=(a_1,\hdots,a_n)$ and ${\bf z}=(z_1,\hdots,z_n)$, is Euclidean self-orthogonal with parameters $[n,K']_{q^2}$. 
Consider a generator matrix of $GRS_{K'}(\alpha,{\bf z})$ with the following form:
\begin{equation}
{\cal G}'_{K'}=\left(
\begin{array}{ccccc}
z_1&z_2&\hdots&z_{n}\\
z_1a_1&z_2a_2&\hdots&z_{n}a_{n}\\
\vdots&\vdots&\vdots&\vdots\\
z_1a_1^{K'-1}&z_2a_2^{K'-1}&\hdots&z_{n}a_{n}^{K'-1}\\
\end{array}
\right)
=\left(
\begin{array}{c}
{g}'_1\\
{g}'_2\\
\vdots\\
{g}'_{K'}\\
\end{array}
\right).
\end{equation}
By the Euclidean self-orthogonality of the code, it follows that for $1\le i\le K'$ and $2\le j\le K'$,
\begin{equation*}\label{eq:E-gigj-2}
\begin{array}{ll}
\langle {g}'_i,{g}'_j\rangle_E&=\sum\limits_{l=1}^n{z_l^2a_l^{(i-1)+(j-1)}}\\
&=\sum\limits_{l=1}^Nz_l^2a_l^{(i-1)+(j-1)}+\sum\limits_{l=1}^Nz_{N+l}^2(\alpha_1a_l)^{(i-1)+(j-1)}\\
&~~+\cdots +\sum\limits_{l=1}^Nz_{Nt+l}^2(\alpha_ta_l)^{(i-1)+(j-1)}\\
&=0.\\
\end{array}
\end{equation*}
It follows that for $1\le i\le (t+1)N-2$,
\begin{equation}\label{eq:E-multi}
\sum\limits_{l=1}^Nz_l^2a_l^{i}+\sum\limits_{l=1}^Nz_{N+l}^2(\alpha_1a_l)^{i}
+\cdots +\sum\limits_{l=1}^Nz_{Nt+l}^2(\alpha_ta_l)^{i}=0.
\end{equation}

From the generator matrix ${\cal G}_{k'}$, we have that, for $1\le i\le k'$ and $2\le j\le k',$
\begin{equation}\label{eq:H-multi}
\begin{array}{ll}
\langle {g}_i,{g}_j\rangle_H&=\sum\limits_{l=1}^n{v_l^{q+1}a_l^{(i-1)+q(j-1)}}\\
&=\sum\limits_{l=1}^Nz_l^2a_l^{(i-1)+q(j-1)}+\sum\limits_{l=1}^Nz_{N+l}^2(\alpha_1a_l)^{(i-1)+q(j-1)}\\
&+\cdots +\sum\limits_{l=1}^Nz_{tN+l}^2(\alpha_ta_l)^{(i-1)+q(j-1)}.
\end{array}
\end{equation}
By writing the exponent $(i-1)+q(j-1)$ in the summand (\ref{eq:H-multi}) as $(i-1)+q(j-1)=A\times (t+1)N+B(i,j)$ with $0\le B(i,j)< (t+1)N$, we obtain that
\begin{equation}\label{eq:H-multi-2}
\begin{array}{ll}
\langle {g}_i,{g}_j\rangle_H&=\sum\limits_{l=1}^n{v_l^{q+1}a_l^{B(i,j)}}\\
&=\sum\limits_{l=1}^Nz_l^2a_l^{B(i,j)}+\sum\limits_{l=1}^Nz_{N+l}^2(\alpha_1a_l)^{B(i,j)}\\
&~~+\cdots +\sum\limits_{l=1}^Nz_{tN+l}^2(\alpha_ta_l)^{B(i,j)}.
\end{array}
\end{equation}
It follows from (\ref{eq:E-multi}) that for $1\le i\le k'$ and $2\le j\le k',$
$\langle {g}_i,{g}_j\rangle_H=0$. 
The rest follows with the same reasoning as that in the proof of Theorem \ref{thm:extended-2}.
\end{proof}

\begin{exam} Take $q=5$, $\theta$ a primitive element of $\F_{5^2}$, $N=6$ and $t=2$, we can construct an MDS $[19,8,12]_{5^2}$ code $C_8$ with $\dim Hull_H(C_8)=3$. Its generator matrix is $G_8=(A_8|B_8)$, where  
{\scriptsize
$$
A_8=\left(
\begin{array}{llllllllll}
\theta^{23}&\theta^{21}&1&\theta^{22}&\theta^{21}&1&\theta^{22}&\theta^{21}&1&\theta^{22}\\
0&\theta^{9}&\theta^{13}&4&\theta^{13}&\theta^{17}&\theta^{16}&\theta^{17}&\theta^{21}&\theta^{20}\\
0&\theta^{21}&\theta^{2}&\theta^{2}&\theta^{5}&\theta^{10}&\theta^{10}&\theta^{13}&3&3\\
0&\theta^{9}&\theta^{15}&\theta^{16}&\theta^{21}&\theta^{3}&\theta^{4}&\theta^{9}&\theta^{15}&\theta^{16}\\
0&\theta^{21}&\theta^{4}&2&\theta^{13}&\theta^{20}&\theta^{22}&\theta^{5}&4&\theta^{14}\\
0&\theta^{9}&\theta^{17}&\theta^{20}&\theta^{5}&\theta^{13}&\theta^{16}&\theta&\theta^{9}&4\\
0&\theta^{21}&2&\theta^{10}&\theta^{21}&2&\theta^{10}&\theta^{21}&2&\theta^{10}\\
0&\theta^{9}&\theta^{19}&1&\theta^{13}&\theta^{23}&\theta^{4}&\theta^{17}&\theta^{3}&\theta^{8}\\
\end{array}
\right),
$$
$$
B_8=\left(
\begin{array}{llllllllll}
\theta^{21}&1&\theta^{22}&\theta^{21}&1&\theta^{22}&\theta^{21}&1&\theta^{22}\\
\theta^{21}&\theta&1&\theta&\theta^{5}&\theta^{4}&\theta^{5}&\theta^{9}&\theta^{8}\\
\theta^{21}&\theta^{2}&\theta^{2}&\theta^{5}&\theta^{10}&\theta^{10}&\theta^{13}&3&3\\
\theta^{21}&\theta^{3}&\theta^{4}&\theta^{9}&\theta^{15}&\theta^{16}&\theta^{21}&\theta^{3}&\theta^{4}\\
\theta^{21}&\theta^{4}&2&\theta^{13}&\theta^{20}&\theta^{22}&\theta^{5}&4&\theta^{14}\\
\theta^{21}&\theta^{5}&\theta^{8}&\theta^{17}&\theta&\theta^{4}&\theta^{13}&\theta^{21}&1\\
\theta^{21}&2&\theta^{10}&\theta^{21}&2&\theta^{10}&\theta^{21}&2&\theta^{10}\\
\theta^{21}&\theta^{7}&4&\theta&\theta^{11}&\theta^{16}&\theta^{5}&\theta^{15}&\theta^{20}\\
\end{array}
\right).
$$
}
It can be easily checked that $G_8G_8^\dag=\textnormal{diag} (0,0,0,1,2,0,0,1)$, and thus $\langle g_i,g_j\rangle_H=0$ for any $1\le i\not=j\le 8$ and $\langle g_4,g_4\rangle_H\not=0$, $\langle g_5,g_5\rangle_H\not=0$, $\langle g_8,g_8\rangle_H\not=0$. Hence, for any $3\le k'\le 8$, there exists an MDS $[19,k']_{5^2}$ code $C'$ with $\dim Hull_H(C')=j$ for $0\le j\le k'-3$.
\end{exam}
\section{Application to EAQECCs}\label{section:application}
In this section, we construct EAQECCs from classical linear codes using the method proposed by Wilde {\em et al. }\cite{WilBru}. The next subsection introduces some basic notions about quantum codes, especially the notions of the entanglement-assisted quantum error-correcting codes.
\subsection{Quantum codes}
Let  $({\mathbb C})^{\bigotimes n}$ ($\cong \mathbb{C}^{q^n}$) be the Hilbert space over the complex field $\mathbb{C}$ of dimension $q^n$. A $q$-ary quantum code $Q$ of length $n$ is a subspace of $\mathbb{C}^{q^{n}}$ with dimension $K\ge 1.$
Let $\{|\textbf{a}\rangle =|a_{1}\rangle\bigotimes|a_{2}\rangle\bigotimes\hdots\bigotimes|a_{n}\rangle: (a_{1}, a_{2}, \ldots, a_{n}) \in \mathbb{F}^{n}_{q}\}$ be a basis of $\mathbb{C}^{q^{n}}$. The inner product of two quantum states
$
|\phi_1\rangle=\sum\limits_{{\bf a}\in \F_{q}^{n}}\phi_1({\bf a})|{\bf a}\rangle 
\text { and } |\phi_2\rangle=\sum\limits_{{\bf a}\in \F_{q}^{n}}\phi_2({\bf a})|{\bf a}\rangle
$
is defined by
\begin{equation*}
\langle \phi_1|\phi_2\rangle=\sum\limits_{{\bf a}\in \F_{q}^{n}}
\overline{\phi_1({\bf a})}\phi_2({\bf a})\in {\mathbb C} \text{,} 
\end{equation*}
where  $\overline{\phi_1({\bf a})}$ is the complex conjugate of 
$\phi_1({\bf a})$. Two quantum states $|\phi_1\rangle$ and $|\phi_2\rangle$ are said to be
\textit{orthogonal} if $\langle \phi_1|\phi_2\rangle=0$.
Let $\zeta_{p}$ be a complex primitive $p$-th root of unity, and denote the trace function from $\mathbb{F}_{q}$ to $\mathbb{F}_{p}$ by $tr(.)$.

The rules of $X(\textbf{a})$ and $Z(\textbf{b})$ on $|\textbf{v}\rangle \in \mathbb{C}^{q^{n}}$ ($\textbf{v} \in \mathbb{F}^{n}_{q}$) are given as
\begin{equation}
X(\textbf{a})|\textbf{v}\rangle=|\textbf{v}+\textbf{a}\rangle
\textnormal{ and }
Z(\textbf{b})|\textbf{v}\rangle=\zeta_{p}^{tr(\langle \textbf{v}, \textbf{b} \rangle_{E})}|\textbf{v}\rangle,
\label{eq:action}
\end{equation}
respectively.
For two vectors ${\bf a}=(a_1,\ldots,a_{n}),{\bf b}=(b_1,\ldots,b_{n}) \in \F_{q}^{n}$, we can write, from (\ref{eq:action}), $X({\bf a}) = X(a_{1}) \otimes \ldots \otimes X(a_{n})$ and
$Z({\bf b}) = Z(b_{1}) \otimes \ldots \otimes Z(b_{n})$ for the tensor product of
$n$ (error) operators. The set ${\cal E}_{n}=\{ X({\bf a})Z({\bf b}) : {\bf a},{\bf b} \in \F_{q}^{n} \}$
is an error basis on $\mathbb{C}^{q^{n}}$.
The error group $G_{n}$ 
is defined by
\begin{equation*}
 G_{n}:=\{\zeta_{p}^{t} X({\bf a}) Z({\bf b}) : {\bf a},{\bf b} \in \F_{q}^{n}, t \in \F_{p} \} \text{.}
\end{equation*}
For $E = \zeta_{p}^{t} X({\bf a}) Z({\bf b}) \in G_{n}$, the \textit{quantum weight}
${\bf wt}_{Q}(E)$ of $E$ is the number of coordinates such that 
$(a_{i},b_{i}) \neq (0,0)$. A quantum code $Q$ with dimension $K\ge 2$ is said to  detect $d-1$ quantum errors ($d\geq1$) if, for any pair $|\phi_1\rangle$ and $|\phi_2\rangle$ in
$Q$ with $\langle \phi_1 |\phi_2 \rangle=0$ and any $E \in G_{n}$ with ${\bf wt}_{Q}(E) \leq d-1$, 
$|\phi_1\rangle$ and $E|\phi_2\rangle$ are orthogonal. In this case, such a code is called symmetric (\cite{AK01,Ket Kla}) and its parameters is written as $((n,K,d))_q$ or $[[n,k,d]]_q$ where $k=\log_qK$.

For $S$ being an abelian subgroup of $G_n$, the quantum stabilizer codes $C(S)$ is defined as
$$
C(S):=\{|\phi \rangle : E|\phi \rangle=|\phi \rangle,\forall E\in S\}.
$$
The authors \cite{Cal Rai,AK01,Ket Kla} showed that such codes can be constructed from classical linear codes with some properties of self-orthogonality.
However, the above construction method does not work anymore if the subgroup $S$ of $G_n$ is non-abelian. Brun \emph{et al.} \cite{BDH06} have improved the construction by introducing the so-called entanglement-assisted quantum error-correcting codes (EAQECCs). In their method, $S$ is extended to be a new abelian subgroup in a larger error group, and furthermore they assumed that both sender and receiver shared a specific amount of pre-existing entangled bits, which was not subject to errors. 

We use $[[n, k, d; c]]_{q}$ to denote a $q$-ary $[[n, k, d]]_{q}$ quantum code that utilizes $c$ pre-shared entanglement pairs. 
For $c=0$, an $[[n, k, d; 0]]_{q}$ EAQECC is equivalent to a quantum stabilizer code \cite{AK01}.

 \subsection{Constructions of new EAQECCs}
 In this subsection, we provide constructions of new EAQECCs, especially, the EAQECCs whose parameters satisfy some constraints.
 The Singleton bound for an EAQECC with parameters $[[n, k, d; c]]_{2}$ was given by Lai {\em et al. }\cite{LA18}. 
 Later, Grassl \cite{Grassl} found some counter examples of the bound given by \cite{LA18} when the minimum distance of the code is strictly greater than $\frac{n+2}{2}$, and the new bounds have been determined by Grassl {\em et al. }\cite{GHW}.

\begin{prop}[\cite{GHW}]\label{lem4.1}
  For any $[[n, k, d; c]]_{q}$-EAQECC, we have  
\begin{equation}\label{eq:bound1}
  k\le c+ \max \{0, n- 2d+2\},
\end{equation}
\begin{equation}\label{eq:bound0}
  k\le n-d+1,
\end{equation}
\begin{equation}\label{eq:bound2}
   k\le \frac{n-d+1}{3d-3-n}(c+2d-2-n)\text{ if } d \ge \frac{n+2}{2}.
\end{equation}
\end{prop}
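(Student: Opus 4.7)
The plan is to prove each of the three inequalities separately by reducing the EAQECC setting to classical coding bounds, starting with the easier two and finishing with the refined bound of Grassl--Huber--Winter.

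First, I would tackle the bound (\ref{eq:bound0}). The key observation is that any $[[n,k,d;c]]_q$ EAQECC, after fixing the half of the maximally entangled state held by the receiver, is a quantum channel that reliably transmits $k$ logical qudits using $n$ physical qudits while correcting patterns of up to $\lfloor (d-1)/2\rfloor$ errors. By the no-cloning/Knill--Laflamme conditions applied together with a classical Singleton-type dimension count on the orthogonal complement of the code, one deduces $k\le n-d+1$ independently of $c$. This is a ``classical'' bound in the sense that it does not exploit the entanglement assistance at all, and its proof is essentially the classical Singleton bound applied to the algebraic data underlying the code.

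For (\ref{eq:bound1}), I would exploit the standard correspondence of Brun--Devetak--Hsieh: an $[[n,k,d;c]]_q$ EAQECC can be lifted to a pure stabilizer code of parameters $[[n+c,k,d]]_q$ by absorbing the $c$ pre-shared ebits as additional qudits on the encoder side. Applying the quantum Singleton bound for ordinary stabilizer codes, namely $k\le (n+c)-2(d-1)$, immediately yields $k\le c+(n-2d+2)$. Since trivially $k\le c$ (as the only information that can be sent for free is through the entanglement itself when $n<2d-2$), one combines the two to get $k\le c+\max\{0,n-2d+2\}$.

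The hard part is the refined bound (\ref{eq:bound2}) in the high-distance regime $d\ge (n+2)/2$. Here the Singleton-type counting is too crude because the code is essentially forced into a small portion of its ambient Hilbert space. Following Grassl--Huber--Winter, I would set up a quantum weight enumerator argument: write the Shor--Laflamme weight distributions $A_i$ and $B_i$ of the associated symplectic code, invoke the MacWilliams-type identity linking them, and then use the Knill--Laflamme conditions together with non-negativity of $B_i-A_i$ on the high-weight portion of the support to derive a linear program. Since $d$ is large, most of the mass of the enumerator must concentrate at a few weights, and solving (or dual-bounding) this linear program produces the sharp inequality $k\le \frac{n-d+1}{3d-3-n}(c+2d-2-n)$. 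The main obstacle is precisely this third step: one has to control the interplay between quantum mutual information, the entanglement parameter $c$, and the weight enumerator in a regime where elementary counting is insufficient, and this is where the bound truly differs from its classical and symmetric quantum analogues.
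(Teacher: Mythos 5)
The paper does not actually prove this proposition: it is quoted from Grassl--Huber--Winter \cite{GHW}, whose arguments are entropic (quantum Shannon-theoretic inequalities applied to reduced states of the code), so there is no in-paper proof to compare against. Judged on its own terms, your proposal has a genuine gap in the argument for \eqref{eq:bound1}, and it is a revealing one. You claim that an $[[n,k,d;c]]_q$ EAQECC lifts to a stabilizer code with parameters $[[n+c,k,d]]_q$ by absorbing the $c$ ebits, and then apply the ordinary quantum Singleton bound to conclude $k\le c+n-2d+2$. If that step were valid, it would establish the linear bound \emph{without} the $\max$, in every regime; but that linear bound is precisely the statement to which Grassl \cite{Grassl} found counterexamples when $d>\frac{n+2}{2}$, as this paper recalls in the sentence immediately preceding the proposition. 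The error is directional: the known correspondence produces an EAQECC \emph{from} a longer stabilizer code, not the reverse. Absorbing the receiver's halves of the ebits does not preserve minimum distance, because the EAQECC distance only measures errors on the $n$ transmitted qudits while the $c$ receiver qudits are assumed noiseless; the induced length-$(n+c)$ stabilizer code may therefore have distance strictly less than $d$. For the same reason, your assertion that ``trivially $k\le c$'' when $n<2d-2$ is not trivial --- it is exactly the nontrivial content of \eqref{eq:bound1} in the high-distance regime and is part of what \cite{GHW} had to prove.

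The sketches for the other two bounds are also not yet proofs. For \eqref{eq:bound0} you invoke ``Knill--Laflamme plus a Singleton-type dimension count'' without explaining why the bound is uniform in $c$, which is the only nonobvious point. For \eqref{eq:bound2} you state that a Shor--Laflamme weight-enumerator linear program ``produces the sharp inequality,'' but no enumerator identities are written down, no dual feasible solution is exhibited, and the coefficient $\frac{n-d+1}{3d-3-n}$ is never derived; moreover, the published derivation proceeds by entropy inequalities rather than by linear programming. As it stands, the third part is a plan rather than a proof, and the second part contains an argument that, taken at face value, proves a false statement.
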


When one of the bounds (\ref{eq:bound1})-(\ref{eq:bound2}) meets with equality, the EAQECC is called MDS.

\begin{prop} \textnormal{(\cite{BDH06})}\label{lem:Q-construction}
Let $P$ be the parity check matrix of an $[n, k, d]_{q^2}$ code $C$.  Then, there exists an $[[n, 2k - n + c, d; c]]_{q}$ EAQECC $\cal Q$, where $c =\text{rank}(PP^{\dag})$ is the required number of maximally entangled states. In particular, if $C$ is an MDS code and $d \leq \frac{n+2}{2}$, then $\mathcal{Q}$ is an MDS EAQECC.
\end{prop}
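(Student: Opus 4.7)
The plan is to invoke the Brun--Devetak--Hsieh (BDH) construction \cite{BDH06}, specialized to the Hermitian formalism over $\F_{q^2}$. First, I would identify the rows of the parity check matrix $P \in \F_{q^2}^{(n-k)\times n}$ with generators of a Pauli-like subgroup on $n$ qudits via the standard correspondence $\F_{q^2} = \F_q(\omega) \to \F_q \oplus \F_q$: each coordinate $a+b\omega$ of a row yields the operator $X(a)Z(b)$. Under this identification, the $n-k$ rows of $P$ give rise to $2(n-k)$ $\F_q$-independent generators in $\F_q^{2n}$, and the $\F_q$-symplectic commutation form of these generators is governed by the Hermitian Gram matrix $PP^{\dag}$; in particular, the $\F_q$-symplectic rank of the row span equals $2c$ where $c=\text{rank}(PP^{\dag})$.

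Next, I would run a symplectic Gram--Schmidt on these $2(n-k)$ generators to decompose them as $2(n-k)-2c$ mutually commuting generators plus $c$ hyperbolic (anticommuting) pairs. Following BDH, the $c$ anticommuting pairs are absorbed by attaching one half of a fresh maximally entangled ebit to each pair and extending the two generators by suitable $X$ and $Z$ operators on Bob's half; the resulting augmented stabilizer $\tilde S$ is abelian on $n+c$ qudits. The EAQECC $\mathcal{Q}$ is then defined as the simultaneous $+1$ eigenspace of $\tilde S$, viewed on Alice's $n$ qudits together with the $c$ pre-shared ebits.

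Counting the parameters, the augmented stabilizer removes $2(n-k)-c$ effective $\F_q$-degrees of freedom from Alice's $n$-qudit space, leaving $n-(2(n-k)-c)=2k-n+c$ logical qudits. The minimum distance is $d$ because any detectable error corresponds, via the identification, to a nonzero vector in $C\setminus\{0\}$, whose minimum weight is $d(C)=d$. For the MDS assertion, $C$ MDS gives $d=n-k+1$, so $n-2d+2 = 2k-n$; the assumption $d \le (n+2)/2$ forces $n-2d+2 \ge 0$, and (\ref{eq:bound1}) then reads $k_Q \le c + 2k - n$, which is saturated by $\mathcal{Q}$. Hence $\mathcal{Q}$ is MDS.

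The main obstacle is the Hermitian-to-symplectic translation together with the verification that $\text{rank}(PP^{\dag})$ is the correct count of anticommuting pairs after the translation, as well as the proper identification of the minimum distance in the EAQECC setting. These are the technical core of \cite{BDH06} and are usually invoked as a black box at this level; the original contribution here is merely to rephrase the input in terms of a parity check matrix with specified Hermitian hull dimension via Lemma \ref{lem:hull-H}.
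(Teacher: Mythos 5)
The paper offers no proof of this proposition: it is quoted directly from \cite{BDH06} (with the Hermitian/finite-field form as in Wilde--Brun and Galindo \emph{et al.}), so there is no internal argument to compare yours against. Your sketch is a faithful reconstruction of the standard route: expand the $n-k$ rows of $P$ into $2(n-k)$ symplectic generators over $\F_q$, identify the symplectic rank of their span with $2\,\mathrm{rank}(PP^{\dag})$, resolve the $c$ hyperbolic pairs with ebits, count $n-2(n-k)+c=2k-n+c$ logical qudits, and check that the bound (\ref{eq:bound1}) is saturated when $C$ is MDS and $d\le \frac{n+2}{2}$; all of this is correct. The only slip is the phrase ``any detectable error corresponds \dots to a nonzero vector in $C\setminus\{0\}$'': it is the \emph{undetectable} errors --- those commuting with the augmented stabilizer on Alice's qudits --- that correspond to nonzero codewords of $C=(C^{\perp_H})^{\perp_H}$, whence every such error has weight at least $d$.
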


By combining Proposition \ref{lem:Q-construction} and Lemma \ref{lem:hull-H} together, $q$-ary EAQECCs can be constructed from a classical $q^2$-linear code as follows.
\begin{lem}\textnormal{(\cite{GJG18})}\label{lem:GJG18-construction}
Let $C$ be a linear code with parameters $[n, k, d]_{q^2}$ and $C^{\perp_H}$ its Hermitian dual  with parameters $[n,k,d']_{q^2}$. Assume that $\dim (Hull_H(C))=\ell$. Then, there exist an $[[n, k-\ell, d;n-k-\ell]]_{q}$ EAQECC and an $[[n, n-k-\ell, d';k-\ell]]_{q}$ EAQECC.
\end{lem}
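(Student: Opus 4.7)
The plan is to derive both EAQECCs by directly combining Proposition \ref{lem:Q-construction} with Lemma \ref{lem:hull-H}, applied once to $C$ itself and once to its Hermitian dual $C^{\perp_H}$. Apart from careful bookkeeping on dimensions, there is nothing substantive to verify.

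For the first EAQECC, I would take $P$ to be a parity check matrix of $C$, which is an $(n-k)\times n$ matrix. Lemma \ref{lem:hull-H} gives
$$c := \textnormal{rank}(PP^{\dag}) = n-k-\dim(\textnormal{Hull}_H(C)) = n-k-\ell.$$
Plugging into the parameters $[[n,\, 2k-n+c,\, d;\, c]]_q$ furnished by Proposition \ref{lem:Q-construction} and simplifying yields $[[n,\, k-\ell,\, d;\, n-k-\ell]]_q$, which is the first claimed code.

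For the second EAQECC, I would apply Proposition \ref{lem:Q-construction} to $C^{\perp_H}$, whose parity check matrix is a generator matrix $G$ of $C$ (a $k\times n$ matrix). Using the second identity in Lemma \ref{lem:hull-H},
$$c' := \textnormal{rank}(GG^{\dag}) = k-\dim(\textnormal{Hull}_H(C^{\perp_H})) = k-\dim(\textnormal{Hull}_H(C)) = k-\ell,$$
where the middle equality is exactly the symmetry of Hermitian hull dimension under dualization recorded in Lemma \ref{lem:hull-H}. Substituting $n-k$ in place of $k$ in the Proposition \ref{lem:Q-construction} formula gives an $[[n,\, 2(n-k)-n+c',\, d';\, c']]_q = [[n,\, n-k-\ell,\, d';\, k-\ell]]_q$ EAQECC, as required.

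The main (and essentially only) point requiring attention is the switch of roles between generator and parity check matrices when passing from $C$ to $C^{\perp_H}$, together with the duality symmetry $\dim(\textnormal{Hull}_H(C))=\dim(\textnormal{Hull}_H(C^{\perp_H}))$ that makes the hull parameter $\ell$ appear consistently in both constructions. The minimum distances $d$ and $d'$ of the two EAQECCs are inherited directly from the classical codes $C$ and $C^{\perp_H}$ via Proposition \ref{lem:Q-construction}, so no additional work is needed there.
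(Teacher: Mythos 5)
Your proposal is correct and follows exactly the route the paper indicates: combining Proposition \ref{lem:Q-construction} with Lemma \ref{lem:hull-H}, applied to $C$ and then to $C^{\perp_H}$ (the paper states the lemma as a cited consequence of precisely this combination and gives no further proof). The only pedantic caveat is that the parity check matrix of $C^{\perp_H}$ is the entrywise $q$-th power of a generator matrix of $C$ rather than the generator matrix itself, but since Frobenius is a field automorphism this does not change $\textnormal{rank}(GG^{\dag})$, so your computation of $c'=k-\ell$ stands.
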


It is easy to check that if a code $C$ is a linear code with Hermitian hull dimension $\ell$, then so is its dual $C^{\perp_H}$.
By applying Lemma \ref{lem:GJG18-construction} to the codes constructed in Theorem \ref{thm:TGRS-hermitian}, we obtain the following result.

\begin{thm}\label{thm:Q0}
Let $q=p^m$ be a prime power and $1\le k\le \lfloor \frac{n}{q+1} \rfloor$. Assume that one of the following conditions holds:
\begin{enumerate}
\item $(n-1)|(q^2-1)$;
\item $n=tq$, $1\le t\le q-1$;
\item $n=(t+1)N+1$, $N|(q^2-1)$, $n_2=\frac{N}{\gcd (N,q+1)}$, $1\le t\le \frac{q-1}{n_2}-2$.
\end{enumerate}
Then, for any $0\le \ell \le k-1$, there exist an $[[n, k-\ell, \ge n-k;n-k-\ell]]_{q}$ EAQECC and an $[[n, n-k-\ell, \ge k;k-\ell]]_{q}$ EAQECC.
\end{thm}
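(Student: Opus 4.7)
The plan is to chain Theorem \ref{thm:TGRS-hermitian} with Lemma \ref{lem:GJG18-construction}. Given $\ell\in\{0,1,\ldots,k-1\}$ and whichever of the three listed hypotheses holds, I would first invoke Theorem \ref{thm:TGRS-hermitian} with $r=k-\ell$ to produce a TGRS code $C$ of parameters $[n,k,\ge n-k]_{q^2}$ whose Hermitian hull has dimension exactly $\ell$. The $\ell=0$ case is handled by the same construction at the boundary: pushing the scaling $a$ across enough coordinates forces the evaluation polynomial $f$ to vanish on enough of the $\alpha_i$ to collapse the hull entirely.

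Next, I would establish a lower bound on the minimum distance $d'$ of the Hermitian dual $C^{\perp_H}$. Because $d(C)\ge n-k$, Lemma \ref{lem:BeePuRon} forces $C$ to be MDS or AMDS; in either case the explicit parity-check matrices \eqref{parity-check:1}--\eqref{parity-check:3}, which themselves are TGRS generators for the dual code, yield $d'\ge k$ by the usual Vandermonde/Singleton argument applied to the dual.

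The final step is to feed $C$ into Lemma \ref{lem:GJG18-construction}, which in one shot produces both an $[[n, k-\ell, \ge n-k; n-k-\ell]]_{q}$ EAQECC and an $[[n, n-k-\ell, \ge k; k-\ell]]_{q}$ EAQECC, matching the two claims of the theorem. No individual step is a genuine obstacle; the result is essentially a repackaging of Theorem \ref{thm:TGRS-hermitian} with the Brun--Devetak--Hsieh style construction encoded in Lemma \ref{lem:GJG18-construction}. The only subtlety worth verifying carefully is that the proof of Theorem \ref{thm:TGRS-hermitian} actually establishes the equality $\dim Hull_H(C)=k-r$ (rather than merely the upper bound appearing in its statement), so that $\ell$ can indeed be set to every integer in the advertised range.
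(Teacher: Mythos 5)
Your proposal matches the paper's proof: Theorem \ref{thm:Q0} is obtained exactly by feeding the codes of Theorem \ref{thm:TGRS-hermitian} into Lemma \ref{lem:GJG18-construction}, and your two supporting observations --- that the proof of Theorem \ref{thm:TGRS-hermitian} actually yields $\dim Hull_H(C)=k-r$ exactly, and that the Hermitian dual, being a TGRS code of dimension $n-k$ and hence MDS or AMDS, has minimum distance at least $k$ --- are precisely the ingredients the paper leaves implicit. The only caveat is the boundary case $\ell=0$ (i.e.\ $r=k$), which lies outside the stated range $0\le r\le k-1$ of Theorem \ref{thm:TGRS-hermitian} and where the degree argument only gives $\deg(f^q-g)\le n-k=n-r$ rather than the strict inequality needed to force $f^q=g$; this off-by-one is present in the paper's own statement of the range of $\ell$ as well, so it is not a defect introduced by your argument.
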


\begin{rem} EAQECCs obtained from Theorem \ref{thm:Q0} may have  parameters  overlapped with those in \cite{FangFuLiZhu} if the classical codes $[n,k]_{q^2}$, used to construct the EAQECCs, are MDS. 
\end{rem}

By applying Lemma \ref{lem:GJG18-construction} to the codes constructed in Corollary \ref{cor:extended-4}, we obtain the following result.

\begin{thm}\label{thm:Q1} Assume that $n$ is odd and $(n-1)|(q^2-1)$.
Put $k= \lfloor \frac{n+q-1}{q+1} \rfloor$ and $i_{\min}=\min (\lfloor \frac{n-q}{2}\rfloor-k,q-1)$. Assume further that $(n-1)|k(q+1)$. Then, for any $1\le k'\le k+{i_{\min}}+1$ and $0\le \ell\le k'-\sharp \Delta_{k'}$, there exist an $[[n+2, k'-\ell, n+2-k';n+2-k'-\ell]]_{q}$ EAQECC and an $[[n+2, n+2-k'-\ell,  k';k'-\ell]]_{q}$.
\end{thm}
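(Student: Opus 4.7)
The strategy is to chain together the AMDS construction of Corollary~\ref{cor:extended-4} with the EAQECC construction of Lemma~\ref{lem:GJG18-construction}. Concretely, I would first invoke Corollary~\ref{cor:extended-4}: under the stated hypotheses ($n$ odd, $(n-1)\mid (q^{2}-1)$, $(n-1)\mid k(q+1)$), for every $1\le k'\le k+i_{\min}+1$ and every $0\le \ell\le k'-\sharp\Delta_{k'}$, it produces an AMDS linear code $C$ with parameters $[n+2,k',n-k'+2]_{q^2}$ and Hermitian hull of the prescribed dimension $\dim(\mathrm{Hull}_{H}(C))=\ell$.

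Next, I would feed $C$ into Lemma~\ref{lem:GJG18-construction} by setting $(n,k,d):=(n+2,k',n+2-k')$ and $\ell:=\ell$. The first output of the lemma is an $[[\,n+2,\; k'-\ell,\; n+2-k';\; (n+2)-k'-\ell\,]]_{q}$ EAQECC, which is exactly the first EAQECC asserted by the theorem, so that half requires no further work. The second output of the lemma is an $[[\,n+2,\; (n+2)-k'-\ell,\; d';\; k'-\ell\,]]_{q}$ EAQECC, where $d'$ is the minimum distance of the Hermitian dual $C^{\perp_H}$. The theorem asserts that this distance is at least $k'$, and reducing everything to showing $d'\ge k'$ is the main obstacle.

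Establishing $d'\ge k'$ is delicate because $C$ is only AMDS rather than MDS, so the Singleton bound only gives $d'\le k'+1$ with no free lower bound. The plan is to exploit the explicit structure of $C$ from Theorem~\ref{thm:extended}: its generator matrix has the block form $(L_{k'}\mid R_{k'})$ where $L_{k'}$ is the evaluation matrix of a GRS-type code at $\alpha$ and $R_{k'}$ is the $k'\times 2$ perturbation carrying the $(\lambda_{1},\lambda_{2})$-entries. Any codeword of $C^{\perp_{H}}$ restricts on the first $n$ coordinates to an element of the Hermitian dual of the GRS code generated by $L_{k'}$, and that GRS dual is MDS of minimum distance $k'+1$. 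Since the last two coordinates are only two positions, they can drop the support by at most one, yielding $d'\ge k'$. Equivalently, the construction is near-MDS in the sense $d(C)+d(C^{\perp_H})\ge n+1$, and this is what upgrades the lemma's output to the claimed parameters $[[\,n+2,\,n+2-k'-\ell,\,k';\,k'-\ell\,]]_{q}$.

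The hardest step will be making the last paragraph rigorous: one must verify, coordinate by coordinate, that no nonzero dual codeword can have weight below $k'$. I expect this to reduce to showing that a nonzero polynomial of degree $< k'$ cannot simultaneously vanish at more than $n-k'$ of the $\alpha_{i}$ while absorbing the $\lambda_{1},\lambda_{2}$ contributions through the $R_{k'}$ columns. Once this structural bound is in place, combining it with the two applications above yields both claimed EAQECCs.
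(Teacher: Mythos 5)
Your overall chain is exactly the paper's: the paper proves this theorem in one line by feeding the AMDS $[n+2,k',n+2-k']_{q^2}$ codes of Corollary~\ref{cor:extended-4} (with prescribed Hermitian hull dimension $\ell$) into Lemma~\ref{lem:GJG18-construction}, and your first two steps reproduce that verbatim. You go further than the paper in one respect: you notice that the second EAQECC requires a lower bound $d'\ge k'$ on the minimum distance of $C^{\perp_H}$, which is not automatic for an AMDS code and which the paper never addresses. Flagging this is a genuine improvement in rigor.

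However, the argument you sketch for $d'\ge k'$ is incorrect as stated. A codeword $(c_1,\dots,c_n,c_{n+1},c_{n+2})$ of $C^{\perp_H}$ does \emph{not} restrict on the first $n$ coordinates to an element of $GRS_{k'}(\alpha,{\bf v})^{\perp_H}$: only the first $k'-2$ rows of the generator matrix $(L_{k'}:R_{k'})$ have zero right-hand block, while the last two rows carry $\lambda_1$ and $\lambda_2$, so the restriction is merely forced into $GRS_{k'-2}(\alpha,{\bf v})^{\perp_H}$, an MDS code of minimum distance $k'-1$, not $k'+1$. (Indeed, if your claim were true, appending coordinates could only increase weight and you would conclude $d'\ge k'+1$, contradicting the fact that $C$ is AMDS and not MDS.) The correct repair is a short case analysis: if $(c_{n+1},c_{n+2})=(0,0)$ then the restriction lies in $GRS_{k'}(\alpha,{\bf v})^{\perp_H}$ and has weight at least $k'+1$; if $(c_{n+1},c_{n+2})\ne(0,0)$ then the restriction is nonzero (a zero restriction together with $\lambda_1,\lambda_2\ne 0$ and $\delta=0$ forces $c_{n+1}=c_{n+2}=0$), hence has weight at least $k'-1$, and the total weight is at least $(k'-1)+1=k'$. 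With that substitution your plan closes the gap the paper leaves open; as written, the key step would not survive being made rigorous.
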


\begin{rem} EAQECCs obtained from Theorem \ref{thm:Q1} have new parameters that can not be constructible by \cite{FangFuLiZhu}, for instance, the code length in Theorem \ref{thm:Q1} can take the value $q^2+2$ while the code lengths \cite{FangFuLiZhu} are always less than or equal to $q^2+1$.
\end{rem}

It is well known that the Hermitian dual of an MDS linear code is again an MDS linear code. By applying Lemma \ref{lem:GJG18-construction} to the MDS $[n,k]_{q^2}$ codes constructed in Theorem \ref{thm:extended-2} and Corollary \ref{cor:extended-3}, we obtain the following result.

\begin{thm}\label{thm:Q2} Assume that $n$ is odd and $(n-1)|(q^2-1)$.
Put $k= \lfloor \frac{n+q-1}{q+1} \rfloor$ and $i_{\min}=\min (\lfloor \frac{n-q}{2}\rfloor-k,q-1)$. Assume further that $(n-1)|k(q+1)$. Then, for any $1\le k'\le k+{i_{\min}}+1$ and $0\le \ell\le k'-\sharp \Delta_{k'}$, 
\begin{enumerate}
\item there exist an $[[n, k'-\ell, n-k'+1;n-k'-\ell]]_{q}$ EAQECC and an MDS $[[n, n-k'-\ell, k'+1;k'-\ell]]_{q}$ EAQECC;
\item there exist an $[[n+1, k'-\ell,  n+2-k';n+1-k'-\ell]]_{q}$ EAQECC and an MDS $[[n+1, n+1-k'-\ell,  k'+1;k'-\ell]]_{q}$ EAQECC.
\end{enumerate}
\end{thm}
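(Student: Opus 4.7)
The plan is to obtain both families of EAQECCs by a direct application of Lemma \ref{lem:GJG18-construction} to the classical MDS codes already constructed, checking afterwards which of the resulting quantum codes saturate a Singleton-type bound. First I would invoke Theorem \ref{thm:extended-2} to produce, under the stated hypotheses and for each $\ell$ in the range $0 \le \ell \le k' - \sharp\Delta_{k'}$, a classical MDS code $C$ with parameters $[n,k',n-k'+1]_{q^2}$ satisfying $\dim(Hull_H(C)) = \ell$. Since the Hermitian dual of an MDS code is MDS, $C^{\perp_H}$ has parameters $[n,n-k',k'+1]_{q^2}$ and also has Hermitian hull of dimension $\ell$. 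Then Lemma \ref{lem:GJG18-construction}, applied to $C$, immediately produces the two quantum codes $[[n,k'-\ell, n-k'+1; n-k'-\ell]]_q$ and $[[n,n-k'-\ell, k'+1; k'-\ell]]_q$ claimed in part 1.

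For part 2, I would repeat the same procedure starting from Corollary \ref{cor:extended-3}, which provides, under the same hypotheses and for each allowed $\ell$, an MDS code $C$ with parameters $[n+1,k',n+2-k']_{q^2}$ and $\dim(Hull_H(C)) = \ell$. Its Hermitian dual is an MDS $[n+1,n+1-k',k'+1]_{q^2}$ code with Hermitian hull of the same dimension $\ell$, and Lemma \ref{lem:GJG18-construction} produces the two EAQECCs claimed in part 2.

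It remains only to justify the MDS label on the second EAQECC in each part via Proposition \ref{lem4.1}. For the code $[[n,n-k'-\ell, k'+1; k'-\ell]]_q$ of part 1, I would check bound \eqref{eq:bound1}: with $k=n-k'-\ell$, $c=k'-\ell$, $d=k'+1$, the inequality reads $n-k'-\ell \le (k'-\ell) + \max\{0,\, n-2k'\}$. The upper bound $k' \le k+i_{\min}+1 \le \lfloor\frac{n-q}{2}\rfloor + 1$ guarantees $2k' \le n$, and the right-hand side equals $n-k'-\ell$, so the bound is attained with equality and the code is MDS. The same verification applies verbatim to $[[n+1,n+1-k'-\ell, k'+1; k'-\ell]]_q$, since $2k' \le n+1$ still holds and the Singleton-type bound is again saturated.

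The only potential obstruction is this final MDS check, and it is harmless because the choice of $i_{\min}$ built into the earlier theorems already forces $2k' \le n$, so no additional hypothesis is needed. The construction itself contributes nothing new beyond chaining the earlier results with Lemma \ref{lem:GJG18-construction}; the substantive content of the theorem is the \emph{existence} of the classical building blocks with flexible Hermitian hull dimension, which was done in the previous section.
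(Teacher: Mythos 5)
Your proposal is correct and follows essentially the same route as the paper, which likewise obtains the theorem by applying Lemma \ref{lem:GJG18-construction} to the MDS codes of Theorem \ref{thm:extended-2} and Corollary \ref{cor:extended-3} together with the fact that the Hermitian dual of an MDS code is MDS and has the same hull dimension. Your explicit verification that $2k'\le n$ forces equality in bound \eqref{eq:bound1} for the second EAQECC in each part is a detail the paper leaves implicit, but it matches the intended justification of the MDS label.
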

\begin{rem} EAQECCs obtained from Theorem \ref{thm:Q2} have new parameters that can not be covered by those in \cite{FangFuLiZhu}, for instance, for $q=7$, 
there exists, in \cite{FangFuLiZhu}, a $[[17,k'-\ell,17-k'+1;17-k'-\ell]]_7$ code for any $1\le k'\le 2$ and for any $0\le \ell \le k'$. However, from Theorem \ref{thm:Q2},
there exists a $[[17,k'-\ell,17-k'+1;17-k'-\ell]]_5$ for any $2\le k'\le 6$ and for any $0\le \ell \le k'-2$.

For the code length $n=q^2$, the code dimensions in \cite{FangFuLiZhu} 
range between $1$ and $q-1$, but our code dimensions can take the values up to $2(q-1)$. For instance, for $q=5$, 
there exists, in \cite{FangFuLiZhu}, a $[[25,k'-\ell,25-k'+1;n-k'-\ell]]_5$ code for any $2\le k'\le 4$ and for any $0\le \ell \le k'$. However, from Theorem \ref{thm:Q2},
there exists a $[[25,k'-\ell,25-k'+1;n-k'-\ell]]_5$ for any $2\le k'\le 9$ and for any $0\le \ell \le k'-2$.
\end{rem}

By applying Lemma \ref{lem:GJG18-construction} to the MDS $[n,k]_{q^2}$ codes constructed in Theorem \ref{thm:extended-3}, we obtain the following result.

\begin{thm}\label{thm:Q3} Let $q=p^m$ be a prime power and $N|(q^2-1)$. Put $n=(t+1)N+1$, $k= \lfloor \frac{n+q-1}{q+1} \rfloor$ and $i_{\min}=\min (\lfloor \frac{n-q}{2}\rfloor-k,q-1)$, where $1\le t\le \frac{q-1}{n_2}-2$ and $n_2=\frac{N}{\gcd (N,q+1)}$. Assume that $n$ is odd and $(n-1)|k(q+1)$ or $(n-1)|(k+1)(q+1)$. Then, for any $1\le k'\le k+{i_{\min}}+1$ and $0\le \ell\le k'-\sharp \Delta_{k'}$, 
\begin{enumerate}
\item there exist an $[[n, k'-\ell, n-k'+1;n-k'-\ell]]_{q}$ EAQECC and an MDS $[[n, n-k'-\ell, k'+1;k'-\ell]]_{q}$ EAQECC;
\item there exist an $[[n+1, k'-\ell,  n+2-k';n+1-k'-\ell]]_{q}$ EAQECC and an MDS $[[n+1, n+1-k'-\ell,  k'+1;k'-\ell]]_{q}$ EAQECC.
\end{enumerate}
\end{thm}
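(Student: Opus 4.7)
The plan is to obtain both families of EAQECCs by feeding the MDS linear codes produced by Theorem \ref{thm:extended-3} into the classical-to-quantum bridge of Lemma \ref{lem:GJG18-construction}, using the fact that Hermitian duals of MDS codes are MDS to get the MDS status of the second code in each item. The parameters $[[n,k'-\ell,\,n-k'+1;\,n-k'-\ell]]_q$ and $[[n,n-k'-\ell,\,k'+1;\,k'-\ell]]_q$ in item~1 match exactly the output format of Lemma \ref{lem:GJG18-construction} once one has a classical $[n,k',n-k'+1]_{q^2}$ code with Hermitian hull dimension $\ell$.

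First I would fix $1\le k'\le k+i_{\min}+1$ and $0\le \ell\le k'-\sharp\Delta_{k'}$, and invoke Theorem \ref{thm:extended-3} to produce an MDS code $C$ with parameters $[n,k',n-k'+1]_{q^2}$ such that $\dim(Hull_H(C))=\ell$. Since $\dim(Hull_H(C))=\dim(Hull_H(C^{\perp_H}))$ and $C^{\perp_H}$ is automatically MDS with parameters $[n,n-k',k'+1]_{q^2}$, Lemma \ref{lem:GJG18-construction} applied to $C$ yields simultaneously an $[[n,\,k'-\ell,\,n-k'+1;\,n-k'-\ell]]_q$ EAQECC and an $[[n,\,n-k'-\ell,\,k'+1;\,k'-\ell]]_q$ EAQECC; the latter saturates the Singleton-type bound \eqref{eq:bound1} (with $c=k'-\ell$, $k=n-k'-\ell$, $d=k'+1$) precisely because $C^{\perp_H}$ is MDS, so it is an MDS EAQECC.

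For item~2 the strategy is to produce the length $n+1$ analog of Theorem \ref{thm:extended-3}: starting from the evaluation set $U=U_N\cup\bigcup_{j=1}^{t}\alpha_j U_N\cup\{0\}$ used in that theorem and the corresponding MDS $[n,k']_{q^2}$ code $C$, I would lengthen by adjoining a single extra column (in the spirit of Corollary \ref{cor:extended-1} and Corollary \ref{cor:extended-3}) using an appended row of the form $R$ in \eqref{eq:gen-RL} with a single nonzero entry, producing an MDS $[n+1,k',n-k'+2]_{q^2}$ code $C'$ whose Hermitian hull dimension can be tuned to any value in $[0,\,k'-\sharp\Delta_{k'}]$ via the same $a$-scaling trick as in the proof of Theorem \ref{thm:extended-2}. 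Applying Lemma \ref{lem:GJG18-construction} to $C'$ and to $(C')^{\perp_H}$ then yields the two codes of item~2, with the second being MDS by the same Singleton-saturation argument.

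The main obstacle I anticipate lies in the length $n+1$ construction: I need to verify that the lengthening preserves MDS-ness while still realizing every intermediate hull dimension, and this requires re-tracing the congruence computations $(n-1)\mid k(q+1)$ or $(n-1)\mid(k+1)(q+1)$ through the multi-coset evaluation set of Theorem \ref{thm:extended-3}, analogous to but slightly more involved than the Gram-matrix computation in Theorem \ref{thm:embedding-new}. Once the rank of $GG^{\dag}$ is controlled by the choice of the appended scalar $\lambda$, Lemma \ref{lem:hull-H} pins down the hull dimension exactly and the remainder of the proof is a routine bookkeeping application of Lemma \ref{lem:GJG18-construction}.
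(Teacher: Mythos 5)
Your proposal matches the paper's route exactly: the paper derives Theorem \ref{thm:Q3} in a single line by applying Lemma \ref{lem:GJG18-construction} to the MDS codes of Theorem \ref{thm:extended-3}, which is precisely your argument for item 1, including deducing the MDS status of the second EAQECC from the fact that the Hermitian dual of an MDS code is MDS. For item 2 the paper does not say where the $[n+1,k']_{q^2}$ codes come from; your sketch of a multi-coset analog of Corollary \ref{cor:extended-3} (lengthening plus the $a$-scaling trick of Theorem \ref{thm:extended-2}) is the natural supplement, so if anything you have been more explicit than the source about the one step it leaves implicit.
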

\begin{rem} EAQECCs obtained from Theorem \ref{thm:Q3} have new parameters that can not be covered by those in \cite{FangFuLiZhu}, for instance, for $q=5$, 
there exists, in \cite{FangFuLiZhu}, a $[[19,k'-\ell,19-k'+1;19-k'-\ell]]_5$ code for any $1\le k'\le 3$ and for any $0\le \ell \le k'$. However, from Theorem \ref{thm:Q3}, there exists a $[[19,k'-\ell,19-k'+1;19-k'-\ell]]_5$ for any $3\le k'\le 8$ and for any $0\le \ell \le k'-3$.
\end{rem}


\end{document}